\newtheorem{assumption}{Assumption} 
\newtheorem{theorem}{Theorem}
\newtheorem{lemma}[theorem]{Lemma} 
\newtheorem{remark}[theorem]{Remark}
\newtheorem{corollary}[theorem]{Corollary}
\newcommand{\E}{\mathbb{E}}
\newcommand{\Prob}{\mathbb{P}}
\DeclareMathOperator{\diag}{diag}
\DeclareMathOperator{\tr}{tr}
\newcommand\Zijian[1]{{\color{magenta}Zijian: ``#1''}}
\newcommand\Peter[1]{{\color{blue}Peter: ``#1''}}
\newcommand\Cyrill[1]{{\color{teal}Cyrill: ``#1''}}
\newcommand\Revision[1]{#1}
\title{\huge Spectral Deconfounding for High-Dimensional Sparse Additive Models}
\author[1]{Cyrill Scheidegger}
\author[2]{Zijian Guo}
\author[1]{Peter B\"uhlmann}
\affil[1]{Seminar for Statistics, ETH Z\"urich}
\affil[2]{Department of Statistics, Rutgers University}
\begin{document}

\title{Spectral Deconfounding for High-Dimensional Sparse Additive Models}
\maketitle

\begin{abstract}
Many high-dimensional data sets suffer from hidden confounding \Revision{which affects both the predictors and the response of interest. In such situations, standard regression 
methods or algorithms lead to biased estimates.} This paper substantially extends previous work on \textit{spectral deconfounding} for high-dimensional linear models to the nonlinear setting and with this, establishes a proof of concept that spectral deconfounding is valid for general nonlinear models. Concretely, we propose an algorithm to estimate high-dimensional sparse additive models in the presence of hidden dense confounding: arguably, this is a simple yet practically useful nonlinear scope. We prove consistency and convergence rates for our method and evaluate it on synthetic data and a genetic data set.
\end{abstract}

\section{Introduction}
We consider estimation of nonlinear additive functions in the presence of dense unobserved confounding in the high-dimensional and sparse setting. \Revision{A regression problem is called confounded if there are variables that affect both the covariates and the outcome and the confounding is called unobserved or hidden if these variables are not observed.} Unobserved confounding is a severe problem in practice leading to large and asymptotically non-vanishing bias \Revision{and to spurious correlations. This is particularly severe if one aims for a causal interpretation of the functional form of the relationship between covariates and outcome.} While some progress on deconfounding and removing of bias has been achieved in the context of observational data for linear models, the current paper establishes the theory and methodology for nonlinear additive models with dense confounding. In particular, we build on spectral deconfounding introduced in \cite{CevidSpectralDeconfounding} which is simple and often more accurate than inferring hidden factor variables and then adjusting for them, \Revision{as also illustrated in Section \ref{sec_Experiments}.} The development of spectral deconfounding for nonlinear problems is new and requires careful theoretical analysis.
We believe it is important as it opens a path for addressing unobserved confounding in the context of nonlinear, high-dimensional regression in general. \Revision{Spectral deconfounding is based on the singular values of the design matrix, as suggested by its name. It is a simple procedure without any further tuning, and this implies a substantial advantage for practical data analysis.}

We focus in this paper on estimation, based on observational data, of high-dimensional sparse additive models in the presence of hidden confounding. More concretely, we look at the following model
\begin{equation}\label{eq_additiveIntro}
Y= f^0(X)+ H^T\psi+e \quad \text{and}\quad X= \Psi^T H +E,
\end{equation}
where $Y \in \mathbb R$ denotes the response or outcome variable, $X \in \mathbb R^p$ denotes the high-dimensional covariates, $H\in\mathbb R^q$ denotes the hidden confounders, $e\in\mathbb R$ and $E\in \mathbb R^p$ stand for random noises (which are ``suitably uncorrelated'' from $X$ and $H$, respectively, see Assumption \ref{ass_ConditionsModel0} later), and $f^0(X)=\beta_0^0+\sum_{j\in \mathcal T} f_j^0(X_j)$ is an unknown sparse additive function  with active set $\mathcal T\subset \{1,\ldots, p\}$ and $|\mathcal T|\ll p$. We assume that $H$ is low-dimensional ($q \ll p$) and that the confounding is dense \Revision{(i.e. $H$ affects many components of $X$, see Assumption \ref{ass_DimAndPsi} later)}. The goal is to accurately estimate $f^0$ and the individual component functions $f_j^0$.
Note that a naive (nonlinear) regression of $Y$ on $X$ yields an estimate of $\E[Y|X]=f^0(X)+\E[H|X]^T\psi$ (assuming $\E[e|X]=0$). Hence, an estimate of $f^0$ obtained in this naive way is biased. If the goal merely is prediction in the setting of model \eqref{eq_additiveIntro}, such a biased estimate may still appear useful at first sight. However, as argued in \cite{CevidDeconfoundingAndCausalRegularisation} for the linear case, estimating the function $f^0$ instead is desirable from the viewpoint of stability and replicability. For example, the effect of the confounder $H$ might be different for new data from another environment, such that an estimator of the form $\E[Y|X]$ fails to yield a reliable prediction. Moreover, if the confounding acts densely on $X$, $\E[Y|X]$ will not be sparse and algorithms tailored for sparsity will be the wrong choice. If one interprets \eqref{eq_additiveIntro} as a structural equations model (SEM), one can view $f^0$ as the direct causal effect of $X$ on $Y$ where the variables $X_{\mathcal T}$ are the causal parents of $Y$.

\subsection{Motivating Example}
We consider a motivating example. We fix $n=300$, $p=800$, and $q=5$ and simulate from model \eqref{eq_additiveIntro} for a nonlinear additive function $f^0(X)=\sum_{j\in \mathcal T} f_j^0(X_j)$ with $\mathcal T = \{1,2,3,4\}$. We refer to Section \ref{sec_SimResults} for the exact specification of the simulation scenario. We simulate 100 data sets and fit a high-dimensional additive model on each data set without deconfounding (``naive'') and with our deconfounded method (``deconfounded''). Histograms of the mean squared errors $\|\hat f-f^0\|_{L_2}^2$ and the size of the estimated active set are provided in Figure \ref{fig_ExampleIntroduction}.

\begin{figure}
\centering
\includegraphics[width=0.85\textwidth]{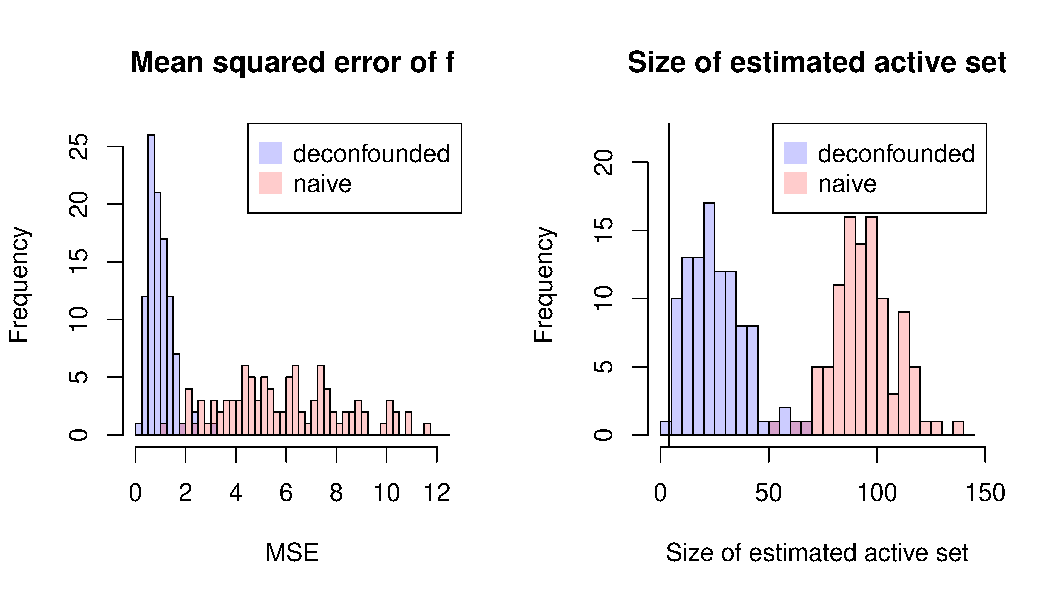}
\caption{MSE of estimated function for true $f^0$ (left) and size of estimated active set (right) for our proposed method (``deconfounded'') and the standard high-dimensional additive model fitting procedure (``naive''). The vertical bar in the right plot indicates the true size of the active set, which is $4$.} 
\label{fig_ExampleIntroduction}
\end{figure}

We see that our method clearly outperforms the standard ``naive'' approach both in terms of estimation error and also in terms of variable screening as the size of the estimated active set is much smaller, though both methods significantly overestimate the size of the active set. A more detailed simulation study with discussion can be found in Section \ref{sec_Experiments}.

\Revision{\subsection{Review of Spectral Deconfounding for Linear Models}\label{sec_ReviewLinear}}
\Revision{Spectral deconfounding has been introduced for high-dimensional sparse linear models in \cite{CevidSpectralDeconfounding}. The key new ingredients are spectral transformations which are linear transformations based on the data. Given such a transformation matrix $Q$, one simply applies $Q$ to the data and applies e.g. the Lasso to the transformed data. Constructing such a $Q$ is extremely simple: one just needs the singular value decomposition of the $n \times p$ design matrix $\mathbf X$. In its default version with the so-called trim transformation, one does \emph{not} need to specify a tuning parameter such as the dimensionality $q$ of $H$ or an upper bound of it.}

Spectral transformations have been shown to adjust (and remove) the effect of the hidden confounder $H$, under the assumption that $H$ acts densely on $X$, that is, many components of $X$ are affected by $H$. In such a scenario, one could alternatively estimate the matrix $\mathbf H\in \mathbb R^{n\times q}$ (with $n$ i.i.d. unobserved samples of $H$ as rows) by principal components of $\mathbf X$, say $\hat{\mathbf H}$, and then adjust with $\hat{\mathbf H}$. Such methodology and theory rely on fundamental results about high-dimensional latent factor models, see for example the review by \cite{BaiLargeDimensionalFactorAnalysis}. However, with such an approach, one needs to estimate an upper bound of the latent factor dimension $q$, \Revision{which can be a hard problem in practice (see also the discussion in Section \ref{sec_CompFactorModels} and the experiments in Section \ref{sec_Experiments})}. For estimating the unconfounded regression parameter, one does not necessarily need to have an accurate estimate of $\mathbf H$: spectral transformations avoid selecting an upper bound of $q$. Spectral transformations and corresponding deconfounding have been demonstrated to work very well in practice and theory in high-dimensional linear models with dense confounding \cite{CevidSpectralDeconfounding, GuoDoublyDebiasedLasso}. Even when the models are misspecified to a certain extent or when assumptions do not completely hold, extensive simulations have shown some robustness against dense (or at least fairly dense) confounding. 

These substantial practical, empirical, and theoretical advantages of spectral transformations for deconfounding remained unclear for nonlinear models. We establish here that the good properties of spectral transformations carry over to nonlinear additive models. The theoretical derivations are highly non-trivial, essentially because spectral transformations are based on $X$ but then applied to nonlinear (basis) functions $b_j(X_j) = b_j(\Psi_j^TH + E_j)$, where the hidden confounder $H$ is now in the argument of a nonlinear function $b_j(\cdot)$ but spectral deconfounding (and also PCA) are intrinsically based on linear operations. \Revision{We postpone a detailed discussion of the technical difficulties that arise from applying spectral deconfounding to nonlinear additive models to Section \ref{sec_OurContribution} and Section \ref{sec_RateDiscussion}.}

\subsection{\Revision{Additional} Related Work}
Our work is most related to the literature on \textit{spectral deconfounding}, introduced in \cite{CevidSpectralDeconfounding}\Revision{, as described in Section \ref{sec_ReviewLinear}. The idea of applying a spectral transformation to the data and using the Lasso on the transformed data turns out to be related to the Lava method for linear regression \cite{ChernozhukovLava} where the coefficient vector can be written as the sum of a sparse and a dense part.} As an extension of spectral deconfounding, a \textit{doubly debiased Lasso} estimator was proposed in \cite{GuoDoublyDebiasedLasso}, which allows to perform inference for individual components of the coefficient vector. The idea of spectral deconfounding has also been applied in \cite{BellotDeconfoundedScoreMethod} to the estimation of sparse linear Gaussian directed acylic graphs in the presence of hidden confounding.

There is an active area of research that considers variants of model \eqref{eq_additiveIntro}, mostly in the case where $f^0$ is linear, but does not use spectral transformations in the sense of \cite{CevidSpectralDeconfounding}. The following works all have in common that they, in some way explicitly, estimate the hidden confounder $H$ from $X$ or need to know or estimate the dimension $q$ of $H$ (although, in many cases, the methods can be rewritten using the PCA transformation defined in Appendix \ref{sec_ProofBoundCC}). For example, \cite{KneipFactorModels, FanFactorAdjustedRegularized, FanAreLatent} all consider regression problems, where the covariates $X$ come from a high-dimensional factor model. We refer to \cite{CevidSpectralDeconfounding} and \cite{GuoDoublyDebiasedLasso} for a more detailed discussion of related literature in the case of high-dimensional linear regression. More recently, also simultaneous inference for high-dimensional linear regression \cite{SunADecorrelatingAndDebiasingApproach} as well as estimation and inference for high-dimensional multivariate response regression \cite{BingInferenceInHDMultivariateResponse, BingAdaptiveEstimation} have been considered in the presence of hidden confounding.

There have also been some advances towards nonlinear models using this framework. In \cite{OuyangHDIGLM}, a debiased estimator is introduced for the high-dimensional generalized linear model with hidden confounding and consistency and asymptotic normality for the estimator is established.

Most recently and perhaps most related to our nonlinear setting, \cite{FanFactorAugmented}  consider a factor model $X=\Psi^T H+E$ for the covariates and a response $Y=m^\ast(H, E_{\mathcal J})+\epsilon_i$, where $\mathcal J$ is the active set. The goal is to estimate the function $m^\ast$, which is done by fitting a neural network. As a special case, this framework also allows to estimate additive models similar to \eqref{eq_additiveIntro}. However, the goal of \cite{FanFactorAugmented} is distinctively different from ours. The main goal of our paper is to consistently estimate the function $f^0$, which can be interpreted causally. For this, we implicitly filter out the factors using a spectral transformation. The goal of \cite{FanFactorAugmented} on the other hand, is to estimate the function $m^\ast$ which depends on the factors with the reason that including the factors helps to predict $Y$. \Revision{A more technical comparison of our work to high-dimensional factor models and in particular to \cite{FanFactorAugmented} can be found in Section \ref{sec_CompFactorModels}.}

Finally, for the case of unconfounded settings, high-dimensional additive models have been extensively studied as a more flexible alternative to the high-dimensional linear model while still avoiding the curse of dimensionality \citep{MeierHDAM, RavikumarSPAM, RaskuttiMinimaxOptimalRates, LinComponentSelection, YuanNonnegativeGarotte, KoltchinskiiSparseRecovery, KoltchinskiiSparsityInMultiple, TanDoublyPenalizedEstimation}.

\subsection{Our Contribution and Outline}\label{sec_OurContribution}
We propose a novel estimator for high-dimensional additive models in the presence of hidden confounding. For this, we expand the unknown functions $f_j^0$ into basis functions (e.g. B-splines) as done in \cite{MeierHDAM} and apply a spectral transformation as introduced in \cite{CevidSpectralDeconfounding} to the response and to the basis functions. On this transformed data, we apply an ordinary group lasso optimization to obtain the estimates $\hat f_j$. For this procedure, we prove consistency and provide both in-sample and $L_2$ convergence rates. Under suitable conditions, our method achieves a convergence rate of
\Revision{
$$\|\hat f-f^0\|_{L_2}=O_P\left(s^2\frac{(\log p)^{2/5}}{n^{2/5}}\right)$$
for the choice of $K\asymp (n/\log p)^{1/5}$ basis functions.
}
\Revision{The dependence on $n^{-2/5}$ is the standard dependence for fitting additive models, where the component functions are twice differentiable. However compared to the minimax optimal rate for high-dimensional additive models without confounding, the dependence on the sparsity $s$ and on $\log p$ is worse \cite{RaskuttiMinimaxOptimalRates, TanDoublyPenalizedEstimation}, see also Section \ref{sec_RateDiscussion}. We attribute this in part to the factor structure of $X$ and in part as being an artifact of the proof or our concrete estimation algorithm. We provide a more detailed discussion in Section \ref{sec_RateDiscussion}.}

\Revision{The extension of spectral deconfounding to nonlinear models is non-trivial. While some parts of the proof are similar to spectral deconfounding in the linear model \cite{CevidSpectralDeconfounding} and standard arguments for high-dimensional regression problems, there are new challenges that arise when considering nonlinear additive models. In addition to having to deal with approximation errors and centering issues when considering the approximation with basis functions, the main challenge is establishing that a group compatibility constant is bounded away from zero (also known as \textit{restricted eigenvalue condition}). This is achieved by reducing the sample compatibility constant to a population version. The population version can then be controlled using an extension of recent work on the eigenvalues of nonlinear correlation matrices to the confounded setting \cite{GuoExtremeEigenvalues}.}

\Revision{We perform a simulation study in Section \ref{sec_SimResults} comparing our method to standard additive model fitting ignoring the confounding and to an ad hoc method that tries to estimate the confounder and puts it as a linear term into the model. In conclusion, our method is shown to be the most robust against hidden confounding. In particular, it is more robust than the ad hoc method, when the components of the confounder affect $X$ not equally strongly. We complement the simulations by an application of our method to a genetic data set in Section \ref{sec_RealDataResults}.}

\Revision{The optimal rate for the high-dimensional additive model under hidden confounding is unknown. Even if our established rate might be sub-optimal, our rigorous technical analysis nevertheless establishes that spectral deconfounding can be applied to nonlinear models and this also may serve as motivation to apply spectral deconfounding to other machine learning methods.}

The rest of the paper is structured as follows. In Section \ref{sec_Method}, we introduce our setup and formulate the optimization problem. In Section \ref{sec_Theory}, we prove consistency and convergence rates for our method under suitable assumptions. We first present a general convergence result that holds under minimal assumptions (Theorem \ref{thm_BoundInSample}). This convergence rate depends on unknown quantities, namely a compatibility constant, the effect of the spectral transformation, and the best approximation of $f_j^0$ using the specified basis functions. These quantities are then subsequently controlled under some stronger assumptions. The experiments on simulated and real data can be found in Section \ref{sec_Experiments}. \Revision{All the proofs and some additional simulations }are presented in the appendix.

\subsection{Notation and Conventions}\label{sec_Notation}
We write $\lambda_j(A)$ for the $j$th largest singular value of the matrix $A$. If $A$ is symmetric and positive semi-definite, we also write $\lambda_{\max}(A)$ and $\lambda_{\min}(A)$ for the maximal and the minimal eigenvalue of $A$. We write $\|A\|_F$, $\|A\|_{op}$, and $\|A\|_\infty$ for the Frobenius norm, operator/spectral norm, and the element-wise maximum norm of the matrix $A$. For a sequence of random variables $X_n$ and a sequence of real numbers $a_n$, we write $X_n=o_P(a_n)$ if $X_n/a_n\to 0$ in probability and $X_n=O_P(a_n)$ if $\lim_{M\to\infty}\limsup_{n\to\infty}\Prob(|X_n|/a_n>M)=0$. For two sequences $a_n$ and $b_n$ of positive real numbers, we write $a_n\lesssim b_n$ if there exists a constant $C>0$ such that $a_n\leq C b_n$ for all $n\in \mathbb N$. We write $a_n\asymp b_n$ if $a_n\lesssim b_n$ and $b_n\lesssim a_n$ and $a_n\ll b_n$ if $\lim_{n\to\infty} a_n/b_n=0$. For a random variable $X$, $\|X\|_{\psi_2}=\inf\{t>0|\E[\exp(X^2/t^2)]\leq 2\}$ is the sub-Gaussian norm of $X$. We call $X$ a sub-Gaussian random variable if $\|X\|_{\psi_2}<\infty$. For a random vector $Z\in \mathbb R^d$, let $\|Z\|_{\psi_2}=\sup_{\|v\|_2=1}\|v^TZ\|_{\psi_2}$ and we call $Z$ a sub-Gaussian random vector if $\|Z\|_{\psi_2}<\infty$. We say that an event $\mathcal A$ occurs \textit{with high probability} if $\Prob(\mathcal A) = 1-o(1)$ for $n\to\infty$. For a real number $t\in \mathbb R$, we write $\lfloor t\rfloor$ for the floor function, i.e. the largest integer smaller or equal to $t$. We write $I_l$ for the $l\times l$ identity matrix and $\mathbf 1_l = (1,\ldots, 1)^T\in \mathbb R^l$ for the vector of $l$ ones. \Revision{For $p\in \mathbb N$, we also write $[p]$ for the set $\{1,\ldots, p\}$.}

\section{Model and Method}\label{sec_Method}
We consider the model
\begin{equation}\label{eq_additive}
Y= f^0(X)+ H^T\psi+e \quad \text{and}\quad X= \Psi^T H +E
\end{equation}
with random variables $H\in \mathbb R^q$, $X\in\mathbb R^p$ and $Y\in \mathbb R$, random errors $e\in \mathbb R$ and $E\in \mathbb R^p$ and fixed $\psi\in \mathbb R^q$ and $\Psi\in \mathbb R^{q\times p}$. We only observe $X$ and $Y$ and the confounder $H$ is unobserved. The goal is to estimate the unknown function $f^0$. In this work, we assume an additive and sparse structure of $f^0$, i.e. 
$$f^0(X)=\beta_0^0+\sum_{j=1}^p f_j^0(X_j)=\beta_0^0+\sum_{j\in \mathcal T}f_j^0(X_j),$$
with $\mathcal T\subset \{1,\ldots, p\}$ being the active set and $|\mathcal T| = s$. For identifiability, we assume that $\E[f_j^0(X_{j})]=0$ for all $j=1,\ldots, p$. To fix some notation, $x_1, \ldots, x_n\in\mathbb R^p$, $y_1,\ldots,y_n\in \mathbb R$ and $h_1,\ldots, h_n\in \mathbb R^q$ are i.i.d. samples from \eqref{eq_additive}. Let $\mathbf X\in \mathbb R^{n\times p}$ have rows $x_1, \ldots, x_n$, $\mathbf Y\in \mathbb R^n$ have entries $y_1, \ldots, y_n$ and $\mathbf H\in \mathbb R^{n\times q}$ have rows $h_1,\ldots, h_n$.

For each $j=1, \ldots, p$, we approximate $f_j^0$ using a set of basis functions, for example, a B-spline basis. The number of basis functions $K$ serves as a tuning parameter for smoothness. Define $b_j(\cdot)=b_j^{(n)}(\cdot)=(b_j^1(\cdot), \ldots, b_j^K(\cdot))^T$ to be the vector of basis functions for the $j$th component of $X$. The general idea of high-dimensional sparse additive models is to regress $Y$ on $(b_1(X_1)^T, \ldots, b_p(X_p)^T)^T$ using a group lasso scheme. We apply the trim transformation as in \cite{CevidSpectralDeconfounding} to deal with the hidden confounding. Let $r=\min(n, p)$ and $\mathbf X\mathbf X^T=UDU^T$ be the eigenvalue decomposition of $\mathbf X \mathbf X^T$ with matrices $U\in \mathbb R^{n\times n}$ having orthonormal columns and $D=\diag(d_1^2, \ldots, d_r^2, 0, \ldots, 0)$ with $d_1\geq \ldots \geq d_r> 0$ being the nonzero singular values of $\mathbf X$. For $l=1, \ldots, r$, define $\tilde d_l=\min(d_{\lfloor\rho r\rfloor}/d_l, 1)$ for some $\rho \in (0,1)$ and define 
 \begin{equation}\label{eq_DefQtrim}
 Q = Q^\text{trim}= U\diag(\tilde d_1, \ldots, \tilde d_r, 1, \ldots, 1) U^T.
 \end{equation}
 Usually, one takes $\rho = 0.5$, that is $Q$ shrinks the top half of the singular values of $\mathbf X$ to the median singular value of $\mathbf X$.

For $j=1,\ldots, p$, define the matrix
$$B^{(j)}= B^{(j)}(\mathbf X_{\cdot j})=\begin{pmatrix}
b_j^1(x_{1,j}) & \cdots & b_j^K(x_{1,j})\\
\vdots & \ddots & \vdots\\
b_j^1(x_{n,j}) & \cdots & b_j^K(x_{n,j})
\end{pmatrix} \in \mathbb R^{n\times K}.$$

Let $\mathbf 1_n=(1,\ldots, 1)^T\in \mathbb R^n$. We then use the group lasso estimator
\begin{equation}\label{eq_OptProblem}
\hat \beta = \arg\min_{\beta=(\beta_0,\beta_1^T,\ldots,\beta_p^T)^T\in \mathbb R^{Kp+1}} \left\{\frac{1}{n}\left\|Q(\mathbf Y-\beta_0\mathbf{1}_n-\sum_{j=1}^p B^{(j)}\beta_j)\right\|_2^2+\frac{\lambda}{\sqrt n}\sum_{j=1}^p\left \| B^{(j)}\beta_j\right\|_2\right\},
\end{equation}
and construct the estimators 
$\hat f_j(\cdot)=b_j(\cdot)^T\hat\beta_j$ and $\hat f(X)=\hat \beta_0+\sum_{j=1}^p \hat f_j(X_j)$. In the optimization problem \eqref{eq_OptProblem}, $\lambda$ serves as a tuning parameter for sparsity and $K$ as a tuning parameter for smoothness. Note that the matrices $B^{(j)}$ depend on $K$. Our method is summarized in Algorithm \ref{alg_HDAM}. Observe that we use the transformation $\tilde B^{(j)} = B^{(j)} R_j^{-1}$ and $\tilde\beta_j=R_j\beta_j$ with $R_j^T R_j = \frac{1}{n}(B^{(j)})^T B^{(j)}$  
to transform \eqref{eq_OptProblem} to an ordinary group lasso problem \citep{YuanGroupLasso} with the penalty $\lambda \sum_{j=1}^p \|\tilde{\beta}_j\|_2$.
\begin{algorithm}
\caption{Deconfounding for high-dimensional additive models}\label{alg_HDAM}
\begin{flushleft}
 \textbf{Input:} Data $\mathbf X\in \mathbb R^{n\times p}$, $\mathbf Y\in \mathbb R^n$, spectral transformation $Q\in \mathbb R^{n\times n}$, tuning parameters $\lambda$ and $K$, vectors $b_j(\cdot)$ of $K$ basis functions, $j=1,\ldots, p$.\\
 \textbf{Output:} Intercept $\hat\beta_0$ and functions $\hat f_j(\cdot)$, $j=1,\ldots, p$.
\end{flushleft}
\begin{algorithmic}
\State $B^{(j)}\gets (b_j(x_{1,j}),\ldots, b_j(x_{n,j}))^T\in \mathbb R^{n\times K}$
\State Find $R_j\in \mathbb R^{K\times K}$ such that $R_j^T R_j=\frac{1}{n} (B^{(j)})^T B^{(j)}$ \Comment{Cholesky decomposition}
\State $\tilde B^{(j)} \gets B^{(j)} R_j^{-1}$
\State $(\hat \beta_0, \hat{\tilde\beta}_1,\ldots, \hat{\tilde \beta}_p)=\arg\min\{\|Q(\mathbf Y-\beta_0 \mathbf 1_n -\sum_{j=1}^p \tilde B^{(j)}\tilde\beta_j)\|_2^2/n+\lambda\sum_{j=1}^p\|\tilde \beta_j\|_2\}$ \Comment{{Group lasso}}
\State $\hat\beta_j \gets R_j^{-1}\hat{\tilde\beta}_j$, $j=1,\ldots, p$
\State $\hat f_j(\cdot)\gets b_j(\cdot)^T\hat\beta_j$
\end{algorithmic}
\end{algorithm}

The estimator \eqref{eq_OptProblem} is similar to \cite{MeierHDAM} with the difference that we apply the spectral transformation to the first part of the objective and that we do not have an additional smoothness penalty term but regularize smoothness by the number of basis functions $K$. \Revision{Our method could easily be adapted to allow for some components of $X_j$ that only enter linearly into the model. More generally, from a theoretical perspective, it would also be possible to consider a different number of basis functions $K_j$ for each component $X_j$. However, in practice one needs to choose the number of basis functions by cross-validation, which is computationally not feasible if we allow for a different number $K_j$ for each component $X_j$.}

\subsection{Some Intuition}\label{sec_Intuition}
The intuition for the spectral deconfounding method \eqref{eq_OptProblem} is analogous to the linear case in \cite{CevidSpectralDeconfounding} and \cite{GuoDoublyDebiasedLasso}. Let $b\in \mathbb R^p$ be defined as 
\begin{equation}\label{eq_DefB}
b=\E[XX^T]^{-1}\Psi^T\psi,
\end{equation}
i.e. $X^T b$ is the best linear approximation of $H^T\psi$ by $X$ in the sense that $b=\arg\min_{b'}\E[(H^T\psi-X^Tb')^2]$. We can rewrite our model (\ref{eq_additive}) as
\begin{equation}\label{eq_ModelWithB}
Y=f^0(X) + X^T b+\epsilon, \quad \epsilon = e+ H^T \psi - X^T b.
\end{equation}
The heuristics is that -- in contrast to $\frac{1}{\sqrt n}\|\mathbf X b\|_2$ which is large due to the factor structure and large singular values of $\mathbf X$ -- the quantity $\frac{1}{\sqrt n}\|Q\mathbf X b\|_2$ converges to $0$ (see Lemma \ref{lem_QXB} below). If on the other hand, $Q$ does not shrink the vector $\mathbf f^0=(f^0(x_1), \ldots, f^0(x_n))^T\in \mathbb R^n$ too much, it seems reasonable that an $\hat f$ obtained by minimizing $\|Q\mathbf Y-Q\mathbf f\|_2$ should recover $f^0$ much better than an $\hat f$ obtained by minimizing $\|\mathbf Y-\mathbf f\|_2$.

\section{Theory}\label{sec_Theory}
In this section, we develop and describe the key mathematical results of the proposed procedure in Algorithm \ref{alg_HDAM}, and we give conditions under which our method is consistent and give rates for the convergence of $\hat f$ to $f$. We will show in Corollary \ref{cor_FinalRate} that under suitable assumptions and with the choices of $\lambda \asymp \left(\log p/n\right)^{2/5}$ and \Revision{$K\asymp (n/\log p)^{1/5}$, we obtain a rate of
\begin{equation}\label{eq_RateBeginning}
|\beta_0^0-\hat\beta_0|+\sum_{j=1}^p\|f_j^0-\hat f_j\|_{L_2}=O_P\left(s^2\frac{(\log p)^{2/5}}{n^{2/5}}\right).
\end{equation}

If instead, we allow $K$ to also depend on $s$, we obtain a convergence rate of $O_P\left(s^{11/10}\frac{(\log p)^{2/5}}{n^{2/5}}\right)$.}
However, our main results Theorem \ref{thm_BoundInSample} and Corollary \ref{cor_RateOutSample} hold under much more general conditions. The general convergence rate \eqref{eq_RateInSample} in these results depends on several general quantities like a compatibility constant and on how well the functions $f_j^0$ can be approximated by the basis functions $b_j(\cdot)$. These quantities are then subsequently controlled under stronger assumptions to arrive at the convergence rate given above. 

We start with the following assumptions on the model \eqref{eq_additive}.
\begin{assumption}\label{ass_ConditionsModel0}
\begin{enumerate}
	\item The random vectors $H$ and $E$ are centered, i.e.  $\E[H]=0\in \mathbb R^q$, $\E[E]=0\in \mathbb R^p$, and the entries of $E$ and $H$ have finite second moment. Moreover, $\E[E H^T]=0\in \mathbb R^{p\times q}$ and $\E[HH^T]= I_{q}$.
	\item 
    \Revision{Conditionally on $X$, the random variable $e$ has a sub-Gaussian distribution with $\E[e|X]=0$ a.s. and there exist constants $\sigma_e^2,C_0<\infty$ such that $\E[e^2|X]\leq\sigma_e^2$ a.s. and the sub-Gaussian norm of $e$ conditionally on $X$ is uniformly bounded by $C_0$, i.e. $\|e\|_{\psi_2|X} \coloneqq \inf\{t>0|\E[\exp(e^2/t^2)|X]\leq 2\}\leq C_0$ a.s.}
	\item $q\ll \min(n, p)$.
\end{enumerate}
\end{assumption}
The assumption $\E[E H^T]=0$ means that the random vectors $E$ and $H$ are uncorrelated. The assumption that $\E[HH^T]=I_{q}$ can be made without loss of generality. If $\E[HH^T]=\Sigma_H$, define $\tilde H=\Sigma_H^{-1/2} H$, $\tilde\Psi=\Sigma_H^{1/2}\Psi$ and $\tilde \psi=\Sigma_H^{1/2}\psi$. Then, $\E[\tilde H\tilde H^T]=I_{q}$ and we are again in the framework of model \eqref{eq_additive}. \Revision{Assertion (2) of Assumption \ref{ass_ConditionsModel0} allows for heteroscedastic errors and is more general than assuming $e$ being independent of $X$.}

For Theorem \ref{thm_BoundInSample} below, we need the following additional assumption.
\begin{assumption}\label{ass_ConditionsModel1}
	Let $\Sigma_E=\E[EE^T]$. There exist $C, c>0$ such that $c\leq\lambda_{\min}(\Sigma_E^{-1})\leq \lambda_{\max}(\Sigma_E^{-1})\leq C$.
\end{assumption}
Note that we only need a bound for the minimal eigenvalue of the precision matrix of the unconfounded part $E$ and not of $X$. This is crucial since because of the factor structure, the precision matrix of $X$ would not be nicely behaved.

For $j=1,\ldots, p$, let $f_j^\ast$ be an approximation of $f_j^0$ using the $K$ basis functions in $b_j(\cdot)$, that is
$$f_j^\ast(\cdot)= b_j(\cdot)^T\beta_j^\ast,$$
and let $f^\ast(X) = \beta_0^0+\sum_{j=1}^p f_j^\ast(X_j)$. Define the vectors $\mathbf f_j^0=(f_j^0(x_{1,j}), \ldots, f_j^0(x_{n,j}))^T\in \mathbb R^n$ and $\mathbf f^0=(f^0(x_{1}), \ldots, f^0(x_{n}))^T\in \mathbb R^n$ and similarly also $\hat{\mathbf f}_j$, $\hat{\mathbf f}$, $\mathbf f_j^\ast$ and $\mathbf f^\ast$. 

For technical reasons, we also need the following assumption on the basis functions, which is for example fulfilled for the B-spline basis (see Chapter 8 in \cite{FahrmeirRegression}).
\begin{assumption}[Partition of unity]\label{ass_BasisFunctions}
For all $j=1,\ldots, p$ and for all $x\in \textup{support}(X_j)$, we have that $b_j(x)^T\mathbf 1_K=1$.
\end{assumption}

We furthermore need to define the sample compatibility constant. For $w_0\in \mathbb R$ and $w_j\in \mathbb R^K$, $j=1,\ldots, p$, let us write $f_j^w(\cdot)=b_j(\cdot)^Tw_j$ and $f^w(x)=w_0+\sum_{j=1}^p f_j^w(x_j)$.
Moreover, for $M>0$ and $\mathcal T\subset \{1, \ldots, p\}$ define,
\begin{equation}\label{eq_DefFMTn}
\mathcal F_{M, \mathcal T}^n=\left\{f^w\colon w_0\in \mathbb R, w_j\in \mathbb R^K, \, \sum_{i=1}^n f_j^w(x_{i,j})=0,\text{and} \sum_{j\in \mathcal T^c}\frac{1}{\sqrt n}\|\mathbf f_j^w\|_2\leq M\left(|w_0|+\sum_{j\in \mathcal T}\frac{1}{\sqrt n}\|\mathbf f_j^w\|_2\right)\right\}.
\end{equation}
Note that the functions $f_j^w$ defining the functions $f^w$ in $\mathcal F_{M, \mathcal T}^n$ are empirically centered. We define the sample compatibility constant
\begin{equation}\label{eq_DefCC}
\tau_n=\inf_{\mathcal T \subset [p],\,|\mathcal T|\leq s}\inf_{f\in\mathcal F_{M,\mathcal T}^n}\frac{\frac{1}{n}\|Q\mathbf f^w\|_2^2}{w_0^2+\sum_{j=1}^p\frac{1}{n}\|\mathbf f_j^w\|_2^2}
\end{equation}
with $Q$ defined in \eqref{eq_DefQtrim}.

\begin{theorem}\label{thm_BoundInSample}
Suppose that Assumptions \ref{ass_ConditionsModel0}, \ref{ass_ConditionsModel1} and \ref{ass_BasisFunctions} hold and choose $\lambda$ as
\begin{equation}\label{eq_Lambda}
\lambda= AC_0\sqrt{\frac{K\log p}{n}}+\lambda_2 \text{ with } \lambda_2\gg \frac{\|\psi\|_2}{\sqrt{1+\lambda_q^2(\Psi)}}
\end{equation} 
for some constant $A>0$ large enough.
Then, with probability $1-o(1)$, we have that
\begin{equation}\label{eq_RateInSample}
|\beta_0^0-\hat\beta_0|+\sum_{j=1}^p\frac{1}{\sqrt n}\|\mathbf f_j^\ast-\hat{\mathbf f}_j\|_2 \lesssim r_n 
\end{equation}
with
\begin{align}
r_n=\frac{s\lambda}{\tau_n}+\frac{1}{\lambda}\frac{\|Q\mathbf Xb\|_2^2}{n} &+\sum_{j\in \mathcal T}\frac{1}{\sqrt n}\|\mathbf f_j^\ast-\mathbf f_j^0\|_2+\sum_{j\in \mathcal T}|\frac{1}{n}\sum_{i=1}^n f_j^0(x_{i,j})|\nonumber\\
&+\frac{1}{\lambda}\left(\sum_{j\in \mathcal T}\frac{1}{\sqrt n}\|\mathbf f_j^\ast-\mathbf f_j^0\|_2+\sum_{j\in \mathcal T}|\frac{1}{n}\sum_{i=1}^n f_j^0(x_{i,j})|\right)^2\label{eq_DefRn}
\end{align}
\end{theorem}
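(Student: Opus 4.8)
The plan is to run the group-lasso oracle-inequality machinery of \citep{MeierHDAM} for the $Q$-transformed design, with the confounding split off via the representation \eqref{eq_ModelWithB}. First I would pass to the orthonormalized parametrization $\tilde B^{(j)}=B^{(j)}R_j^{-1}$ of Algorithm \ref{alg_HDAM}, for which $\tfrac1n(\tilde B^{(j)})^T\tilde B^{(j)}=I_K$, so that the penalty becomes the clean group norm $\lambda\sum_j\|\tilde\beta_j\|_2$ and, crucially, $\tfrac1{\sqrt n}\|\mathbf f_j\|_2=\|\tilde\beta_j\|_2$ for every component function $f_j=b_j^T\beta_j$. Writing $\beta^\ast$ for the oracle coefficients producing $f^\ast$ and $\mathbf u=\hat{\mathbf f}-\mathbf f^\ast$, the fact that $\hat\beta$ minimizes \eqref{eq_OptProblem} together with expanding the transformed quadratic loss yields the basic inequality
\begin{equation*}
\frac1n\|Q\mathbf u\|_2^2+\frac{\lambda}{\sqrt n}\sum_{j=1}^p\|\hat{\mathbf f}_j\|_2\le \frac2n\langle Q(\mathbf Y-\mathbf f^\ast),Q\mathbf u\rangle+\frac{\lambda}{\sqrt n}\sum_{j=1}^p\|\mathbf f_j^\ast\|_2 .
\end{equation*}

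Next I would decompose the residual via \eqref{eq_ModelWithB} and \eqref{eq_DefB} as $\mathbf Y-\mathbf f^\ast=(\mathbf f^0-\mathbf f^\ast)+\mathbf Xb+\mathbf e+(\mathbf H\psi-\mathbf Xb)$ and bound the four induced cross terms. For the exogenous part I would condition on $\mathbf X$ (hence on every $B^{(j)}$ and on $Q$, which depends only on $\mathbf X$) and invoke Assertion (2) of Assumption \ref{ass_ConditionsModel0}: conditionally $\mathbf e$ is centered and sub-Gaussian, so each $\tfrac2n\|(\tilde B^{(j)})^TQ^2\mathbf e\|_2$ is the norm of a $K$-dimensional sub-Gaussian vector, and a tail bound with a union bound over the $p$ groups gives $\max_j\tfrac2n\|(\tilde B^{(j)})^TQ^2\mathbf e\|_2\le AC_0\sqrt{K\log p/n}$ with high probability, the first summand of $\lambda$ in \eqref{eq_Lambda}. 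The residual $\mathbf H\psi-\mathbf Xb$ is the $L_2$-projection error, uncorrelated with $X$ and of scale $\|\psi\|_2/\sqrt{1+\lambda_q^2(\Psi)}$ under dense confounding; bounding $\max_j\tfrac2n\|(\tilde B^{(j)})^TQ^2(\mathbf H\psi-\mathbf Xb)\|_2$ by exactly this scale is why \eqref{eq_Lambda} demands $\lambda_2$ to dominate it. The genuinely confounded term is controlled by Cauchy--Schwarz and Lemma \ref{lem_QXB}, $\tfrac2n\langle Q\mathbf Xb,Q\mathbf u\rangle\le 2\tfrac{\|Q\mathbf Xb\|_2}{\sqrt n}\tfrac{\|Q\mathbf u\|_2}{\sqrt n}$, while the approximation-plus-centering term $\mathbf f^0-\mathbf f^\ast$ enters both linearly through the penalty/triangle comparison, producing $\sum_{j\in\mathcal T}\tfrac1{\sqrt n}\|\mathbf f_j^\ast-\mathbf f_j^0\|_2$ and the empirical-centering defect $\sum_{j\in\mathcal T}|\tfrac1n\sum_i f_j^0(x_{i,j})|$ (the $f_j^0$ being only population-centered), and through its own cross term $\tfrac2n\langle Q(\mathbf f^0-\mathbf f^\ast),Q\mathbf u\rangle$.

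With the noise and confounding-residual cross terms absorbed into the penalty by the choice of $\lambda$, the basic inequality turns into the standard cone inequality, forcing the difference $\hat\beta-\beta^\ast$ (equivalently $f^{\hat\beta}-f^{\beta^\ast}$) into the set $\mathcal F^n_{M,\mathcal T}$ of \eqref{eq_DefFMTn} for a suitable $M$. I would then apply Young's inequality to the remaining confounding and approximation cross terms, absorbing the pieces proportional to $\tfrac1{\sqrt n}\|Q\mathbf u\|_2$ into $\tfrac12\tfrac1n\|Q\mathbf u\|_2^2$, and use the sample compatibility constant $\tau_n$ of \eqref{eq_DefCC} to bound $\sum_{j\in\mathcal T}\tfrac1{\sqrt n}\|\hat{\mathbf f}_j-\mathbf f_j^\ast\|_2$ by $\tfrac{\sqrt s}{\sqrt{\tau_n}}\tfrac1{\sqrt n}\|Q\mathbf u\|_2$. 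This produces an inequality of the form $\lambda N\lesssim \tfrac{s\lambda^2}{\tau_n}+\tfrac{\|Q\mathbf Xb\|_2^2}{n}+(\text{approx}+\text{centering})^2+\lambda(\text{approx}+\text{centering})$ for the target estimation norm $N=|\hat\beta_0-\beta_0^0|+\sum_j\tfrac1{\sqrt n}\|\hat{\mathbf f}_j-\mathbf f_j^\ast\|_2$; dividing by $\lambda$ places the $\tfrac1\lambda$ prefactor on the squared terms and yields precisely $r_n$ as in \eqref{eq_DefRn}, with leading term $s\lambda/\tau_n$.

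The main obstacle I anticipate is the high-probability control of the noise cross term, and within it the contribution of $\mathbf H\psi-\mathbf Xb$. Unlike the linear case, the design consists of the nonlinear features $B^{(j)}=b_j(\Psi_j^TH+E_j)$, so $H$ enters through a nonlinear basis function and $\mathbf H\psi-\mathbf Xb$ \emph{cannot} be treated as noise independent of the design; one must instead use that $b=\E[XX^T]^{-1}\Psi^T\psi$ makes this term orthogonal to the linear span of $X$ and combine this with the dense-confounding scaling $\|\psi\|_2/\sqrt{1+\lambda_q^2(\Psi)}$ to bound its inner products with the transformed columns $Q^2\tilde B^{(j)}$. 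A secondary but persistent nuisance is the centering bookkeeping: the estimator and oracle are empirically centered whereas $f_j^0$ is only population-centered, and since $Q\mathbf 1_n\ne\mathbf 1_n$ the intercept cannot be eliminated for free; tracking this discrepancy is exactly what generates the term $\sum_{j\in\mathcal T}|\tfrac1n\sum_i f_j^0(x_{i,j})|$ and its square in $r_n$. The compatibility constant $\tau_n$ is kept abstract in this theorem, so its control is deferred and is not part of the present difficulty.
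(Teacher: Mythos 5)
Your outline tracks the paper's own proof very closely: the same reparametrization $\tilde B^{(j)}=B^{(j)}R_j^{-1}$, the same residual decomposition via \eqref{eq_ModelWithB} into $\mathbf e$, $\mathbf Xb$, $\Delta=\mathbf H\psi-\mathbf Xb$ and the approximation/centering part, the same conditional sub-Gaussian tail bound plus union bound over the $p$ groups producing the $AC_0\sqrt{K\log p/n}$ component of $\lambda$, the same role for $\lambda_2$, and the same compatibility-plus-centering bookkeeping producing $r_n$ (note, though, that the empirical centering of the $\hat f_j$, which you assert, itself requires the partition-of-unity argument that opens the paper's proof). However, there is one genuine gap in the middle of your argument. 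You claim that ``with the noise and confounding-residual cross terms absorbed into the penalty by the choice of $\lambda$, the basic inequality turns into the standard cone inequality.'' This does not follow: the cross term $U_n$ involving $Q\mathbf Xb$ and $Q(\mathbf f^0-\mathbf f^c)$ is \emph{not} absorbed by $\lambda$ --- it is additive slack, not a multiple of the penalty norms --- and when it dominates, membership in the cone \eqref{eq_DefFMTn} simply fails and $\tau_n$ cannot be invoked. Applying Young's inequality first does not repair this: after absorbing $\tfrac{1}{2n}\|Q\mathbf u\|_2^2$ into the left side you are still left with the additive constant $\tfrac{2}{n}\|Q(\mathbf Xb+\cdots)\|_2^2$ on the right, which again breaks the cone. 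The paper resolves this with an explicit two-case analysis. In Case 1 (the penalty-type term dominates $U_n$) the cone condition holds and compatibility yields the $s\lambda/\tau_n$ term. In Case 2 ($U_n$ dominates) compatibility is never used: the case-defining inequality directly bounds the intercept and active groups by $U_n/(\lambda+\lambda_0)$, and a self-bounding Cauchy--Schwarz step, namely $\tfrac1n\|Q\mathbf u\|_2^2\le \tfrac4n\|Q(\mathbf Xb+\cdots)\|_2\,\|Q\mathbf u\|_2$ hence $\tfrac{1}{\sqrt n}\|Q\mathbf u\|_2\le\tfrac{4}{\sqrt n}\|Q(\mathbf Xb+\cdots)\|_2$, gives $U_n\lesssim\tfrac1n\|Q(\mathbf Xb+\cdots)\|_2^2$ and thus the $\tfrac1\lambda$-weighted squared terms in $r_n$. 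Your sketch writes down the correct final inequality but skips the device that makes it valid in the regime where the confounding and approximation bias dominate.

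On the point you single out as the main obstacle --- the control of $\Delta=\mathbf H\psi-\mathbf Xb$ --- the paper's treatment is in fact more elementary than what you propose. No orthogonality of $\Delta$ to the (nonlinear) design is used at all: one bounds crudely $\|(\tilde B^{(j)})^TQ^2\Delta\|_2\le\|\tilde B^{(j)}\|_{op}\|\Delta\|_2\le\sqrt n\,\|\Delta\|_2$ uniformly in $j$, and then applies Markov's inequality together with the variance computation $\tfrac1n\E[\|\Delta\|_2^2]=\psi^T\left(I_q-\Psi\E[XX^T]^{-1}\Psi^T\right)\psi\lesssim \|\psi\|_2^2/(1+\lambda_q^2(\Psi))$ (Lemma \ref{lem_ApproxB}, via the Woodbury identity). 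The projection property of $b$ enters only through this second-moment identity; the dense-confounding scaling is then exactly what $\lambda_2$ is calibrated to dominate, so this step is a one-line Markov bound rather than the delicate argument you anticipate.
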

A proof can be found in Appendix \ref{sec_ProofBoundInSample}. The different components in the error term $r_n$ will be made more explicit below and in Corollary \ref{cor_FinalRate}. They have the following interpretations: \Revision{for the choice $K\asymp (n/\log p)^{1/5}$ we have $\lambda\asymp (\log p/n)^{2/5}$ (if the first term in the definition \eqref{eq_Lambda} of $\lambda$ dominates).} To control \Revision{the first} term, we thus need a lower bound on the compatibility constant $\tau_n$. The second term depends on $Q\mathbf X b$ and is due to the hidden confounding. This term is small by the properties of the trim transformation $Q$ (see also Section \ref{sec_Intuition}). The third term measures, how well we can approximate the target functions $f_j^0$ using the functions $f_j^\ast$ in the span of the $K$ basis functions $b_j(\cdot)$. \Revision{Because of the identifiability condition $\E[f_j^0(X_j)]=0$, $j=1,\ldots, p$,} the fourth term is a sum of $s$ means of centered random variables and will scale like $sn^{-1/2}\sup_j\|f_j^0\|_{L_2}$. The interpretation of the fifth term is analogous to the interpretation of the third and the fourth term. In the following sections, we will control the components of $r_n$ under stronger assumptions. 
\begin{remark}
Note that from Theorem \ref{thm_BoundInSample}, we immediately also get the same convergence rate for $|\beta_0^0-\hat\beta_0|+\sum_{j=1}^p\frac{1}{\sqrt n}\|\mathbf f_j^0-\hat{\mathbf f}_j\|_2$ (that is replacing $f_j^\ast$ by the true functions $f_j^0$). Moreover, from the additive form of the error rate, we also get the screening property \citep{BuehlmannHDStats}. If $\min_{j\in \mathcal T}\frac{1}{\sqrt n}\|\mathbf f_j^0\|_2\gg r_n$, the probability of selecting a superset of the true active set converges to $1$.
\end{remark}
The rate in Theorem \ref{thm_BoundInSample} is in-sample. To also obtain out-of-sample convergence rates, we need the following assumption on the basis functions.
\begin{assumption}\label{ass_BasisOutSample}
There exists $C>0$ such that on an event $\mathcal B$ with $\Prob(\mathcal B)=1-o(1)$, it holds that
$$\sup_{j=1, \ldots, p}\frac{\lambda_{\max}\left(\E[b_j(X_j)b_j(X_j)^T]\right)}{\lambda_{\min}\left(\frac{1}{n}(B^{(j)})^TB^{(j)}\right)}\leq C.$$
\end{assumption}
\Revision{Assumption \ref{ass_BasisOutSample} follows if both the population and the sample second moment of the basis functions evaluated at the covariates are sufficiently well-behaved.} A detailed discussion of Assumption \ref{ass_BasisOutSample} can be found in Section \ref{sec_EVDesign}.
\Revision{Let us define} the $L_2$-norm (with respect to the distribution of $X$) as $\|g\|_{L_2}=\E[g(X)^2]^{1/2}$.
\begin{corollary}\label{cor_RateOutSample}
Under Assumptions \ref{ass_ConditionsModel0}, \ref{ass_ConditionsModel1}, \ref{ass_BasisFunctions} and \ref{ass_BasisOutSample} and with $\lambda$ defined in \eqref{eq_Lambda}, we have that with probability larger than $1-o(1)$,
\begin{equation}\label{eq_RateCorOutSample}
|\beta_0^0-\hat\beta_0|+\sum_{j=1}^p\|f_j^\ast-\hat f_j\|_{L_2} \lesssim r_n
\end{equation}
with $r_n$ defined in \eqref{eq_DefRn}.
\end{corollary}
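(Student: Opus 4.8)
The plan is to convert the in-sample bound of Theorem \ref{thm_BoundInSample} into the claimed $L_2$ bound component-by-component, using Assumption \ref{ass_BasisOutSample} as the bridge between the empirical and population Gram matrices of the basis functions. The key observation is that for each $j$ the difference $f_j^\ast - \hat f_j = b_j(\cdot)^T(\beta_j^\ast - \hat\beta_j)$ lies in the span of the basis functions $b_j(\cdot)$, so both norms of interest are quadratic forms in the single coefficient vector $\delta_j \coloneqq \beta_j^\ast - \hat\beta_j$, and the transition reduces to an eigenvalue comparison.

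First I would write the two norms explicitly as
$$\|f_j^\ast-\hat f_j\|_{L_2}^2=\delta_j^T\,\E[b_j(X_j)b_j(X_j)^T]\,\delta_j,\qquad \frac{1}{n}\|\mathbf f_j^\ast-\hat{\mathbf f}_j\|_2^2=\delta_j^T\,\tfrac{1}{n}(B^{(j)})^TB^{(j)}\,\delta_j,$$
using $\mathbf f_j^\ast-\hat{\mathbf f}_j = B^{(j)}\delta_j$. Bounding the first quadratic form from above by $\lambda_{\max}(\E[b_j(X_j)b_j(X_j)^T])\,\|\delta_j\|_2^2$ and the second from below by $\lambda_{\min}(\tfrac{1}{n}(B^{(j)})^TB^{(j)})\,\|\delta_j\|_2^2$, and then dividing, yields
$$\|f_j^\ast-\hat f_j\|_{L_2}^2\leq \frac{\lambda_{\max}(\E[b_j(X_j)b_j(X_j)^T])}{\lambda_{\min}(\tfrac{1}{n}(B^{(j)})^TB^{(j)})}\cdot\frac{1}{n}\|\mathbf f_j^\ast-\hat{\mathbf f}_j\|_2^2.$$
On the event $\mathcal B$ of Assumption \ref{ass_BasisOutSample} the prefactor is at most $C$ uniformly in $j$, so $\|f_j^\ast-\hat f_j\|_{L_2}\leq \sqrt C\,\frac{1}{\sqrt n}\|\mathbf f_j^\ast-\hat{\mathbf f}_j\|_2$ for every $j$.

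To finish, I would sum this inequality over $j=1,\ldots,p$, add the unchanged intercept term $|\beta_0^0-\hat\beta_0|$ to both sides, and intersect $\mathcal B$ with the probability-$(1-o(1))$ event on which conclusion \eqref{eq_RateInSample} of Theorem \ref{thm_BoundInSample} holds. Since both events have probability $1-o(1)$, so does their intersection, and on it
$$|\beta_0^0-\hat\beta_0|+\sum_{j=1}^p\|f_j^\ast-\hat f_j\|_{L_2}\leq |\beta_0^0-\hat\beta_0|+\sqrt C\sum_{j=1}^p\frac{1}{\sqrt n}\|\mathbf f_j^\ast-\hat{\mathbf f}_j\|_2\lesssim r_n,$$
which is exactly \eqref{eq_RateCorOutSample}.

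There is little genuine difficulty here once Assumption \ref{ass_BasisOutSample} is granted: the eigenvalue comparison needs no centering argument and applies verbatim to the difference $f_j^\ast - \hat f_j$, since the mean term is automatically captured by the second-moment quadratic form $\E[(b_j(X_j)^T\delta_j)^2]$. The only points requiring care are that the ratio bound must hold uniformly over all $p$ components simultaneously, which is precisely what the supremum over $j$ in Assumption \ref{ass_BasisOutSample} provides, and that the two high-probability events are intersected so the final statement still holds with probability $1-o(1)$. The substantive content of the in-sample to out-of-sample transition is therefore entirely absorbed into Assumption \ref{ass_BasisOutSample}, whose justification I would defer to Section \ref{sec_EVDesign}.
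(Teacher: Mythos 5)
Your proposal is correct and is essentially identical to the paper's own proof: the same eigenvalue comparison $\|f_j^\ast-\hat f_j\|_{L_2}^2 \leq \lambda_{\max}(\E[b_j(X_j)b_j(X_j)^T])\,\|\delta_j\|_2^2$ versus $\frac{1}{n}\|\mathbf f_j^\ast-\hat{\mathbf f}_j\|_2^2 \geq \lambda_{\min}(\frac{1}{n}(B^{(j)})^TB^{(j)})\,\|\delta_j\|_2^2$, followed by Assumption \ref{ass_BasisOutSample} to bound the ratio uniformly in $j$, and intersection of the event $\mathcal B$ with the event $\mathcal A$ from the proof of Theorem \ref{thm_BoundInSample}. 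No gaps.
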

The proof can be found in Appendix \ref{sec_ProofCorOutSample}.
In the following, we focus on controlling the different components of the error term $r_n$ given in \eqref{eq_DefRn}. In Section \ref{sec_Compatibility}, we bound the compatibility constant $\tau_n$ from below. In Section \ref{sec_FurtherAnalysis}, we control the other components of $r_n$ and we show how the convergence rate \eqref{eq_RateBeginning} can be deduced.

\subsection{The Compatibility Constant}\label{sec_Compatibility}
In this section, we show that if $(H^T, E^ T)^T$ is a Gaussian random vector, the compatibility constant $\tau_n$ can be bounded from below. In a first step, we reduce the (sample) compatibility constant $\tau_n$ to a population version $\tau_0$ and in a second step, we bound the population compatibility constant $\tau_0$ from below. In addition to the Gaussianity assumption (Assumption \ref{ass_Gaussian}), we also need some more assumptions on the model (Assumption \ref{ass_DimAndPsi}) and some assumptions on the basis functions $b_j$ (Assumption \ref{ass_CondBasis}).
\begin{assumption}\label{ass_Gaussian}
    $(H^T, E^T)^T$ is a Gaussian random vector.
\end{assumption}
\begin{remark}\label{rmk_GaussWeak}
    For Theorem \ref{thm_BoundCC} (reduction of sample to population compatibility constant), the Gaussianity assumption can be weakened \Revision{to sub-Gaussian with additional constraints, most importantly $p/n\to c^\ast\in [0,\infty)$}, see also the proof in Appendix \ref{sec_ProofBoundCC}. However, the Gaussianity assumption is crucial for Theorem \ref{thm_BoundPopCC} (control of population compatibility constant).
\end{remark}
\begin{assumption}\label{ass_DimAndPsi}
Define $N=\max(p,n)$.
    \begin{enumerate}
        \item $\max(q, K)s\sqrt{\frac{\log(Kp)}{n}}=o(1)$.
        \item $\lambda_1(\Psi)/\lambda_q(\Psi)\lesssim 1$.
        \item $\lambda_q(\Psi)^2\gg s\sqrt p \max\left(\sqrt{q^3(\log N)^3)}, \sqrt{\frac{p}{n}}\sqrt{q(\log N)^2}\right)$.
        \item $\max_{l,j}|\Psi_{l,j}|\lesssim \sqrt{\log{(pq)}}$.
    \end{enumerate}
\end{assumption}
\Revision{Assertion (3) of Assumption \ref{ass_DimAndPsi} is the precise mathematical formulation of dense confounding. Intuitively it means that $H$ affects many components of $X$ \cite{CevidSpectralDeconfounding}.}
Assertions (2), (3), and (4) of Assumption \ref{ass_DimAndPsi} are motivated by similar assumptions in \cite{GuoDoublyDebiasedLasso}. In particular, note that assertion (3)  can be much less restrictive than the classical factor model assumption $\lambda_q(\Psi)^2\asymp p$ \citep{FanLargeCovarianceEstimation, FanAreLatent}. \Revision{As a simple example, assume that the sparsity $s$ and the number of confounders $q$ are fixed and that $p \asymp n$. Then, assertion (3) of Assumption \ref{ass_DimAndPsi} boils down to $\lambda_q(\Psi)^2\gg \sqrt p(\log p)^{3/2}$, which is much weaker than assuming $\lambda_q(\Psi)^2 \asymp p$.}

\Revision{Define the matrices $\hat\Sigma_j=\frac{1}{n}(B^{(j)})^TB^{(j)}$, $j = 1,\ldots p$, i.e. $\hat\Sigma_j$ is the sample second moment of the design matrix corresponding to the $j$th component of $X$.}

\begin{assumption}\label{ass_CondBasis}
\begin{enumerate}
    \item The random variables $\left(b_j^k(X_j)\right)_{k\in \mathbb N,\, j\in \mathbb N}$ are sub-Gaussian and there exists a constant $C>0$ such that for all $j,k\in \mathbb N$, we have $\|b_j^k(X_j)\|_{\psi_2}\leq C$.
    \item $\frac{1}{\min_{j=1,\ldots, p}\lambda_{\min}(\hat\Sigma_j)}=o_P\left(\sqrt{\frac{n}{\log(Kp)}}\frac{1}{Ks}\right)$.
    \item There exists $C>0$ and an event $\mathcal C$ with $\Prob(\mathcal C)=1-o(1)$ such that for all $j=1, \ldots, p$, we have $\frac{\lambda_{\max}(\hat \Sigma_j)}{\lambda_{\min}(\hat\Sigma_j)}\leq C$ on the event $\mathcal C$.
\end{enumerate}
\end{assumption}
Assertion (1) holds for example for the B-spline basis functions since they are uniformly bounded. Assertions (2) and (3) of Assumption \ref{ass_CondBasis} are related to Assumption \ref{ass_BasisOutSample} \Revision{and hold if the sample second moment of the basis functions evaluated at the covariates is sufficiently well behaved.} We postpone the detailed discussion of these assumptions to Section \ref{sec_EVDesign}, \Revision{but already note that under suitable conditions, assertion (2) can be replaced by (2') $sK^2\sqrt{\frac{\log(Kp)}{n}}=o(1)$. Note that this is a stronger requirement in terms of $s$ and $K$ than assertion (1) of Assumption \ref{ass_DimAndPsi}. In fact, this is a strong restriction on how fast the sparsity $s$ is allowed to grow, see also Section \ref{sec_RateDiscussion}.} 

We now define a population version of the compatibility constant. For this, define the set of additive functions
$$\mathcal F_{\text{add}}\coloneqq \left\{f(X)=w_0+\sum_{j=1}^p f_j(X_j)|\forall j=1,\ldots, p : \E[f_j(X_j)]=0,\, \E[f_j(X_j)^2] < \infty\right\}.$$
Note that the functions $f_j$ defining the functions $f$ in $\mathcal F_{\text{add}}$ are centered with respect to the distribution of $X_j$. Also, note that we do not have a cone-condition as for the sample version \eqref{eq_DefFMTn} anymore.
Define the population compatibility constant
$$\tau_0=\inf_{f\in \mathcal F_{\textup{add}}, a\in \mathbb R^q}\frac{\E[(f(X)-H^Ta)^2]}{w_0^2+\sum_{j=1}^p\E[f_j(X_j)^2]}.$$
\begin{theorem}\label{thm_BoundCC}
	Under Assumption \ref{ass_ConditionsModel0}, assertion (1), Assumption \ref{ass_ConditionsModel1}, Assumption \ref{ass_Gaussian}, Assumption \ref{ass_DimAndPsi} and Assumption \ref{ass_CondBasis}, assume that $\tau_0\gtrsim 1$. Then, with probability $1-o(1)$, we have that $\tau_n \gtrsim \tau_0$.
\end{theorem}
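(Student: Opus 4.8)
The plan is to establish the reduction $\tau_n \gtrsim \tau_0$ by transferring both the numerator and the denominator of the sample ratio in \eqref{eq_DefCC} to their population counterparts, uniformly over all subsets $\mathcal T$ with $|\mathcal T|\leq s$ and all $f^w\in \mathcal F_{M,\mathcal T}^n$. Given a sample coefficient vector $w=(w_0,w_1^T,\ldots,w_p^T)^T$, I would associate to it the population function $f=w_0+\sum_j f_j$ with population-centered components $f_j(\cdot)=b_j(\cdot)^Tw_j-\E[b_j(X_j)^Tw_j]$, and then show: (i) the sample denominator $w_0^2+\sum_j\frac1n\|\mathbf f_j^w\|_2^2$ is comparable to the population denominator $w_0^2+\sum_j\E[f_j(X_j)^2]$; and (ii) the deconfounded sample numerator $\frac1n\|Q\mathbf f^w\|_2^2$ is bounded below by $\inf_a\E[(f(X)-H^Ta)^2]$ up to a constant and a negligible error. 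Combining (i), (ii), and the definition of $\tau_0$ then yields $\frac1n\|Q\mathbf f^w\|_2^2\gtrsim \tau_0\,(w_0^2+\sum_j\frac1n\|\mathbf f_j^w\|_2^2)$, which is precisely $\tau_n\gtrsim\tau_0$.

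For the denominator (step (i)), the sum is block-diagonal in $j$, so it suffices to control each $\hat\Sigma_j=\frac1n(B^{(j)})^TB^{(j)}$ against $\Sigma_j=\E[b_j(X_j)b_j(X_j)^T]$ in operator norm. This follows from the sub-Gaussianity of the basis evaluations (Assumption \ref{ass_CondBasis}(1)) together with the eigenvalue control in Assumption \ref{ass_CondBasis}(2)--(3), after accounting for the discrepancy between empirical and population centering, which is of lower order by a standard $n^{-1/2}$ argument. The heart of the proof is the numerator (step (ii)). The strategy is to exploit the singular value profile of $\mathbf X=\mathbf H\Psi+\mathbf E$: under dense confounding (Assumption \ref{ass_DimAndPsi}(2)--(3)) and Gaussianity (Assumption \ref{ass_Gaussian}), the top $q$ singular values of $\mathbf X$ separate from the rest, while the remaining $r-q$ singular values are mutually comparable (of the order of those of $\mathbf E$, which is well-conditioned by Assumption \ref{ass_ConditionsModel1}). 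Consequently, although the trim transform in \eqref{eq_DefQtrim} shrinks the top half of the singular values, the shrinkage factors $\tilde d_l$ are bounded below by a constant for every $l>q$, while the top $q$ directions are strongly shrunk. Hence $Q$ acts as a near-identity (eigenvalues $\asymp 1$) on the orthogonal complement of the top-$q$ left singular subspace $\mathcal U_q$ of $\mathbf X$ and strongly shrinks $\mathcal U_q$ itself. Since $\mathcal U_q$ approximates the column space of $\mathbf H$ by a principal-subspace consistency argument, this gives $\frac1n\|Q\mathbf f^w\|_2^2\gtrsim \frac1n\|P_{\mathbf H}^\perp\mathbf f^w\|_2^2$, the sample residual of $\mathbf f^w$ after linear regression on $\mathbf H$ --- the nonlinear analogue of the deconfounding Lemma \ref{lem_QXB}.

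It then remains to show that this sample residual concentrates to $\inf_a\E[(f(X)-H^Ta)^2]$ uniformly over the $(Kp+1)$-dimensional coefficient space. This uniform concentration is the crux and where I expect the main obstacle, because $\mathbf f^w$ is a nonlinear transform $b_j(\Psi_j^TH+E_j)$ of the confounder, so the residualized design does not decouple from $H$ and standard factor-model arguments do not apply directly. I would handle it by controlling the extreme eigenvalues of the confounded nonlinear Gram matrix of the residualized basis evaluations, extending the results of \cite{GuoExtremeEigenvalues} to the present setting; Gaussianity of $(H^T,E^T)^T$ and the growth conditions in Assumption \ref{ass_DimAndPsi}(1) and Assumption \ref{ass_CondBasis} are used precisely here to obtain a uniform-in-$w$ bound.

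Finally, a union bound over the $\binom{p}{s}$ choices of $\mathcal T$ (paid for by the $\log p$ and $\log(Kp)$ factors appearing in the assumptions) upgrades all of the above to hold simultaneously over the cone $\mathcal F_{M,\mathcal T}^n$, completing the reduction. In summary, the main obstacle is the joint control of the data-dependent transform $Q$ and the nonlinear, $H$-dependent design when establishing the restricted lower eigenvalue bound on the deconfounded numerator; the denominator comparison and the subspace/singular-value estimates are more routine, relying on \cite{GuoExtremeEigenvalues}-type concentration and classical factor-model perturbation bounds.
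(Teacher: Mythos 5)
Your steps (i) and (ii) are essentially the paper's own route: the denominator comparison is the paper's Lemma \ref{lem_TechnicalCC2}, and the numerator reduction is exactly the paper's passage from $Q^{\text{trim}}$ to $Q^{\text{PCA}}=I_n-\frac{1}{n}\hat{\mathbf H}\hat{\mathbf H}^T$ via the singular-value ratio $d_{\lfloor\rho r\rfloor}^2/d_{q+1}^2\gtrsim 1$, followed by the replacement of $\hat{\mathbf H}$ by $\mathbf H$ through the factor-consistency Lemma \ref{lem_HatH} (your ``principal-subspace consistency''). The genuine gap is in your step (iii), which you correctly identify as the crux but then resolve with the wrong tool. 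You propose to prove the uniform sample-to-population concentration of $\inf_a\frac{1}{n}\|\mathbf f^w-\mathbf Ha\|_2^2$ by ``controlling the extreme eigenvalues of the confounded nonlinear Gram matrix of the residualized basis evaluations, extending \cite{GuoExtremeEigenvalues}.'' This cannot work as stated: that Gram matrix is $(Kp+1)$-dimensional with rank at most $n\ll Kp$, so its minimum eigenvalue is zero and no extreme-eigenvalue bound over the full coefficient space exists --- in effect you are restating the restricted-eigenvalue problem you are supposed to solve. Moreover, \cite{GuoExtremeEigenvalues} is a purely population-level result (Hermite expansions for Gaussian vectors); in the paper it enters only through Theorem \ref{thm_BoundPopCC}, i.e.\ the lower bound on $\tau_0$, which in Theorem \ref{thm_BoundCC} is an \emph{assumption} ($\tau_0\gtrsim 1$), not something to be proved. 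What is actually needed at your step (iii) is elementary but structurally different: elementwise max-norm concentration of the \emph{extended} Gram matrix $\frac{1}{n}\bar{\mathbf B}^T\bar{\mathbf B}$ (intercept, all $Kp$ basis columns, and the $q$ columns of $\mathbf H$), namely $\|\frac{1}{n}\bar{\mathbf B}^T\bar{\mathbf B}-\E[\bar{\mathbf b}(X,H)\bar{\mathbf b}(X,H)^T]\|_\infty=O_P(\sqrt{\log(Kp)/n})$, paired with the $\ell_1$-bound \eqref{eq_BoundWBarL1} that the cone condition and Cauchy--Schwarz provide, and evaluated only at the sample projection coefficient $a_w=\frac{1}{n}\mathbf H^T\mathbf f^w$ (whose norm is controlled by $\|\mathbf H/\sqrt n\|_{op}$, thereby avoiding uniformity over all $a$); this is precisely where Assumption \ref{ass_DimAndPsi}(1) and Assumption \ref{ass_CondBasis}(2) are consumed, and it must also absorb the mismatch between empirical and population centering (the term $D_{f^w}''$ in the paper's proof), which your sketch drops after step (i).

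Two further misallocations confirm the gap. Gaussianity is not used where you place it: the sample-to-population reduction is sub-Gaussian throughout (see Remark \ref{rmk_GaussWeak}); in this theorem Gaussianity is needed only to apply Lemma 7 of \cite{GuoDoublyDebiasedLasso} and conclude $d_{\lfloor\rho r\rfloor}^2\gtrsim\max(n,p)$, i.e.\ inside your step (ii). And the union bound over the $\binom{p}{s}$ supports is both unnecessary and harmful: it would replace $\sqrt{\log(Kp)/n}$ by roughly $\sqrt{s\log p/n}$, which the stated assumptions do not cover, whereas uniformity in $\mathcal T$ is automatic because the max-norm bound and the cone inequality depend on $\mathcal T$ only through $|\mathcal T|\leq s$.
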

The proof is given in Appendix \ref{sec_ProofBoundCC}.
To bound the population compatibility constant $\tau_0$, we use methods from \cite{GuoExtremeEigenvalues}, but need to adapt them to our setting. Let $\Psi_j\in \mathbb R^q$ be the $j$th column of the matrix $\Psi$ and define the matrices
$$\Lambda=\Lambda_{\Psi, \Sigma_E}=\diag\left(\left(\|\Psi_j\|_2^2+(\Sigma_E)_{j,j}\right)^{-1/2}, j=1,\ldots, p\right)$$
and 
\begin{equation}\label{eq_DefAPsi}
A=A_{\Psi, \Sigma_E}=\Lambda\Sigma_E\Lambda\in \mathbb R^p,
\end{equation}
that is, the matrix $A$ has entries $A_{j,t}=\frac{(\Sigma_E)_{j,t}}{\sqrt{\|\Psi_j\|_2^2+(\Sigma_E)_{j,j}}\sqrt{\|\Psi_t\|_2^2+(\Sigma_E)_{t,t}}}$. The following result, which is a modification of Theorem 1 in \cite{GuoExtremeEigenvalues}, allows us to bound the population compatibility constant $\tau_0$ in the case of Gaussian random vectors.
\begin{theorem}\label{thm_BoundPopCC}
Under Assumption \ref{ass_ConditionsModel0}, assertion (1) and Assumption \ref{ass_Gaussian}, we have that for all $f\in \mathcal F_{\textup{add}}$ and all $a\in \mathbb R^q$,
$$\frac{\E[(f(X)-H^Ta)^2]}{w_0^2+\sum_{j=1}^p\E[f_j(X_j)^2]}\geq\lambda_{\min}(A_{\Psi,\Sigma_E}).$$
In particular, $\tau_0\geq\lambda_{\min}(A_{\Psi, \Sigma_E})$.
\end{theorem}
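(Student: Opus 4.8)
The plan is to reduce the quotient to a pure variance inequality, then expand each component function in Hermite polynomials and exploit the orthogonality of the Wiener chaoses together with a Hadamard-power eigenvalue bound. First, since every $f_j$ is centered and $\E[H]=0$, both numerator and denominator split off the same constant: $\E[(f(X)-H^Ta)^2]=w_0^2+\mathrm{Var}\big(\sum_j f_j(X_j)-H^Ta\big)$ and $w_0^2+\sum_j\E[f_j(X_j)^2]=w_0^2+\sum_j\mathrm{Var}(f_j(X_j))$. Because the diagonal entries of $A$ are $A_{jj}=(\Sigma_E)_{jj}/(\|\Psi_j\|_2^2+(\Sigma_E)_{jj})\le 1$, the variational characterization gives $\lambda_{\min}(A)\le\min_j A_{jj}\le 1$; hence it suffices to establish the variance inequality $\mathrm{Var}\big(\sum_j f_j(X_j)-H^Ta\big)\ge\lambda_{\min}(A)\sum_j\mathrm{Var}(f_j(X_j))$ for every $a$, after which the identical $w_0^2$ terms only help.

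Next I would set $\sigma_j^2=\|\Psi_j\|_2^2+(\Sigma_E)_{jj}$ and standardize $Z_j=X_j/\sigma_j=\Lambda_{jj}(\Psi_j^TH+E_j)$, so that $(Z_j)_j$ is jointly Gaussian with correlation matrix $R=\Lambda\Sigma_X\Lambda=\Lambda\Psi^T\Psi\Lambda+A=:B+A$, where both $B$ and $A$ are positive semidefinite and $R$ has unit diagonal. Using the orthonormal Hermite basis $h_k$, write $f_j(X_j)=\sum_{k\ge1}c_{jk}h_k(Z_j)$ (no constant term, by centering), so $\mathrm{Var}(f_j)=\sum_{k\ge1}c_{jk}^2$. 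The degree-one part is $\sum_j c_{j1}Z_j=(\Psi g)^TH+g^TE$ with $g=\Lambda c_{\cdot1}$, a linear functional of $(H,E)$, whereas the terms with $k\ge2$ live in strictly higher Wiener chaos and are therefore uncorrelated with every linear functional of $(H,E)$ — in particular with $H^Ta$ and with the degree-one part. Combined with the Mehler identity $\E[h_k(Z_j)h_l(Z_t)]=\delta_{kl}R_{jt}^k$, the variance decomposes as $\mathrm{Var}\big((\Psi g-a)^TH+g^TE\big)+\sum_{k\ge2}c_{\cdot k}^TR^{\circ k}c_{\cdot k}$.

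For the degree-one block, since $H$ and $E$ are uncorrelated with $\mathrm{Cov}(H)=I_q$ and $\mathrm{Cov}(E)=\Sigma_E$, the first term equals $\|\Psi g-a\|_2^2+g^T\Sigma_E g\ge g^T\Sigma_E g=c_{\cdot1}^TA c_{\cdot1}\ge\lambda_{\min}(A)\|c_{\cdot1}\|_2^2$, and this lower bound holds for \emph{all} $a$ (the optimum $a=\Psi g$ merely makes it tight). This is exactly the point at which the optimization over $a\in\mathbb R^q$ annihilates the $H$-component and leaves only the unconfounded part $\Sigma_E$. The remaining task, which I expect to be the main obstacle, is the Hadamard-power bound $\lambda_{\min}(R^{\circ k})\ge\lambda_{\min}(A)$ for every $k\ge2$; this is where the modification of Theorem 1 in \cite{GuoExtremeEigenvalues} enters and where the confounding contribution $B=\Lambda\Psi^T\Psi\Lambda$ must be shown to only push eigenvalues up.

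I would prove this bound by the PSD decomposition $R=A+B$ together with the Schur product theorem. Telescoping gives $R^{\circ k}-A\circ R^{\circ(k-1)}=B\circ R^{\circ(k-1)}\succeq0$, since $B$ and $R^{\circ(k-1)}$ are both positive semidefinite. Then, writing $A=\lambda_{\min}(A)I+\big(A-\lambda_{\min}(A)I\big)$ and applying Schur again, $A\circ R^{\circ(k-1)}=\lambda_{\min}(A)\big(I\circ R^{\circ(k-1)}\big)+\big(A-\lambda_{\min}(A)I\big)\circ R^{\circ(k-1)}\succeq\lambda_{\min}(A)I$, where $I\circ R^{\circ(k-1)}=\diag(R_{jj}^{k-1})=I$ because $R$ has unit diagonal. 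Hence $R^{\circ k}\succeq\lambda_{\min}(A)I$ and $c_{\cdot k}^TR^{\circ k}c_{\cdot k}\ge\lambda_{\min}(A)\|c_{\cdot k}\|_2^2$. Summing the degree-one and higher-degree contributions yields $\mathrm{Var}\big(\sum_j f_j(X_j)-H^Ta\big)\ge\lambda_{\min}(A)\sum_{k\ge1}\|c_{\cdot k}\|_2^2=\lambda_{\min}(A)\sum_j\mathrm{Var}(f_j(X_j))$, which by the first paragraph gives the claimed bound on the quotient and, taking the infimum over $f$ and $a$, the statement $\tau_0\ge\lambda_{\min}(A_{\Psi,\Sigma_E})$.
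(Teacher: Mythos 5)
Your proof is correct, and its skeleton matches the paper's: absorb the intercept using $\lambda_{\min}(A_{\Psi,\Sigma_E})\le 1$, standardize to $Z_j$, expand each $f_j$ in Hermite polynomials, and split the variance into the first Wiener chaos (where optimizing over $a$ annihilates the $H$-direction and leaves exactly the quadratic form $c_{\cdot 1}^T A c_{\cdot 1}$) plus the higher chaoses (where Hadamard powers of the correlation matrix $R$ appear). Where you genuinely depart is the treatment of the higher-order terms. The paper invokes Lemma \ref{lem_Lemma2GuoZhang} (a special case of Lemma 2 in \cite{GuoExtremeEigenvalues}), which gives $\lambda_{\min}(R^{\circ m})\ge\lambda_{\min}(R)$, and then separately observes $\lambda_{\min}(R)\ge\lambda_{\min}(A)$ from the PSD splitting $R=B+A$ with $B=\Lambda\Psi^T\Psi\Lambda$. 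You instead prove the combined bound $\lambda_{\min}(R^{\circ k})\ge\lambda_{\min}(A)$ in one self-contained step, telescoping $R^{\circ k}=A\circ R^{\circ(k-1)}+B\circ R^{\circ(k-1)}$ and applying the Schur product theorem twice together with the unit diagonal of $R$ (so that $I\circ R^{\circ(k-1)}=I$). This buys self-containedness — no external lemma is needed — and makes transparent that the confounding block $B$ can only push eigenvalues up; the paper's route buys brevity by outsourcing the Hadamard-power inequality to the cited result. Your handling of the degree-one block (dropping $\|\Psi g-a\|_2^2\ge 0$ rather than explicitly computing the minimizing $a$ as the paper does) is equivalent, since $\E[Z_jH_l]=\Lambda_{jj}\Psi_{l,j}$ makes the residual quadratic form in both cases exactly $c_{\cdot 1}^TAc_{\cdot 1}$.
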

The proof is given in Appendix \ref{sec_ProofBoundPopCC}.
\begin{remark}
If the matrix $\Sigma_E$ is diagonal, the quantity $\lambda_{\min}(A_{\Psi, \Sigma_E})$ has a more explicit expression. If $\Sigma_E=\diag(\sigma_1^2,\ldots, \sigma_p^2)$, we have that $\lambda_{\min}(A_{\Psi, \Sigma_E})=\min_{j=1, \ldots,p}\frac{\sigma_j^2}{\|\Psi_j\|_2^2+\sigma_j^2}$. Hence, if the ratio of the confounding strength $\|\Psi_j\|_2^2$ compared to the unconfounded variance $\sigma_j^2$ is bounded uniformly in $j=1,\ldots, p$, we can bound the population compatibility constant away from zero.
\end{remark}

\subsection{Further Analysis of the Remainder Term $r_n$ and Overall Implications} \label{sec_FurtherAnalysis}
To control the second component of $r_n$ in \eqref{eq_DefRn}, \Revision{we use the following result.}
\begin{lemma}\label{lem_QXB}
Under Assumption \ref{ass_ConditionsModel0}, assertions (1) and (3), and Assumption \ref{ass_ConditionsModel1}, we have that with high probability
$$\frac{1}{n}\|Q\mathbf Xb\|_2^2\lesssim \frac{\|\psi\|_2^2}{\lambda_q(\Psi)^2}\max(1, p/n).$$
\end{lemma}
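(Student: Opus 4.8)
The plan is to use the factorization $\mathbf X=\mathbf H\Psi+\mathbf E$, where $\mathbf E\in\mathbb R^{n\times p}$ collects the noise vectors $E$ as rows, together with the closed form $b=(\Psi^T\Psi+\Sigma_E)^{-1}\Psi^T\psi$, which follows from $\E[XX^T]=\Psi^T\Psi+\Sigma_E$ (using $\E[HH^T]=I_q$ and $\E[EH^T]=0$ from Assumption \ref{ass_ConditionsModel0}). Writing $Q\mathbf Xb=Q\mathbf H(\Psi b)+Q\mathbf Eb$ and using the triangle inequality, it suffices to bound a ``factor term'' $\frac1n\|Q\mathbf H(\Psi b)\|_2^2$ and a ``noise term'' $\frac1n\|Q\mathbf Eb\|_2^2$ separately, since $\frac1n\|Q\mathbf Xb\|_2^2\le\frac2n\|Q\mathbf H(\Psi b)\|_2^2+\frac2n\|Q\mathbf Eb\|_2^2$. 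As $Q$ is symmetric with eigenvalues in $[0,1]$, it is a contraction ($\|Q\|_{op}\le 1$), so the entire difficulty is to show that $Q$ shrinks the factor part while the noise part is automatically of the right order.

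Two deterministic estimates on $b$ drive everything. First, the Woodbury identity gives $\Psi b=A(I+A)^{-1}\psi$ with $A=\Psi\Sigma_E^{-1}\Psi^T\succeq 0$; since $A(I+A)^{-1}$ has eigenvalues in $[0,1)$, this yields $\|\Psi b\|_2\le\|\psi\|_2$ with no condition on $\Psi$. Second, setting $\tilde\Psi=\Psi\Sigma_E^{-1/2}$ one computes $b=\Sigma_E^{-1/2}(\tilde\Psi^T\tilde\Psi+I)^{-1}\tilde\Psi^T\psi$; the nonzero singular values of $(\tilde\Psi^T\tilde\Psi+I)^{-1}\tilde\Psi^T$ equal $\sigma/(\sigma^2+1)\le 1/\sigma$, hence are at most $1/\lambda_q(\tilde\Psi)$, and with $\lambda_q(\tilde\Psi)\ge\lambda_q(\Psi)\,\lambda_{\min}(\Sigma_E^{-1})^{1/2}$ and $\|\Sigma_E^{-1/2}\|_{op}=\lambda_{\max}(\Sigma_E^{-1})^{1/2}$ from Assumption \ref{ass_ConditionsModel1} this gives $\|b\|_2\lesssim\|\psi\|_2/\lambda_q(\Psi)$. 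The noise term then follows immediately: $\frac1n\|Q\mathbf Eb\|_2^2\le\frac1n\|\mathbf Eb\|_2^2$ is an average of $n$ i.i.d.\ copies of $(E^Tb)^2$ with mean $b^T\Sigma_E b\le\lambda_{\max}(\Sigma_E)\|b\|_2^2\lesssim\|\psi\|_2^2/\lambda_q(\Psi)^2$, so by the weak law of large numbers (only finite second moments are needed, as $b$ is deterministic) it is, with high probability, bounded by the claimed rate.

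For the factor term, since $\Psi$ has full row rank I would pass through its right inverse, $Q\mathbf H=(Q\mathbf H\Psi)\Psi^T(\Psi\Psi^T)^{-1}$, giving $\|Q\mathbf H\|_{op}\le\|Q\mathbf H\Psi\|_{op}/\lambda_q(\Psi)$, and then $\|Q\mathbf H\Psi\|_{op}\le\|Q\mathbf X\|_{op}+\|Q\mathbf E\|_{op}\le d_{\lfloor\rho r\rfloor}+\|\mathbf E\|_{op}$. The key structural observation is that the trim transformation caps all singular values of $\mathbf X$ at the median, so $\|Q\mathbf X\|_{op}=d_{\lfloor\rho r\rfloor}$; and since $\mathbf H\Psi$ has rank at most $q\ll r$, Weyl's inequality for singular values yields $d_{\lfloor\rho r\rfloor}=\sigma_{\lfloor\rho r\rfloor}(\mathbf X)\le\sigma_{\lfloor\rho r\rfloor-q}(\mathbf E)\le\|\mathbf E\|_{op}$ (here $\lfloor\rho r\rfloor-q\ge1$ for $n$ large, using Assumption \ref{ass_ConditionsModel0}, assertion (3)). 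Combining with $\|Q\mathbf H(\Psi b)\|_2\le\|Q\mathbf H\|_{op}\|\Psi b\|_2$ and the bound $\|\Psi b\|_2\le\|\psi\|_2$ above gives $\frac1n\|Q\mathbf H(\Psi b)\|_2^2\lesssim\frac{\|\mathbf E\|_{op}^2}{n\,\lambda_q(\Psi)^2}\|\psi\|_2^2$.

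The main obstacle is the remaining random-matrix input: a high-probability bound $\|\mathbf E\|_{op}^2\lesssim n\max(1,p/n)=\max(n,p)$, equivalently $\|\tfrac1n\mathbf E^T\mathbf E\|_{op}\lesssim\max(1,p/n)$. This is the control of the operator norm of a sample second-moment matrix with bounded population spectrum ($\lambda_{\max}(\Sigma_E)\le 1/c$ by Assumption \ref{ass_ConditionsModel1}) and is precisely the estimate underlying spectral deconfounding in the linear case, which I would invoke from the bulk--singular-value analysis of $\mathbf X$ in \cite{GuoDoublyDebiasedLasso}. Feeding it into the previous display produces $\frac1n\|Q\mathbf H(\Psi b)\|_2^2\lesssim\frac{\|\psi\|_2^2}{\lambda_q(\Psi)^2}\max(1,p/n)$, and since the noise term is of order $\|\psi\|_2^2/\lambda_q(\Psi)^2$ (smaller or equal), adding the two contributions closes the proof.
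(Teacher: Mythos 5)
Your proposal is correct in substance and, at its core, rests on exactly the same ingredients as the paper's proof: the trim transformation caps the singular values of $Q\mathbf X$ at $d_{\lfloor\rho r\rfloor}$, Weyl's inequality together with $\mathrm{rank}(\mathbf H\Psi)\le q$ reduces everything to $\|\mathbf E\|_{op}$, the high-probability bound $\|\mathbf E\|_{op}^2\lesssim\max(n,p)$ is imported from \cite{GuoDoublyDebiasedLasso}, and a Woodbury computation controls $b$. The paper, however, takes a shortcut that makes your decomposition $Q\mathbf Xb=Q\mathbf H(\Psi b)+Q\mathbf Eb$ unnecessary: it bounds $\frac1n\|Q\mathbf Xb\|_2^2\le\frac1n\|Q\mathbf X\|_{op}^2\|b\|_2^2$, notes that $\|Q\mathbf X\|_{op}=d_{\lfloor\rho r\rfloor}\le d_{q+1}$ once $\lfloor\rho r\rfloor\ge q+1$, invokes Proposition 3 of \cite{GuoDoublyDebiasedLasso} (itself proved by the same Weyl-plus-$\|\mathbf E\|_{op}$ argument you reconstruct) to get $d_{q+1}^2\lesssim\max(n,p)$ with high probability, and multiplies by $\|b\|_2^2\lesssim\|\psi\|_2^2/\lambda_q(\Psi)^2$, obtained in Appendix \ref{sec_ProofQXB} by the very computation you perform with $\tilde\Psi=\Psi\Sigma_E^{-1/2}$ (following Lemma 2 of \cite{GuoDoublyDebiasedLasso}). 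So your factor-term analysis — the right-inverse trick $\|Q\mathbf H\|_{op}\le\|Q\mathbf H\Psi\|_{op}/\lambda_q(\Psi)$ together with $\|\Psi b\|_2\le\|\psi\|_2$ — is a valid but more roundabout derivation of the same estimate; what it buys is conceptual transparency (factor part shrunk by $Q$, noise part small for free), at the cost of extra steps.

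The one step that does not hold as written is the treatment of the noise term by the weak law of large numbers. In this regime $p$, $\Sigma_E$, $\Psi$, $\psi$ — hence $b$ and the law of $(E^Tb)^2$ — all change with $n$, so $\{(e_i^Tb)^2\}_{i\le n}$ is a triangular array; Khinchine's WLLN does not apply, and since Assumption \ref{ass_ConditionsModel0} grants only finite second moments of $E$ (thus only a finite first moment for $(E^Tb)^2$), a triangular-array law of large numbers would need a uniform integrability condition you have not established. Markov's inequality yields only $\frac1n\|\mathbf Eb\|_2^2=O_P\bigl(b^T\Sigma_E b\bigr)$, which is weaker than the claimed ``with high probability $\lesssim$'' with a fixed constant. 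The repair is immediate and makes the WLLN superfluous: bound $\frac1n\|Q\mathbf Eb\|_2^2\le\frac1n\|\mathbf E\|_{op}^2\|b\|_2^2\lesssim\max(1,p/n)\,\|\psi\|_2^2/\lambda_q(\Psi)^2$, using the same operator-norm estimate you already invoke for the factor term. With that substitution your argument is complete and delivers the lemma as stated, on the same event and under the same (shared, cited) random-matrix input as the paper's proof.
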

The proof can be found in Appendix \ref{sec_ProofQXB}. \Revision{A common assumption on $\lambda_q(\Psi)$ is the standard factor model assumption $\lambda_q(\Psi)\asymp \sqrt p$, which is verified in \cite{GuoDoublyDebiasedLasso} and \cite{CevidSpectralDeconfounding} for some concrete choices of $\Psi$. Under this standard factor model assumption, it follows from Lemma \ref{lem_QXB} that $\frac{1}{n}\|Q\mathbf X b\|_2^2\lesssim \|\psi\|_2^2\max(1/p, 1/n)$.}

For the third term in \eqref{eq_DefRn}, observe that by the Cauchy-Schwarz inequality and Markov's inequality, 
$\sum_{j\in\mathcal T}\frac{1}{\sqrt n}\|\mathbf f_j^\ast-\mathbf f_j^0\|_2=O_P\left(s\sup_{j\in \mathcal T}\|f_j^\ast-f_j^0\|_{L_2}\right)$. \Revision{To simplify the exposition,} we now make a concrete assumption on the size of $\sup_{j\in \mathcal T}\|f_j^\ast-f_j^0\|_{L_2}$ and verify it in Section \ref{sec_ApproxError} for some particular construction of basis functions. \Revision{The assumption essentially corresponds to all component functions $f_j^0$ being twice continuously differentiable with uniformly bounded second derivatives. However, one could also consider other levels of smoothness.}
\begin{assumption}\label{ass_ApproxError}
The approximation error of $f_j^0$ by $f_j^\ast$ satisfies
$$\sup_{j\in \mathcal T}\|f_j^\ast-f_j^0\|_{L_2}\lesssim K^{-2}.$$
\end{assumption}
For the fourth term in \eqref{eq_DefRn}, observe that by Markov's inequality, H\"older's inequality and using that $\E[f_j^0(X_j)]=0$,
 $\sum_{j\in \mathcal T}|\frac{1}{n}\sum_{i=1}^n f_j^0(x_{i,j})|=O_P\left(\frac{s}{\sqrt n}\sup_{j\in \mathcal T}\|f_j^0\|_{L_2}\right)$.
The fifth term is analogous to the third and the fourth term.


\Revision{
Putting things together, we obtain that (assuming $\sup_{j\in \mathcal T}\|f_j^0\|_{L_2}<\infty$)
\begin{equation}\label{eq_BoundRnNew}
    r_n = O_P\left( \frac{s\lambda}{\tau_n} + \frac{1}{\lambda}\frac{\|\psi\|_2^2\max(1, p/n)}{\lambda_q(\Psi)^2}+\frac{s}{K^2} + \frac{s}{\sqrt n} + \frac{1}{\lambda}\frac{s^2}{K^4} +\frac{1}{\lambda}\frac{s^2}{n}\right).
\end{equation}
Using \eqref{eq_BoundRnNew}, we get consistency of our method in a wide range of scenarios. Minimal requirements for consistency can be found in Appendix \ref{sec_AppendixConsistency}. We already note here that our method achieves consistency under weaker conditions on $\lambda_q(\Psi)$ than the standard factor model assumption $\lambda_q(\Psi) \asymp \sqrt p$.

Apart from consistency, \eqref{eq_BoundRnNew} can also be used to obtain convergence rates. To obtain a simple and comparable convergence rate, Corollary \ref{cor_FinalRate} below makes a set of concrete assumptions on $n$, $p$, $K$ and $\|\psi\|_2$ and the standard factor model assumption $\lambda_q(\Psi)\asymp \sqrt{p}$. Other assumptions are possible but lead to different convergence rates.

\begin{corollary}\label{cor_FinalRate}
Under Assumptions \ref{ass_ConditionsModel0}-\ref{ass_ApproxError}, assume that $n\lesssim p$, $\lambda_q(\Psi)\asymp \sqrt p$, $\|\psi\|_2\lesssim 1$ and that $\sup_j\|f_j^0\|_{L_2}<\infty$. Moreover, assume that the matrix $A_{\Psi, \Sigma_E}$ defined in \eqref{eq_DefAPsi} satisfies $\lambda_{\min}(A_{\Psi, \Sigma_E})\gtrsim 1$. Choose $K\asymp \left(n/\log p\right)^{2/5}$. Then, we can choose $\lambda_2$ in the definition \eqref{eq_Lambda} of $\lambda$ such that
\begin{equation}\label{eq_FinalRate}
|\beta_0^0-\hat\beta_0|+\sum_{j=1}^p\|f_j^0-\hat f_j\|_{L_2}=O_P\left(s^2\left(\frac{\log p}{n}\right)^{2/5}\right).
\end{equation}
\end{corollary}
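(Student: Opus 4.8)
The plan is to read Corollary~\ref{cor_FinalRate} as an instantiation of the out-of-sample bound \eqref{eq_RateCorOutSample} of Corollary~\ref{cor_RateOutSample}, in which the explicit form \eqref{eq_BoundRnNew} of the remainder $r_n$ is specialized to the present hypotheses and the smoothing level is then inserted. Since Assumptions~\ref{ass_ConditionsModel0}--\ref{ass_ApproxError} are all assumed, \eqref{eq_RateCorOutSample} holds with probability $1-o(1)$, so it suffices to bound the six terms of \eqref{eq_BoundRnNew} and then to pass from $f_j^*$ to $f_j^0$. The latter is a triangle-inequality step: the best approximation $f_j^*$ vanishes off $\mathcal T$, and $\sup_{j\in\mathcal T}\|f_j^*-f_j^0\|_{L_2}\lesssim K^{-2}$ by Assumption~\ref{ass_ApproxError}, so replacing $f_j^*$ by $f_j^0$ in \eqref{eq_RateCorOutSample} costs an extra $O_P(sK^{-2})$, which is already of lower order than $r_n$.

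The crux is a lower bound on the compatibility constant, which controls the leading term $s\lambda/\tau_n$ of \eqref{eq_DefRn}. I would chain Theorem~\ref{thm_BoundCC} and Theorem~\ref{thm_BoundPopCC}: the former reduces the sample constant $\tau_n$ to the population constant $\tau_0$ with probability $1-o(1)$ (its hypotheses---Assumptions~\ref{ass_Gaussian}, \ref{ass_DimAndPsi} and \ref{ass_CondBasis}---are part of the standing assumptions of the corollary), while the latter gives $\tau_0 \geq \lambda_{\min}(A_{\Psi,\Sigma_E})$. As the corollary assumes $\lambda_{\min}(A_{\Psi,\Sigma_E})\gtrsim 1$, we obtain $\tau_0\gtrsim 1$ and hence $\tau_n\gtrsim 1$, so that $s\lambda/\tau_n=O_P(s\lambda)$.

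The remaining terms are bookkeeping under the stated hypotheses. With $n\lesssim p$, $\lambda_q(\Psi)\asymp\sqrt p$ and $\|\psi\|_2\lesssim 1$, Lemma~\ref{lem_QXB} gives $\frac1n\|Q\mathbf X b\|_2^2\lesssim \frac1p\cdot\frac pn=\frac1n$, so the confounding term is $O_P(1/(\lambda n))$; the approximation and centering terms are $O_P(sK^{-2})$ and $O_P(s/\sqrt n)$ by Assumption~\ref{ass_ApproxError} and the Cauchy--Schwarz/Markov bounds of Section~\ref{sec_FurtherAnalysis}, and their squares produce the second-order terms $s^2/(\lambda K^4)$ and $s^2/(\lambda n)$. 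For the penalty level I set $\lambda=AC_0\sqrt{K\log p/n}+\lambda_2$; since $\|\psi\|_2/\sqrt{1+\lambda_q^2(\Psi)}\lesssim p^{-1/2}\lesssim n^{-1/2}$, the constraint on $\lambda_2$ in \eqref{eq_Lambda} is negligible and $\lambda_2$ can be chosen so that $\lambda\asymp\sqrt{K\log p/n}$, which in turn renders the confounding term $1/(\lambda n)$ of lower order. Substituting the prescribed $K\asymp(n/\log p)^{2/5}$ and this $\lambda$ into the six terms of \eqref{eq_BoundRnNew}, the rate is governed by the interplay between the penalty-driven term $O_P(s\lambda)$ and the squared-approximation term $s^2/(\lambda K^4)$; collecting the dominant contribution yields \eqref{eq_FinalRate}, namely $O_P(s^2(\log p/n)^{2/5})$.

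I expect the reduction of the sample compatibility constant to its population counterpart (the second step, Theorem~\ref{thm_BoundCC}) to be the main obstacle: the spectral transformation $Q$ is linear but is applied to the nonlinear basis evaluations $b_j(\Psi_j^T H+E_j)$, so bounding $\frac1n\|Q\mathbf f^w\|_2^2$ uniformly over the cone $\mathcal F_{M,\mathcal T}^n$ requires the nonlinear-correlation-matrix machinery of \cite{GuoExtremeEigenvalues}, adapted to the confounded design. By comparison, the substitution in the third step is routine once $\tau_n\gtrsim 1$ is available; the only care needed there is to confirm that $\lambda_2$ can be placed above its threshold without inflating the order of $\lambda$, which holds precisely because $\lambda_q(\Psi)\asymp\sqrt p$ makes the confounding threshold asymptotically negligible.
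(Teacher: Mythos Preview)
Your proposal is correct and follows essentially the same route as the paper: invoke Corollary~\ref{cor_RateOutSample} with the explicit bound \eqref{eq_BoundRnNew}, control $\tau_n$ via Theorems~\ref{thm_BoundCC} and~\ref{thm_BoundPopCC} together with the hypothesis $\lambda_{\min}(A_{\Psi,\Sigma_E})\gtrsim 1$, choose $\lambda_2$ so that $\lambda\asymp\sqrt{K\log p/n}$ (possible since $\|\psi\|_2/\sqrt{1+\lambda_q^2(\Psi)}\lesssim p^{-1/2}$), and then plug in the prescribed $K$ so that the term $s^2/(\lambda K^4)$ dominates. Your explicit triangle-inequality step to pass from $f_j^\ast$ to $f_j^0$ at cost $O_P(sK^{-2})$ is a welcome clarification that the paper leaves implicit.
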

The proof of Corollary \ref{cor_FinalRate} can be found in Appendix \ref{sec_ProofFinalRate}.
}

\Revision{
\subsubsection{Discussion of the Convergence Rate}\label{sec_RateDiscussion}
The rate in \eqref{eq_FinalRate} is just an example of the type of convergence rates that we can achieve from Theorem \ref{thm_BoundInSample} and Corollary \ref{cor_RateOutSample}. On one hand, consistency can also be obtained using relaxed assumptions on the confounding, i.e. one can relax the assumptions $\lambda_q(\Psi)\asymp \sqrt p$, $\|\psi\|\lesssim 1$ or $n\lesssim p$, but will obtain a different convergence rate (see also Appendix \ref{sec_AppendixConsistency}).
On the other hand, one can change Assumption \ref{ass_ApproxError}. Assumption \ref{ass_ApproxError} essentially corresponds to the functions $f_j^0$ being twice continuously differentiable with bounded second derivative (see Section \ref{sec_ApproxError}) but one could consider other levels of smoothness instead.

In the unconfounded setting, the minimax optimal $L_2$-error rate for the high-dimensional additive model with twice differentiable $f_j$ is known to be $\sqrt{\frac{\log(p/s)}{n} + s n^{-4/5}}$ \cite{RaskuttiMinimaxOptimalRates, TanDoublyPenalizedEstimation}. From that perspective, our result from equation \eqref{eq_FinalRate} has the correct dependence on $n$ but is slower with respect to the dependence on $s$ and $\log p$. In contrast to the unconfounded setting, we also have stronger restrictions on how fast $s$ is allowed to grow as a function of $n$, $p$, and $K$, the strongest of which comes from the second assertion of Assumption \ref{ass_CondBasis}, which is implied by $s \ll \sqrt{\frac{n}{\log(Kp) K^4}}$ under suitable conditions (see Lemma \ref{lem_ConditionBasis} below). Hence, if $K \asymp \left(n/\log p\right)^{2/5}$, we need that
$s\ll \left(n/\log p)\right)^{1/10}$. In total, both the dependence on $s^2$ in \eqref{eq_FinalRate} and the restriction $s\ll \left(n/\log p)\right)^{1/10}$ may on one hand be an artifact of our proof and also due to our rather simple algorithm. Instead of regularizing smoothness by the number of basis functions, one could use smoothness penalties as done for example in \cite{MeierHDAM, RaskuttiMinimaxOptimalRates, TanDoublyPenalizedEstimation}. On the other hand, it may also be that because of the confounding and the resulting factor structure, the dependence on the sparsity $s$ is indeed worse than what one can achieve in the standard high-dimensional additive model. In particular, the factor structure of $X$ does not allow to make assumptions like $\E[f(X)^2]\asymp\sum_{j=1}^p \E[f_j(X_j)^2]$. To infer such a condition for example from Corollary 1 in \cite{GuoExtremeEigenvalues}, we would need upper bounds on the maximum eigenvalue of the correlation matrix of $X$, which is not well-behaved due to the factor structure of $X$. Note that the spectral transformation $Q$ does not remove this factor structure since it is not applied to $\mathbf X$ itself but only to the nonlinear basis functions $B^{(j)}(\mathbf X_{.j})$. This is in contrast to spectral deconfounding for the high-dimensional linear model, where the spectral transformation is directly applied to $\mathbf X$ and essentially removes the factor structure. Consequently, spectral deconfounding achieves the same error rates as the standard Lasso in the unconfounded high-dimensional linear model \cite{CevidSpectralDeconfounding, GuoDoublyDebiasedLasso}.

\begin{remark}\label{rmk_RemarkFinalRate}
    If we instead allow $K$ to also depend on the (unknown) sparsity $s$, we can choose $K\asymp \left(\frac{ns}{\log p}\right)^{1/5}$, which yields a convergence rate of
$$O_P\left(s^{11/10}\frac{(\log p)^{2/5}}{n^{2/5}}\right),$$
see the proof of Corollary \ref{cor_FinalRate} in Section \ref{sec_ProofFinalRate}.

However, in that case, the restriction on how fast $s$ is allowed to grow becomes stronger, namely $s\ll \left(n/\log p\right)^{1/14}$.
\end{remark}

}

\subsection{Verifying Assumptions}
\subsubsection{On the Approximation Error $\|f_j^\ast-f_j^0\|_{L_2}$}\label{sec_ApproxError}
We now control the approximation error $\|f_j^\ast-f_j^0\|_{L_2}$ under some concrete assumptions. In practice, we would recommend to define $b_j(\cdot)$ as the B-spline basis with knots at the empirical quantiles of $X_j$. However, such a construction seems to be difficult to analyze theoretically (especially for the theory in Section \ref{sec_EVDesign}). For our theoretical considerations, we instead use the following construction.
\begin{assumption}\label{ass_ConstructionBasis}
Let $b_0(\cdot)\in \mathbb R^K$ be the $K$ B-spline basis functions with $K-4$ equally spaced knots in $[0,1]$, see for example \cite{ZhouLocalAsymptoticsForRegressionSplines}. Define $h=1/(K-3)$ to be the distance between two adjacent knots. For $j=1,\ldots, p$, let $F_j(\cdot)=\Prob(X_j\leq\cdot)$ be the distribution function of $X_j$.
\begin{enumerate}
	\item The basis functions $b_j(\cdot)\in \mathbb R^K$ are defined as $b_j(\cdot)=b_0(F_j(\cdot))$.
	\item The functions $F_j$ have continuous inverse $F_j^{-1}$. Moreover, the functions $f_j^0\circ F_j^{-1}:(0,1)\to \mathbb R$ are twice continuously differentiable, can be continuously extended to $[0,1]$ and there exists $C>0$ such that $\sup_{j=1, \ldots, p}\|(f_j^0\circ F_j^{-1})^{(2)}\|_\infty\leq C$.
	\item $K\sqrt{\frac{\log p +\log n}{n}}=o(1).$
\end{enumerate}
\end{assumption}
We expect that basis functions from Assumption \ref{ass_ConstructionBasis} have similar properties to the basis functions used in practice. Note that Assumption \ref{ass_BasisFunctions} (partition of unity) holds for the functions $b_j(\cdot)$ since it holds for the functions $b_0(\cdot)$. Note also that assertion (2) of Assumption \ref{ass_ConstructionBasis} is reasonable, if the functions $f_j^0(x_j)$ converge to a constant for large $|x_j|$.

\begin{lemma}\label{lem_ApproxError}
Under Assumption \ref{ass_ConstructionBasis}, assertions (1) and (2), there exist $(\beta_j^\ast )_{j=1, \ldots, p}$ in $\mathbb R^K$ such that the functions $f_j^\ast(\cdot)=b_j(\cdot)^T\beta_j^\ast$ satisfy $\sup_{j=1,\ldots, p}\|f_j^\ast-f_j^0\|_{L_2}\lesssim K^{-2}$.
\end{lemma}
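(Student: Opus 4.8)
The plan is to transfer the problem to the unit interval via the probability integral transform and then invoke a standard spline-approximation result. First I would change variables: setting $g_j = f_j^0 \circ F_j^{-1} : (0,1) \to \mathbb R$, assertion (2) of Assumption \ref{ass_ConstructionBasis} guarantees that $g_j$ is twice continuously differentiable on $[0,1]$ with $\sup_j \|g_j^{(2)}\|_\infty \leq C$. Since $b_j(\cdot) = b_0(F_j(\cdot))$, any function of the form $b_j(\cdot)^T \beta_j = b_0(F_j(\cdot))^T \beta_j$ equals $\big(b_0(\cdot)^T\beta_j\big) \circ F_j$. Thus approximating $f_j^0 = g_j \circ F_j$ by $b_j^T\beta_j$ is the same as approximating $g_j$ by a spline $b_0(\cdot)^T \beta_j$ in the variable $u = F_j(x)$.

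The key step is the classical approximation-theoretic fact that for a $C^2$ function $g$ on $[0,1]$ with bounded second derivative, there exists a coefficient vector $\beta^\ast$ such that the B-spline approximation $b_0(\cdot)^T\beta^\ast$ satisfies $\|g - b_0(\cdot)^T\beta^\ast\|_{\infty,[0,1]} \lesssim h^2 \|g^{(2)}\|_\infty$, where $h = 1/(K-3)$ is the knot spacing (see e.g. \cite{ZhouLocalAsymptoticsForRegressionSplines}). Applying this to each $g_j$ with the \emph{same} construction $b_0$ and using the uniform bound $\sup_j\|g_j^{(2)}\|_\infty \leq C$, I obtain $\beta_j^\ast$ with $\sup_j \|g_j - b_0(\cdot)^T\beta_j^\ast\|_{\infty,[0,1]} \lesssim h^2 \asymp K^{-2}$, uniformly in $j$. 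I would then choose exactly these $\beta_j^\ast$ to define $f_j^\ast(\cdot) = b_j(\cdot)^T\beta_j^\ast$.

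Finally I would pull the bound back to the $L_2$-norm. Writing out the definition,
\begin{align}
\|f_j^\ast - f_j^0\|_{L_2}^2 &= \E\big[(b_j(X_j)^T\beta_j^\ast - f_j^0(X_j))^2\big]\nonumber\\
&= \E\big[(b_0(F_j(X_j))^T\beta_j^\ast - g_j(F_j(X_j)))^2\big].\nonumber
\end{align}
Since $F_j$ is continuous with continuous inverse, $U_j := F_j(X_j)$ takes values in $(0,1)$, so the integrand is pointwise bounded by $\|g_j - b_0(\cdot)^T\beta_j^\ast\|_{\infty,[0,1]}^2 \lesssim K^{-4}$. Taking expectations and then square roots gives $\sup_j \|f_j^\ast - f_j^0\|_{L_2} \lesssim K^{-2}$, as claimed. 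I expect the main obstacle to be bookkeeping rather than substance: one must ensure the spline approximation constant is genuinely uniform in $j$ (which follows because $b_0$ and the knot placement are fixed and only $g_j^{(2)}$ varies, uniformly bounded) and that moving from the sup-norm bound on $[0,1]$ to the $L_2(X_j)$-bound does not lose anything, which holds because the distribution of $U_j = F_j(X_j)$ is supported in $[0,1]$ regardless of the law of $X_j$. Note that assertion (3) of Assumption \ref{ass_ConstructionBasis} is not needed for this particular statement, since it concerns only the population approximation error.
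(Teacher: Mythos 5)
Your proposal is correct and follows essentially the same route as the paper's proof: transform to the unit interval via $g_j = f_j^0\circ F_j^{-1}$, invoke the classical B-spline sup-norm approximation bound $\lesssim h^2\|g_j^{(2)}\|_\infty$ for $C^2$ functions (the paper cites Theorem (6) in Chapter XII of \cite{DeBoorSplines} rather than \cite{ZhouLocalAsymptoticsForRegressionSplines}, but it is the same classical fact), use the uniform bound on the second derivatives to get uniformity in $j$, and dominate the $L_2$-norm by the sup-norm. Your observation that assertion (3) of Assumption \ref{ass_ConstructionBasis} is not needed here is also consistent with the lemma's statement.
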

The proof can be found in Appendix \ref{sec_ProofApproxError}.
\subsubsection{On the Eigenvalues of Second Moments of the Basis Functions}\label{sec_EVDesign}
We now justify Assumptions \ref{ass_BasisOutSample} and \ref{ass_CondBasis} on the minimal and maximal eigenvalues of the matrices $\Sigma_j=\E[b_j(X_j)b_j(X_j)^T]$ and $\hat\Sigma_j=\frac{1}{n}(B^{(j)})^TB^{(j)}$.
The following Lemma is a variant of Lemmas 6.1 and 6.2 in \cite{ZhouLocalAsymptoticsForRegressionSplines}.
\begin{lemma}\label{lem_ConditionBasis}
Under Assumption \ref{ass_ConstructionBasis}, assertions (1) and (3), there exist constants $0<M_1, M_2<\infty$ and a random variable $S_n\geq 0$ with $S_n=o_P(h)$ such that for all $j=1,\ldots, p$,
\begin{align}
\lambda_{\max}(\Sigma_j)&\leq M_1 h, &\lambda_{\min}(\Sigma_j)&\geq M_2 h\label{eq_EVBasisPop}\\
\lambda_{\max}(\hat\Sigma_j)&\leq M_1 h+ S_n, &\lambda_{\min}(\hat\Sigma_j)&\geq M_2h- S_n.\label{eq_EVBasisSam}
\end{align}
In particular, Assumption \ref{ass_BasisOutSample} and assertion (3) of Assumption \ref{ass_CondBasis} are fulfilled. Moreover, one can replace assertion (2) of Assumption \ref{ass_CondBasis} by $sK^2\sqrt{\frac{\log(Kp)}{n}}=o(1)$.
\end{lemma}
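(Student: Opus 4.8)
The plan is to exploit the probability integral transform to collapse all the population matrices $\Sigma_j$ into a single, $j$-independent B-spline Gram matrix, and then to control the sample versions $\hat\Sigma_j$ uniformly in $j$ through a banded-matrix concentration argument. First I would use assertion (1) of Assumption \ref{ass_ConstructionBasis}, namely $b_j(\cdot)=b_0(F_j(\cdot))$: since $X_j$ has a continuous distribution, $F_j(X_j)$ is uniform on $[0,1]$, so that
\[
\Sigma_j=\E\big[b_0(F_j(X_j))\,b_0(F_j(X_j))^T\big]=\int_0^1 b_0(u)\,b_0(u)^T\,du=:\Sigma_0
\]
is the \emph{same} matrix for every $j$. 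The population bounds \eqref{eq_EVBasisPop} then follow verbatim from the standard spectral estimates for Gram matrices of equally spaced B-splines (Lemmas 6.1 and 6.2 in \cite{ZhouLocalAsymptoticsForRegressionSplines}), giving constants $0<M_2\le M_1<\infty$ with $M_2h\le\lambda_{\min}(\Sigma_0)\le\lambda_{\max}(\Sigma_0)\le M_1h$.

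For the sample bounds \eqref{eq_EVBasisSam}, the structural fact I would rely on is that order-$4$ B-splines are locally supported: at most four of the functions $b_0^k$ are nonzero at any point, each lies in $[0,1]$, and each has support of Lebesgue measure $O(h)$. Hence both $\Sigma_0$ and every $\hat\Sigma_j$ are banded with bandwidth at most $3$, so $\|\hat\Sigma_j-\Sigma_0\|_{op}$ is bounded, up to the (constant) band width, by the entrywise deviation $\|\hat\Sigma_j-\Sigma_0\|_\infty$. Writing $u_{i,j}=F_j(x_{i,j})$, each band entry is an average of i.i.d.\ summands $b_0^k(u_{i,j})b_0^l(u_{i,j})\in[0,1]$. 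The crucial point is that the variance of such a summand is $O(h)$, because $b_0^kb_0^l$ vanishes outside a set of measure $O(h)$; this is what produces the factor $\sqrt h$ rather than a constant in the rate. Bernstein's inequality with variance proxy $\asymp h$, together with a union bound over the $O(Kp)$ nonzero band entries $(k,l,j)$, then yields
\[
S_n:=\max_{j=1,\ldots,p}\|\hat\Sigma_j-\Sigma_0\|_{op}\lesssim\sqrt{\frac{h\log(Kp)}{n}}
\]
with high probability, and Weyl's inequality gives \eqref{eq_EVBasisSam} with the same $M_1,M_2$. Since $h\asymp 1/K$, we have $S_n/h\lesssim\sqrt{K\log(Kp)/n}=o(1)$ under assertion (3) of Assumption \ref{ass_ConstructionBasis}, hence $S_n=o_P(h)$.

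The three consequences follow on the high-probability event $\{S_n\le(M_2/2)h\}$. There $\lambda_{\min}(\hat\Sigma_j)\ge(M_2/2)h$ and $\lambda_{\max}(\hat\Sigma_j)\le 2M_1h$ uniformly in $j$, so $\lambda_{\max}(\hat\Sigma_j)/\lambda_{\min}(\hat\Sigma_j)\le 4M_1/M_2$ (assertion (3) of Assumption \ref{ass_CondBasis}) and $\lambda_{\max}(\Sigma_j)/\lambda_{\min}(\hat\Sigma_j)\le 2M_1/M_2$ (Assumption \ref{ass_BasisOutSample}). For the replacement of assertion (2) of Assumption \ref{ass_CondBasis}, note $1/\min_j\lambda_{\min}(\hat\Sigma_j)\lesssim 1/h\asymp K$, so the required bound $1/\min_j\lambda_{\min}(\hat\Sigma_j)=o_P\big(\sqrt{n/\log(Kp)}/(Ks)\big)$ is implied by $K\lesssim\sqrt{n/\log(Kp)}/(Ks)$, i.e.\ exactly by $sK^2\sqrt{\log(Kp)/n}=o(1)$.

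I expect the main obstacle to be the uniform concentration in the display for $S_n$: obtaining the sharp $\sqrt h$ factor requires the variance bound $O(h)$ coming from the local support of the splines, and a naive bound using only boundedness (Hoeffding) would be too lossy to conclude $S_n=o_P(h)$ under assertion (3). The remaining care is bookkeeping — verifying that the bandwidth is truly $O(1)$ so the entrywise-to-operator conversion costs only a constant, and checking that the chosen deviation level $t\asymp\sqrt{h\log(Kp)/n}\ll h$, so that Bernstein operates in its variance-dominated regime.
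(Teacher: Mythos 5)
Your proof is correct, and it reaches the same conclusions as the paper, but the way you control the sample matrices is genuinely different from the paper's argument. Both proofs begin identically: the probability integral transform makes $\Sigma_j=\int_0^1 b_0(u)b_0(u)^T\,du$ the same for every $j$, and the population bounds \eqref{eq_EVBasisPop} are taken from Lemmas 6.1 and 6.2 of \cite{ZhouLocalAsymptoticsForRegressionSplines}. For the sample bounds \eqref{eq_EVBasisSam}, however, the paper stays inside the Zhou--Shen--Wolfe proof structure: it shows that the eigenvalue perturbation is controlled by $S_n=C\sup_{j}\sup_{y\in[0,1]}|Q_n^j(y)-y|$, the Kolmogorov--Smirnov distance between the empirical and the uniform CDF of the transformed covariates, uniformly in $j$, and then bounds this statistic by discretizing $[0,1]$ into $M=\sqrt n$ grid points, using monotonicity of CDFs and a union bound, which gives $S_n=O_P\bigl(\sqrt{(\log p+\log n)/n}\bigr)$; assertion (3) of Assumption \ref{ass_ConstructionBasis} is then exactly what makes this $o_P(h)$. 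You instead concentrate $\hat\Sigma_j-\Sigma_0$ entrywise with Bernstein's inequality, exploiting the $O(h)$ variance coming from the local support of the splines, and pass to the operator norm through the bandwidth-$3$ band structure. This yields the sharper bound $S_n=O_P\bigl(\sqrt{h\log(Kp)/n}\bigr)$, so that $S_n=o_P(h)$ already holds under $K\log(Kp)/n=o(1)$, which is strictly weaker than the requirement $K^2(\log p+\log n)/n=o(1)$ used by the paper. The trade-off is clear: the paper's route recycles the existing spline lemmas and needs no new spline-specific verification beyond uniformity in $j$, while your route is self-contained and quantitatively sharper, at the cost of checking the band structure, the variance bound, and that the deviation level $t\asymp\sqrt{h\log(Kp)/n}\ll h$ keeps Bernstein in its variance-dominated regime (all of which you do correctly). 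The derivation of the three stated consequences from the eigenvalue bounds is essentially identical in both arguments.
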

The proof can be found in Appendix \ref{sec_ProofConditionBasis}.

\subsection{Comparison with Factor Models}\label{sec_CompFactorModels}
\Revision{Various works have considered variants of model \ref{eq_additive}, where $f^0$ is a linear function, see for example \cite{KneipFactorModels, FanFactorAdjustedRegularized, FanAreLatent}. They assume a factor model for the covariates and include the estimated factors as additional predictors in the high-dimensional regression model. Although the motivation of those methods often is not mainly hidden confounding but obtaining better model selection and prediction, those methods solve a similar problem.

In this spirit, an alternative approach to estimate $f^0$ in model \eqref{eq_additive} would be to estimate the confounding $\mathbf H$ and fit a standard additive model for $\mathbf X$ with an additional linear term in the estimate $\hat{\mathbf H}$ of $\mathbf H$. Such a method is also implemented for the experiments in Section \ref{sec_Experiments} as a comparative method. There, we can observe that this method works very well as long as the covariates very clearly follow a factor model, but our deconfounding methodology proves to be more stable when the factor structure is less clear, {i.e. there is no clear gap in the spectrum of $\mathbf X$.}  From the theoretical side, a natural question is if our assumptions (in particular Assumption \ref{ass_DimAndPsi}) are weaker than what is required to get a consistent estimate of $\mathbf H$. When the confounding dimension $q$ is known, it follows from Lemma \ref{lem_HatH} in Appendix \ref{sec_ProofBoundCC} that it is indeed possible to get a consistent estimate $\hat{\mathbf H}$ of $\mathbf H$ up to rotation, even though our assumptions are weaker than the standard factor model assumptions $\lambda_q(\Psi)\asymp \sqrt{p}$. In practice, one does not know $q$, but needs to estimate it from the data. 
It was pointed out by a reviewer that one should expect that the factor dimension $q$ can be identified as soon as $\lambda_q(\Psi)\gg \sqrt{\max(1, p/n)}$\footnote{If we assume $\lambda_{\max}(\Sigma_E)$ and $\lambda_1(\Psi)/\lambda_q(\Psi)$ being bounded and $q \ll \min(n, p)$, it follows that $\|\mathbf E\|_{op}\lesssim \sqrt{\max(n, p)}$. By Weyl's inequality, $|\lambda_q(\Psi^T\mathbf H^T\mathbf H\Psi)-n\lambda_q(\Psi^T\Psi)|\leq  n\|\Psi\|_{op}^2 \|\mathbf H^T\mathbf H/n-I_q\|_{op} = C n\lambda_q(\Psi)^2 o_P(1)$. Hence, $|\lambda_q(\mathbf H \Psi)- \sqrt n \lambda_q(\Psi)|\lesssim \sqrt n \lambda_q(\Psi)^2 o_P(1)$  and $\lambda_q(\mathbf H\Psi)\asymp \lambda_q(\Psi)\sqrt n$. Again by Weyl's inequality for singular values, it holds that $|\lambda_q(\mathbf X)-\lambda_q(\mathbf H\Psi)|\leq \|\mathbf E\|_{op}$ and $\lambda_{q+1}(\mathbf X) = |\lambda_{q+1}(\mathbf X)-\lambda_{q+1}(\mathbf H\Psi)|\leq \|\mathbf E\|_{op}$. Hence, if $\sqrt{n}\lambda_q(\Psi)\gg \sqrt{\max(n, p)}$, we have that $\lambda_{q+1}(\mathbf X)$ is of strictly smaller order than $\lambda_q(\mathbf X)$, so asymptotically, we would expect that $q$ can be identified by thresholding the singular values of $\mathbf X$.}, which is implied by our Assumption \ref{ass_DimAndPsi}.
In practice, various methods have been proposed to estimate $q$, each coming with a slightly different set of assumptions. We refer to \cite{OwenBiCV} for a systematic review. In our simulations below, we will use the eigenvalue ratio method \cite{AhnEigenvalueRatioTest, LamFactorModeling}, that simply picks $\hat q$ that maximizes the ratio $\lambda_l(\mathbf X\mathbf X^T)/\lambda_{l+1}(\mathbf X\mathbf X^T)$ of adjacent eigenvalues of $\mathbf X\mathbf X^T$. In \cite{AhnEigenvalueRatioTest}, the consistency of this estimator is proved only under conditions similar to the standard factor model assumption.\footnote{To be precise, Assumption A (i) in \cite{AhnEigenvalueRatioTest}, requires that $\lambda_q(\Psi\Psi^T/p \mathbf H^T\mathbf H/n)$ converges to a finite value. Since $\mathbf H^T\mathbf H/n\to \E[HH^T]=I_q$, this implies that $\lambda_q(\Psi)\asymp \sqrt p$.} Other methods have weaker assumptions on the factor strength but assume that $\Sigma_E$ is diagonal, $q$ is fixed or $p/n$ converges to some finite constant \cite{OwenBiCV, DobribanPermutationMethods, DobribanDeterministicPA, FanEstimatingNumberOfFactors, OnatskiDeterminingTheNumberOfFactors}.
To summarize: while Assumption \ref{ass_DimAndPsi} is strong enough such that given the dimension, the factors can be estimated consistently, we are not aware of a method to estimate the factor dimension that is proved to work in every possible scenario covered by our assumptions. Nevertheless, there is good reason to believe that our assumptions are strong enough such that also the factor dimension $q$ can be consistently estimated. In Section \ref{sec_Experiments}, we will see that in practice it is advantageous to use our method, as soon as the factor structure is not so clear. The simple reason for this is that for finite samples, we can always find the median singular value, whereas it can be harder to find the right gap in the spectrum of $\mathbf X$.


Very recently, \citet{FanFactorAugmented} proposed a neural network estimator that also assumes a factor model $X = \Psi^T H + E$ for the covariates and a response $Y = m^\ast(H, E_{\mathcal J}) + \epsilon_i$, where $\mathcal J$ is the active set. As a special case, they also consider the high-dimensional additive model (see Appendix B there). However, they look at the problem from a different angle and their method is distinctively different from ours. }Their goal mainly is prediction of $Y$, whereas we want to estimate the functional dependence of $Y$ on $X$. On the more technical side, their asymptotic results are on the minimizer of an objective function (equation (B.2) in their work) over some space of deep ReLU networks, where it is not clear that a concrete implementation indeed finds those minimizers. In contrast, our method only relies on an ordinary group lasso optimization. Since the method in \cite{FanFactorAugmented} does not exactly estimate the function $f^0$, but a function depending on the factors $H$ and the errors $E$, we cannot directly compare the convergence rates. However, note that the squared $L_2$-rate given in Theorem 6 in their work (for $\gamma^\ast = 2$) is at least $O(s^2(\log^6 n/n)^{4/9})$ which is significantly slower \Revision{with respect to $n$} than our rate from Corollary \ref{cor_FinalRate}. This slower rate is attributed to the lack of a restricted strong convexity condition, whereas we investigate this issue in detail and actually provide a lower bound on the compatibility constant when the vector $(X^T, H^T)^T$ is jointly Gaussian (see Section \ref{sec_Compatibility}).

\section{Experiments}\label{sec_Experiments}
\subsection{Practical Considerations}
We implement Algorithm \ref{alg_HDAM} from Section \ref{sec_Method}.
For our implementation, we choose $b_j(\cdot)$ to be the vector of $K$ B-spline basis functions with knots at the empirical quantiles of $\mathbf X_{\cdot, j}$, $j=1,\ldots, p$. The method depends on the choice of the two tuning parameters $\lambda$ and $K$ which control sparsity and smoothness. In principle, one could also regard the trimming threshold for $Q=Q^\text{trim}$ as a tuning parameter, but as argued in \cite{CevidSpectralDeconfounding} and \cite{GuoDoublyDebiasedLasso}, it is usually sufficient to use the median singular value of $\mathbf X$ as trimming threshold. We use a 5-fold cross-validation scheme to choose the optimal $(K_0,\lambda_0)$ from a two-dimensional grid. Afterwards, we fix $K_0$ and select the optimal $\lambda$ for $K_0$ using cross-validation with a finer grid for $\lambda$, since we believe that the choice of $\lambda$ is more important than the choice of $K$. We do the cross-validation on the transformed data: we calculate the spectral transformation $Q$ on the full data $\mathbf X$ and choose $K$ and $\lambda$ to minimize the prediction error of $Q\mathbf Y$ by rows of $(QB^{(1)}, \ldots, Q B^{(p)})$. If we used cross-validation on the untransformed data, we would also fit the confounding effect $H^T\psi$, which would result in biased estimates. Doing the cross-validation on the transformed data has the disadvantage that the rows of the data $Q\mathbf Y$ and $(QB^{(1)}, \ldots, Q B^{(p)})$ are not independent anymore. However, it seems to perform reasonably well in practice.

\Revision{As a comparative method, we also implement an ad hoc method that explicitly estimates the confounding variables from $\mathbf X$, see also Section \ref{sec_CompFactorModels}. For this, we first estimate $q$ using the eigenvalue ratio method \cite{AhnEigenvalueRatioTest, LamFactorModeling}, which is also used in \cite{FanAreLatent} in a linear version of this procedure. Then, we use the eigenvectors of $\mathbf X \mathbf X^T$ associated with the $\hat q$ largest eigenvalues as an estimate $\hat{\mathbf H}$. The estimate $\hat{\mathbf H}$ is then added as an unpenalized linear term together with the basis functions into the group lasso objective. More details are given in Algorithm \ref{alg_AdHoc}.

\begin{algorithm}
\caption{Estimated factors method for high-dimensional additive model with confounding}\label{alg_AdHoc}
\begin{flushleft}
 \textbf{Input:} Data $\mathbf X\in \mathbb R^{n\times p}$, $\mathbf Y\in \mathbb R^n$, tuning parameters $\lambda$ and $K$, vectors $b_j(\cdot)$ of $K$ basis functions, $j=1,\ldots, p$.\\
 \textbf{Output:} Intercept $\hat\beta_0$ and functions $\hat f_j(\cdot)$, $j=1,\ldots, p$.
\end{flushleft}
\begin{algorithmic}
\State $r\gets \min(n, p)$
\State $D \gets \diag(d_1^2, \ldots, d_r^2, 0,\ldots, 0)$, with $d_1^2\geq \ldots\geq d_r^2$ eigenvalues of $\mathbf X \mathbf X^T$ 
\State $U\gets$ matrix of eigenvectors of $\mathbf X \mathbf X^T$ (i.e. $\mathbf X \mathbf X^T = U D U^T$)
\State $\hat q \gets \arg \max_{l = 1,\ldots, \lceil r/2 \rceil}\frac{d_l^2}{d_{l+1}^2}$ \Comment{Eigenvalue ratio method}
\State $\hat {\mathbf H} \gets \sqrt{n}(U_1,\ldots, U_{\hat q}) \in \mathbb R^{n\times \hat q}$ \Comment{First $\hat q$ columns of $U$}
\State $B^{(j)}\gets (b_j(x_{1,j}),\ldots, b_j(x_{n,j}))^T\in \mathbb R^{n\times K}$
\State Find $R_j\in \mathbb R^{K\times K}$ such that $R_j^T R_j=\frac{1}{n} (B^{(j)})^T B^{(j)}$ \Comment{Cholesky decomposition}
\State $\tilde B^{(j)} \gets B^{(j)} R_j^{-1}$
\State $(\hat \beta_0, \hat{\tilde\beta}_1,\ldots, \hat{\tilde \beta}_p, \hat\gamma)=\arg\min\{\|\mathbf Y-\beta_0 \mathbf 1_n -\sum_{j=1}^p \tilde B^{(j)}\tilde\beta_j- \hat{\mathbf H} \gamma\|_2^2/n+\lambda\sum_{j=1}^p\|\tilde \beta_j\|_2\}$ \Comment{{Group lasso}}
\State $\hat\beta_j \gets R_j^{-1}\hat{\tilde\beta}_j$, $j=1,\ldots, p$
\State $\hat f_j(\cdot)\gets b_j(\cdot)^T\hat\beta_j$
\end{algorithmic}
\end{algorithm}
}

\Revision{In the following we will refer to our proposed method as the \textit{deconfounded method} and to the ad hoc method explicitly estimating the confounders as the \textit{estimated factors method}. Additionally, we compare the performances to the classical method for fitting high-dimensional additive models. This method is implemented by setting $Q=I_{n}$ in Algorithm \ref{alg_HDAM} and we will refer to it as the \textit{naive method}.

The code to reproduce the analysis and the figures is available on GitHub.\footnote{\url{https://github.com/cyrillsch/Deconfounding_for_HDAM}}}

\subsection{Simulation Results}\label{sec_SimResults}
\Revision{We use the simulation setting below for model \eqref{eq_additive}. We consider two variants of the setup. In the variant \textit{equal confounding influence}, all the components $H_l$, $l=1,\ldots, q$ of the confounder have the same influence on $X$. In the setting \textit{decreasing confounding influence}, the influence of the component $H_l$, $l=1,\ldots, q$ is proportional to $1/l$. In Figure \ref{fig_SVDExample}, we plot the singular values of $\mathbf X$ generated according to the two settings. We expect the deconfounded method using the trim transform to perform equally well in both settings. We expect the estimated factors method to perform well in the setting \textit{equal confounding influence}, as the first $q$ singular values of $\mathbf X$ are clearly separated from the remaining singular values and hence it is easy to consistently estimate the dimension of the confounder $H$. For the setting \textit{decreasing confounding influence}, we do not expect the estimated factors method to perform particularly well as the first $q$ singular values of $\mathbf X$ are not clearly separated from the remaining singular values.} 

\begin{figure}
\centering
\includegraphics[width=0.91\textwidth]{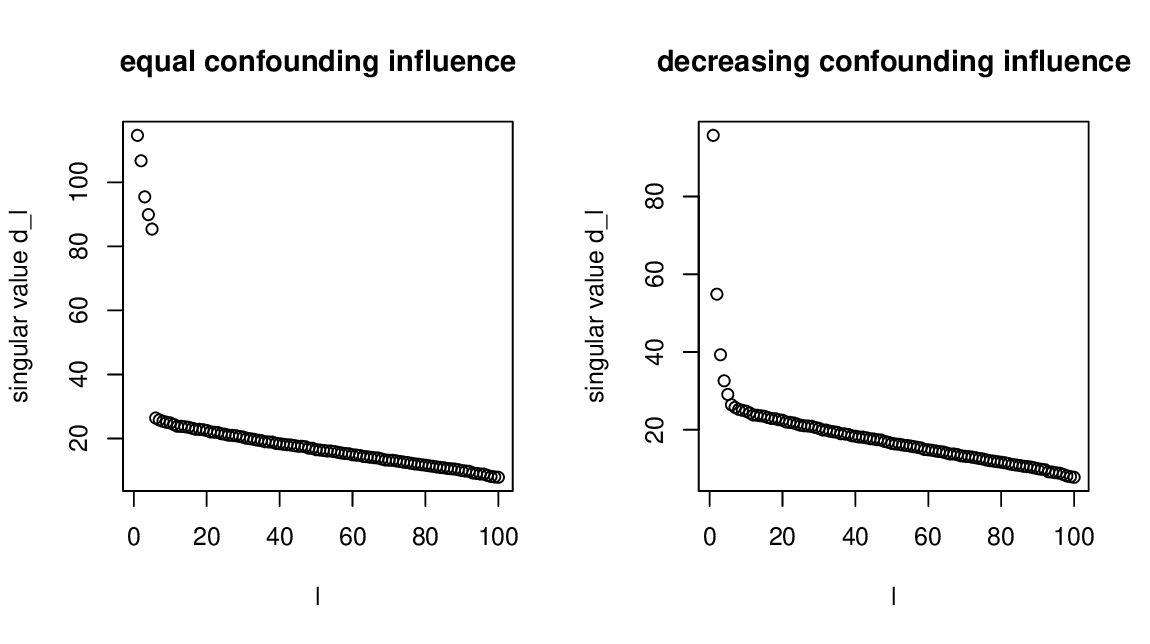}
\caption{Singular values of $\mathbf X$ for data generated according to the settings \textit{equal confounding influence} and \textit{decreasing confounding influence} with $n = 100$, $p = 300$ and $q = 5$.}
\label{fig_SVDExample}
\end{figure}

\begin{description}
	\item[Coefficients:] \Revision{For $l = 1,\ldots, q$, the entries of the $l$th row of $\Psi\in \mathbb R^{q\times p}$ are sampled i.i.d. $\textup{Unif}[-\alpha_l,\alpha_l]$. We use two different settings for $\alpha_l$: the setting \textit{equal confounding influence} has $\alpha_l = 1$, $l = 1, \ldots, q$. The setting \textit{decreasing confounding influence} has $\alpha_l = 1/l$, $l = 1, \ldots, q$.} The entries of $\psi\in\mathbb R^q$ are sampled i.i.d. $\textup{Unif}[0,2]$ \Revision{for both settings.}
	\item[Random variables:] The confounder $H\in \mathbb R^q$ is distributed according to $\mathcal N_q(0, I_{q})$. The unconfounded error $E\in \mathbb R^p$ is distributed according to $\mathcal N_p(0, \Sigma_E)$. The error $e$ is distributed according to $\mathcal N(0, 0.5^2)$.
	\item[Model:] The random vector $X\in \mathbb R^p$ and the random variable $Y\in \mathbb R$ are calculated from model \eqref{eq_additive} with the additive function $f^0(X)=\sum_{j=1}^4 f_j^0(X_j)$ with $f_1^0(x)=-\sin(2x)$, $f_2^0(x)=2-2\tanh(x+0.5)$, $f_3^0(x)=x$ and $f_4^0(x)= 4/(e^x+e^{-x})$.
\end{description}
In the following, we simulate data according to this setup with varying parameters $n$, $p$, and $\Sigma_E$. For each setting, we simulate $n_{rep}=100$ data sets. \Revision{We apply the deconfounded method (Algorithm \ref{alg_HDAM} with $Q=Q^{trim}$) and compare it to the ad hoc method (Algorithm \ref{alg_AdHoc}) and the naive method (Algorithm \ref{alg_HDAM} with $Q=I_{n}$).} We provide violin plots of the mean squared errors $\|\hat f-f^0\|_{L_2}^2$ (with respect to the respective distribution of $X$) and the size of the estimated active set.

\subsubsection{Varying $n$}\label{sec_VarN}
\Revision{In the following, we fix $p=300$ and $q=5$ and vary $n$ between $n=50$ and $n=800$. For each $n$, we simulate $100$ data sets according to the settings \textit{equal confounding influence} and \textit{decreasing confounding influence}. In Figures \ref{fig_VarNIndepEqualCI} and \ref{fig_VarNIndepDecreasingCI}, we see the resulting MSE of $\hat f$ on top and the size of the estimated active set on the bottom for the covariance matrix $\Sigma_E=I_{p}$. 

For the setting \textit{equal confounding influence}, we observe that both the deconfounded method and the estimated factors method clearly outperform the naive method in terms of MSE. Moreover, the estimated factors method seems to perform slightly better in terms of MSE than the deconfounded method. This is no surprise, since in the setting \textit{equal confounding influence}, the data is generated such that the confounders $\mathbf H$ can very well be estimated from $\mathbf X$. Moreover, all the methods seem to overestimate the size of the active set as the true size is only 4, however for the deconfounded method and the estimated factors method, this effect is much less severe.

For the setting \textit{decreasing confounding influence}, however, only the deconfounded method retains the good performance in terms of MSE and variable screening (size of estimated active set), whereas the estimated factors method shows a similar performance as the naive method. The reason is that in this setting, it is much harder for the estimated factors method to obtain a good estimate of the factors and their dimension and hence it is not successful at removing the confounding effect.

In Appendix \ref{sec_ToeplitzCov}, we consider the same simulation scenario but with $E$ having a Toeplitz covariance structure instead of having i.i.d. entries. However, the general picture is the same.}

\begin{figure}
\centering
\includegraphics[width=0.91\textwidth]{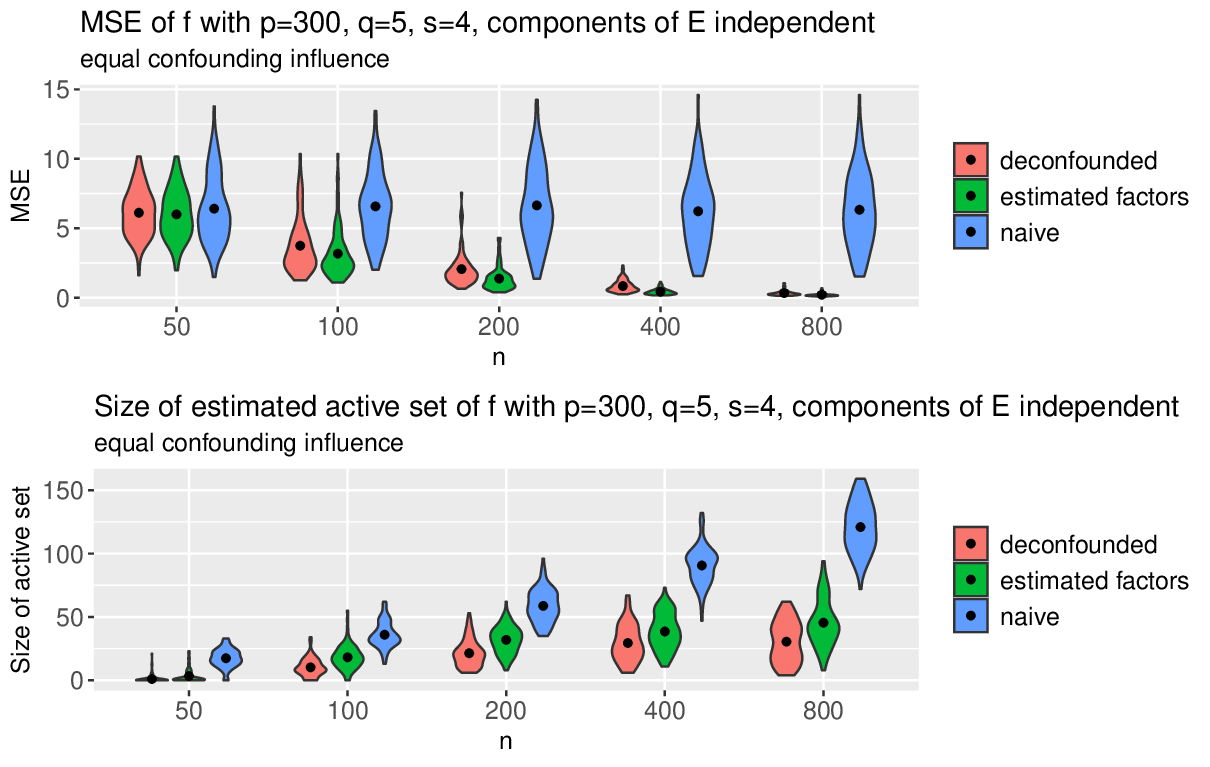}
\caption{MSE (top) and size of estimated active set (bottom) for $\Sigma_E=I_{p}$ and varying $n$ in the setting \textit{equal confounding influence}. 
}
\label{fig_VarNIndepEqualCI}
\end{figure}

\begin{figure}
\centering
\includegraphics[width=0.91\textwidth]{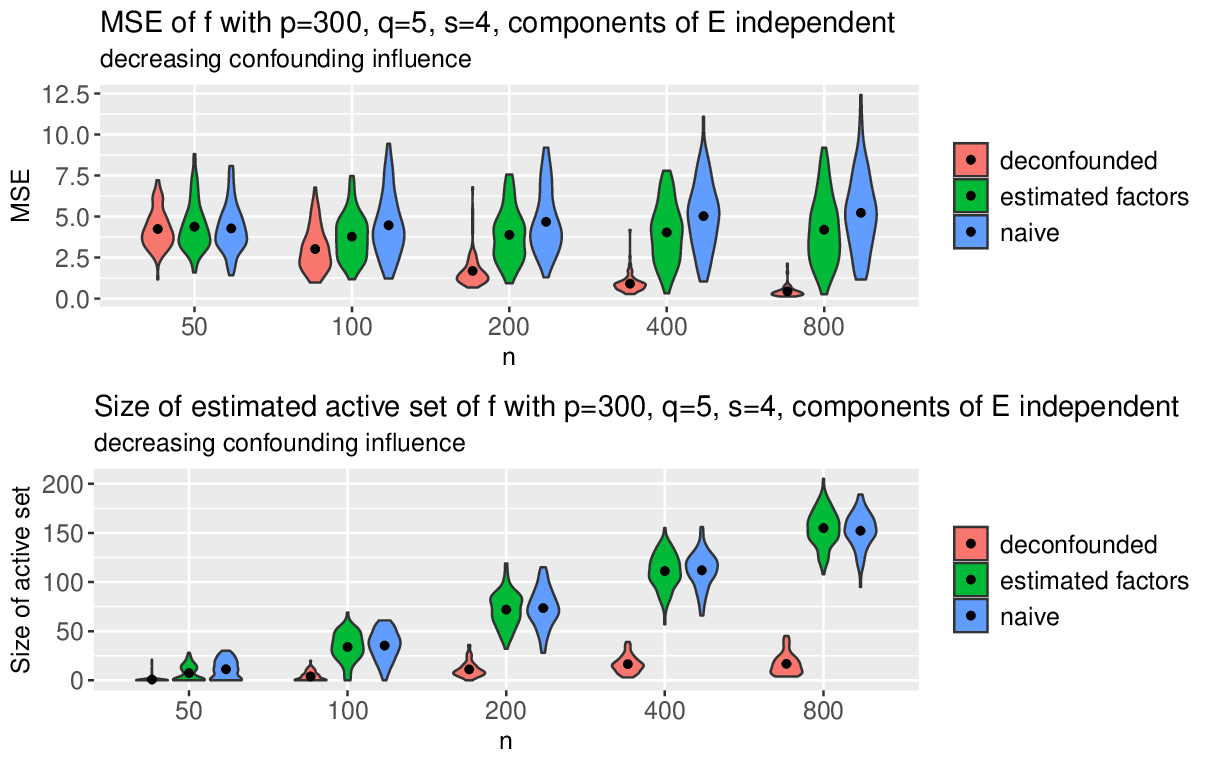}
\caption{MSE (top) and size of estimated active set (bottom) for $\Sigma_E=I_{p}$ and varying $n$ in the setting \textit{decreasing confounding influence}. 
}
\label{fig_VarNIndepDecreasingCI}
\end{figure}

\subsubsection{Varying $p$}\label{sec_VarP}
\Revision{
Here, we fix $n=300$ and $q=5$ and vary $p$ between $p=50$ and $p=800$. For each $p$, we simulate $100$ data sets according to the settings \textit{equal confounding influence} and \textit{decreasing confounding influence}. In Figures \ref{fig_VarPIndepEqualCI} and \ref{fig_VarPIndepDecreasingCI}, we see the resulting MSE of $\hat f$ on top and the size of the estimated active set on the bottom for the covariance matrix $\Sigma_E=I_{p}$. 

The picture is similar to before: in the setting \textit{equal confounding influence}, both the deconfounded method and the estimated factors method perform well in terms of MSE and of the size of the estimated active set, whereas in the setting \textit{decreasing confounding influence}, only the deconfounded method performs significantly better than the naive method. Again, the same simulations with more general covariance structure for $E$ are provided in Appendix \ref{sec_ToeplitzCov}.}
\begin{figure}
\centering
\includegraphics[width=0.91\textwidth]{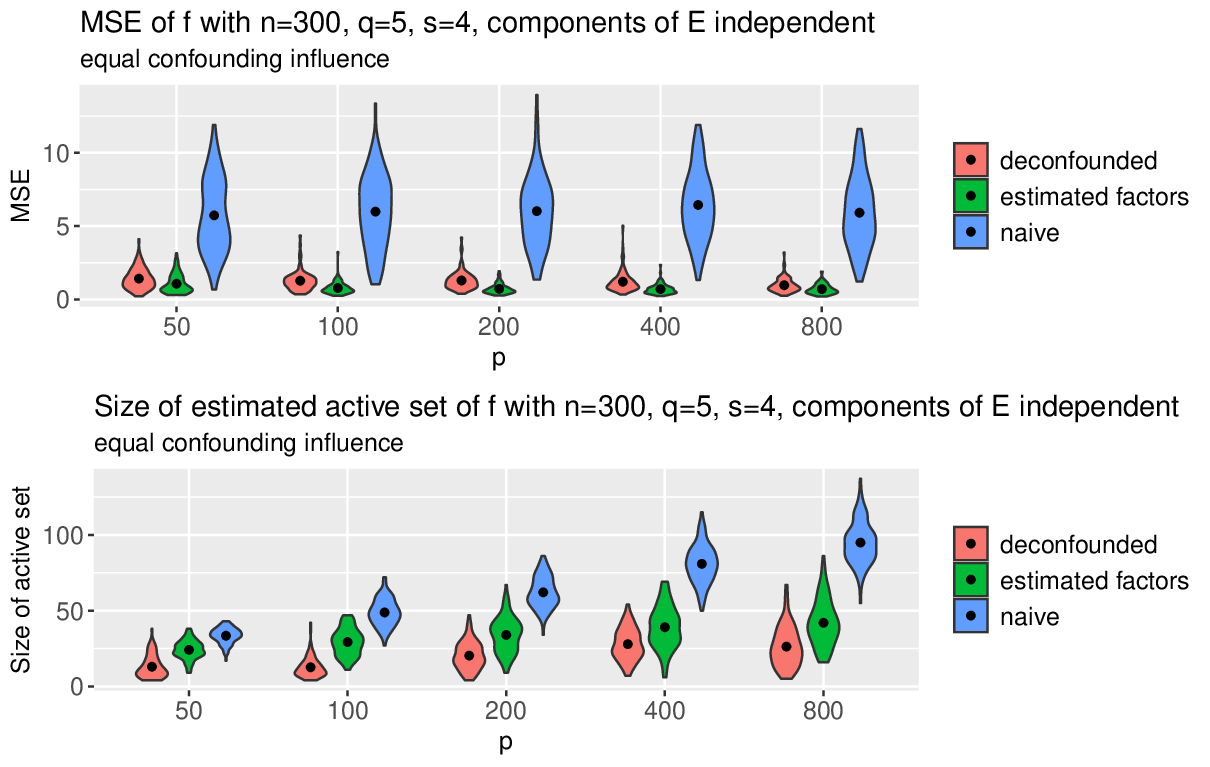}
\caption{MSE (top) and size of estimated active set (bottom) for $\Sigma_E=I_{p}$ and varying $p$ in the setting \textit{equal confounding influence}. 
}
\label{fig_VarPIndepEqualCI}
\end{figure}

\begin{figure}
\centering
\includegraphics[width=0.91\textwidth]{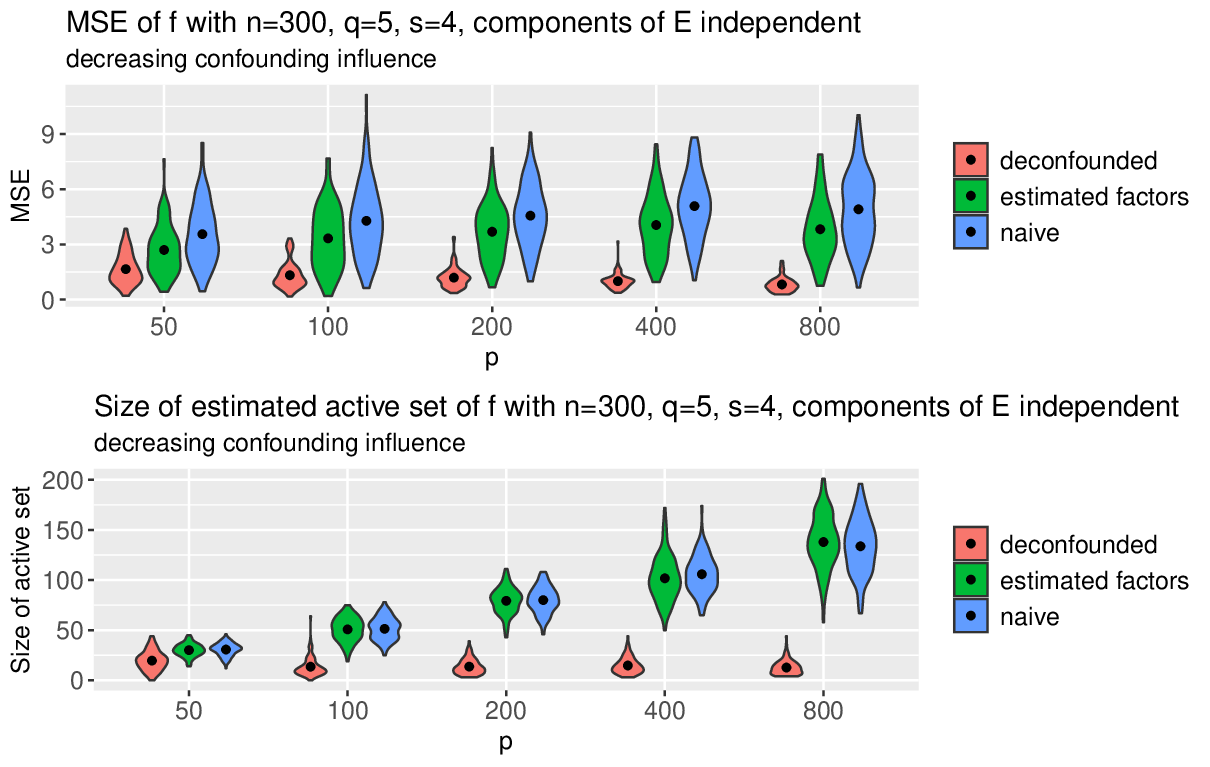}
\caption{MSE (top) and size of estimated active set (bottom) for $\Sigma_E=I_{p}$ and varying $p$ in the setting \textit{decreasing confounding influence}. 
}
\label{fig_VarPIndepDecreasingCI}
\end{figure}

\subsubsection{Varying the Strength of Confounding}
Here, we fix $n=400$, $p=500$, $q=5$ and $\Sigma_E=I_{p}$. We also use the previous settings but vary the strength of confounding on $Y$, i.e. the entries of $\psi\in \mathbb R^q$ are sampled i.i.d. $\textup{Unif}[0, \mathsf{cs}]$ with the confounding strength $\mathsf{cs}$ between $0$ and $3$. For each value of $\mathsf{cs}$, we simulate $100$ data sets. \Revision{In Figures \ref{fig_VarCSEqualCI} and \ref{fig_VarCSDecreasingCI}, we see the resulting MSE of $\hat f$ on top and the size of the estimated active set on the bottom for the settings \textit{equal confounding influence} and \textit{decreasing confounding influence}, respectively. Comparing the deconfounded method to the naive method, we observe that for very small confounding strength ($\mathsf{cs}\leq 0.5$), the deconfounded method performs slightly worse than the naive method in both the \textit{equal confounding influence} and \textit{decreasing confounding influence} settings. This is to be expected since by using the trim transformation we lose a bit of signal. However, as the confounding increases, the deconfounded method is much more robust than the naive method. Comparing the deconfounded method to the estimated factors method, we observe that as before, in the setting \textit{equal confounding influence}, the estimated factors method performs slightly better than the deconfounded method. However, in the setting \textit{decreasing confounding influence}, only the deconfounded method manages to remove the confounding effect, whereas the estimated factors method has comparable performance to the naive method.}

\begin{figure}
\centering
\includegraphics[width=0.91\textwidth]{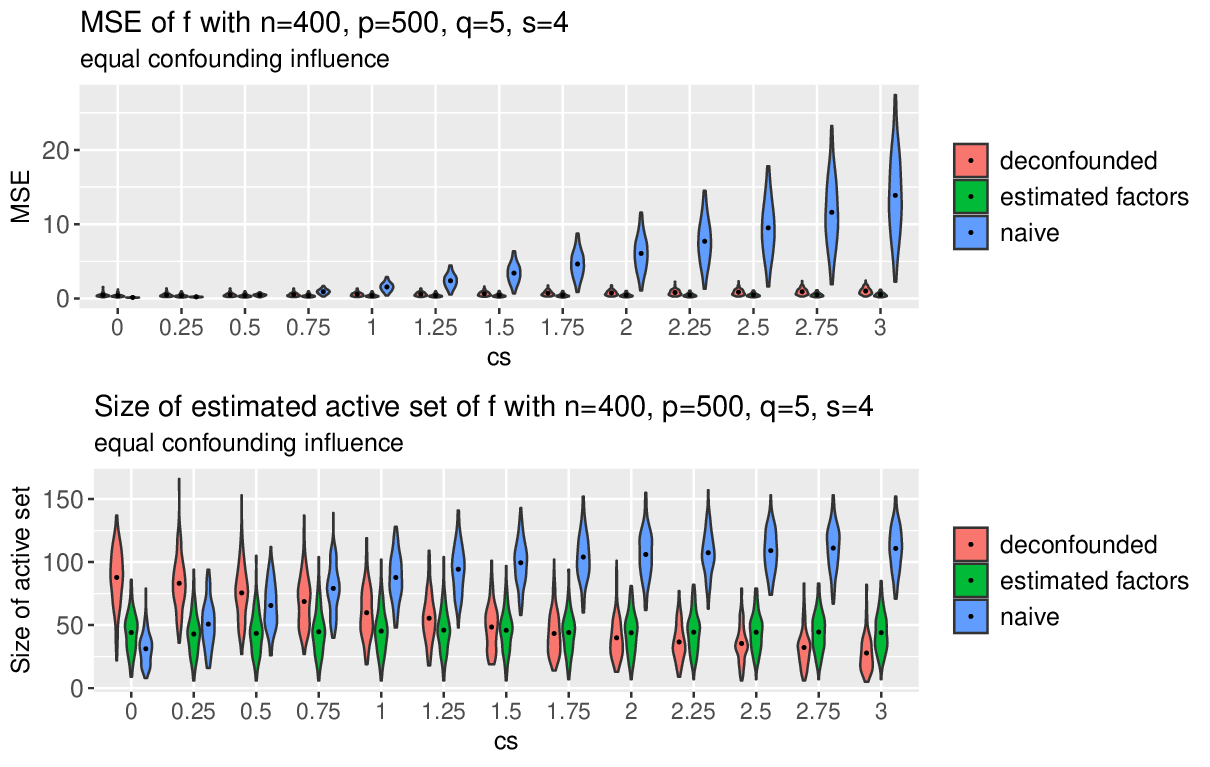}
\caption{MSE (top) and size of estimated active set (bottom) for varying confounding strength in the setting \textit{equal confounding influence}.}
\label{fig_VarCSEqualCI}
\end{figure}

\begin{figure}
\centering
\includegraphics[width=0.91\textwidth]{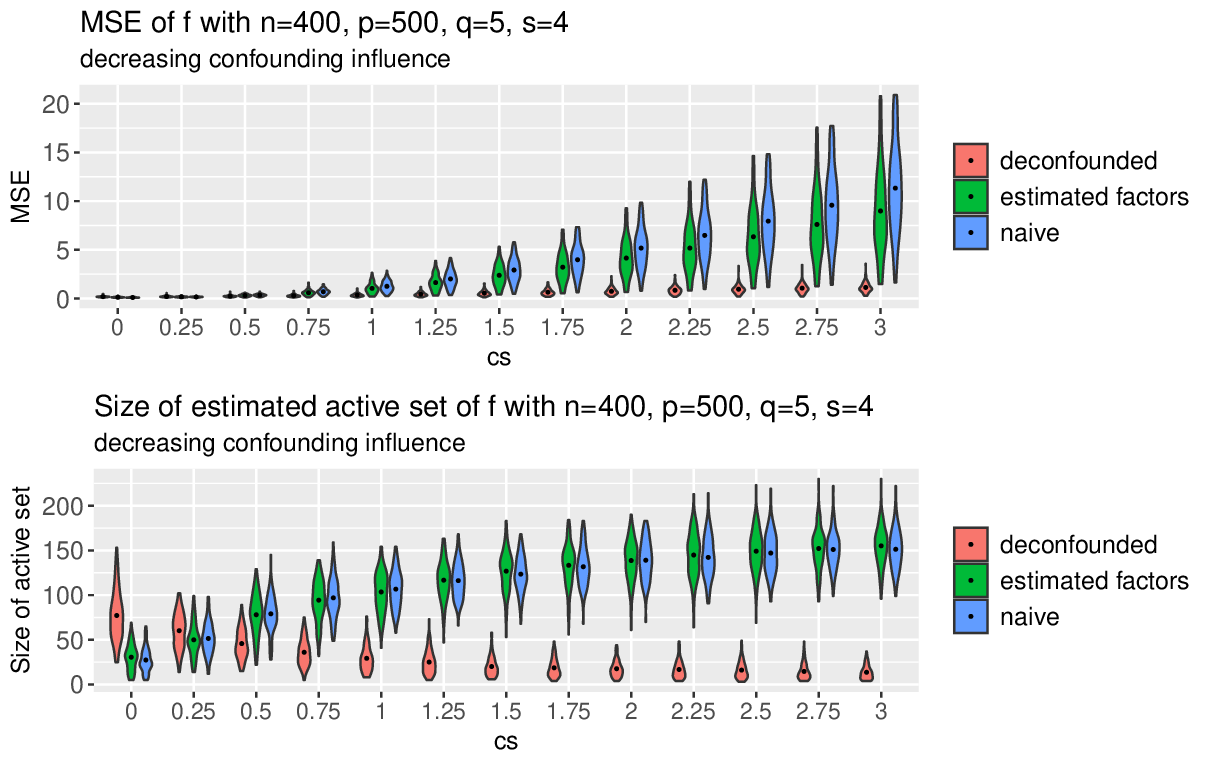}
\caption{MSE (top) and size of estimated active set (bottom) for varying confounding strength in the setting \textit{decreasing confounding influence}.}
\label{fig_VarCSDecreasingCI}
\end{figure}

\subsubsection{Summarizing the Simulation Results}
\Revision{
The simulations indicate that applying spectral deconfounding significantly improves the robustness of high-dimensional additive models both compared to the naive method and also compared to the estimated factors method. It is the only method considered here that shows good results across all the simulation settings considered, both in terms of prediction of $f^0$ and in terms of variable screening. Compared to the naive method, one loses a bit of performance when there is no confounding, but gains a lot if there is. Compared to the ad hoc method of estimated factors, one loses a bit of performance, when $\mathbf X$ has a clear factor structure and there is a clear gap in the spectrum of $\mathbf X \mathbf X^T$. However one gains a lot if the confounding effect is not that clearly separated from the noise.

In Appendix \ref{sec_AdditionalSimulations}, we show additional simulations and also consider misspecified settings.
}

\subsection{Real Data Analysis}\label{sec_RealDataResults}
We apply our method to the motif regression problem. We use a data set that has previously been analyzed by \cite{GuoDLLE}, whose results indicate that a (nonlinear) additive model might be appropriate. The data set originally comes from \cite{BeerPredictingGeneExpression} and has also been reexamined by \cite{YuanPredictingGeneExpression}. We use the same $\mathbf X$ and $\mathbf Y$ as in \cite{GuoDLLE}, that is, the rows of $\mathbf X\in \mathbb R^{2587\times 666}$ are the scores of 666 motifs and the entries of $\mathbf Y\in \mathbb R^{2587}$ are the gene expression values of the corresponding $n=2587$ genes under a particular condition. In Figure \ref{fig_MotifSVD}, we plot the singular values of $\mathbf X$, where we center the columns of $\mathbf X$ to have mean zero. We see that we have one very large spike and several smaller spikes in the singular values. \Revision{This indicates that confounding might be present, but it is not clear from the spectrum, what a good estimate $\hat q$ of the number of factors should be. This suggests that the deconfounded method might be more appropriate than the estimated factors method. We apply the deconfounded method, the estimated factors method, and the naive method on the data set. The fitted function for the deconfounded method has 95 active variables, the fitted function for the estimated factors method has 167 active variables, whereas the fitted function the naive method has 211 active variables. 85 variables are in both the active set of the deconfounded and of the estimated factors method and 92 variables are in both the active set of the deconfounded method and the naive method.}

In Figure \ref{fig_MotifFitted}, we plot the fitted functions $\hat f_j$ for the variables $X_j$ whose effects are the strongest (measured by the norm of the coefficient vector $\|\hat\beta_j\|_2$ of $\hat f_j(\cdot)=b_j(\cdot)^T\hat\beta_j$), when estimated using the deconfounded method. We see that these component functions are very similar for all three methods.
In Figure \ref{fig_MotifFittedDifference}, we plot the fitted functions $\hat f_j$ for the indices $j$ such that the effects of $\hat f_j$ estimated using the naive method are the strongest among the $j$ which are not in the active set estimated using the deconfounded method. \Revision{We see that there exist components $j$ such that the estimated functions $\hat f_j$ are zero for the deconfounded method but distinctively different from zero for both the estimated factors and the naive method. We also observe that still, the fitted functions $\hat f_j$ for the estimated factors and the naive method are very similar.} Finally, Figure \ref{fig_CompareCoefLength} displays the order of importance of the covariates: \Revision{it shows very clearly that very quickly, the top selected covariates from the deconfounded method do not exhibit strong correspondence to the top selected covariates from the estimated factors and the naive method and hence, the difference between the methods cannot be explained by a simple thresholding rule. In particular, we think that the estimated factors method did choose a too low $\hat q$ and hence was not able to remove much of the confounding.\footnote{In fact, from Figure \ref{fig_MotifSVD}, we can see that the eigenvalue ratio method from the estimated factors method chooses $\hat q = 1$, which seems to not remove all the confounding.}} In view of this, we believe that the variable importance and selection with the deconfounded method leads to much better results than the two other methods for this data set with spiked singular values as shown in Figure \ref{fig_MotifSVD}.

\begin{figure}
\centering
\includegraphics[width=0.7\textwidth]{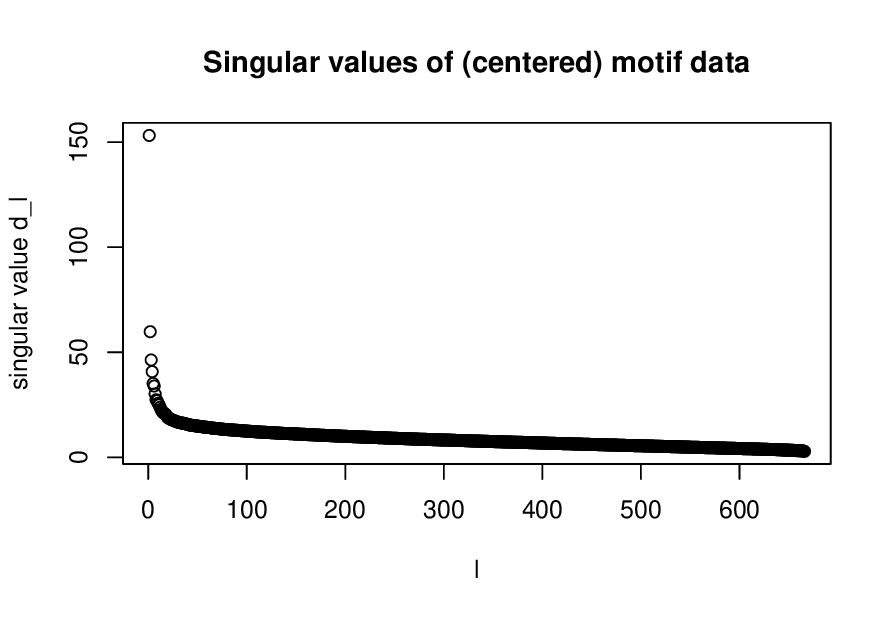}
\caption{Singular values of centered motif scores.}
\label{fig_MotifSVD}
\end{figure}

\begin{figure}
\centering
\includegraphics[width=0.91\textwidth]{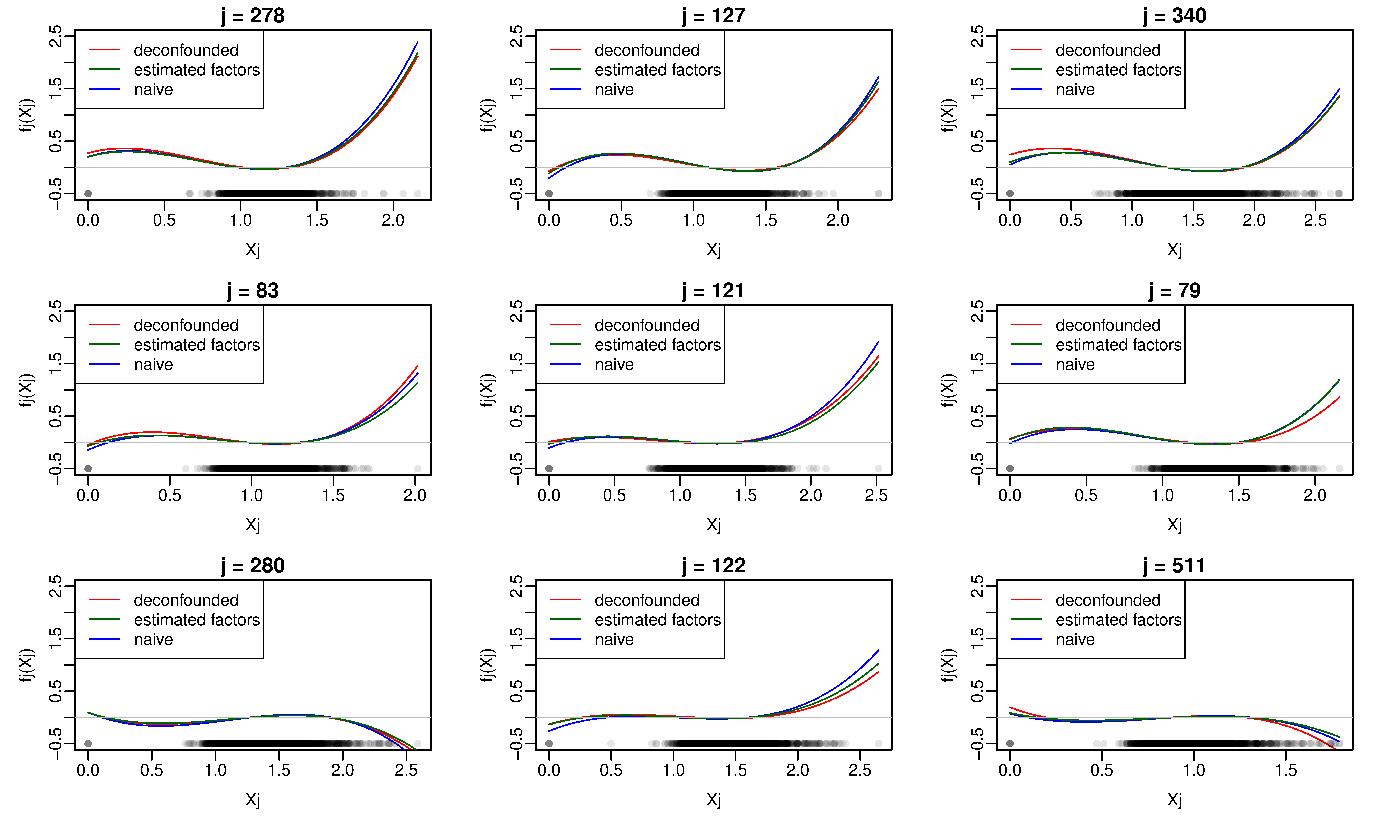}
\caption{Motif data set. Fitted functions $\hat f_j$ for the covariates $X_j$ with strongest effect estimated using the deconfounded method. The grey dots indicate the observed values of $X_j$.}
\label{fig_MotifFitted}
\end{figure}

\begin{figure}
\centering
\includegraphics[width=0.91\textwidth]{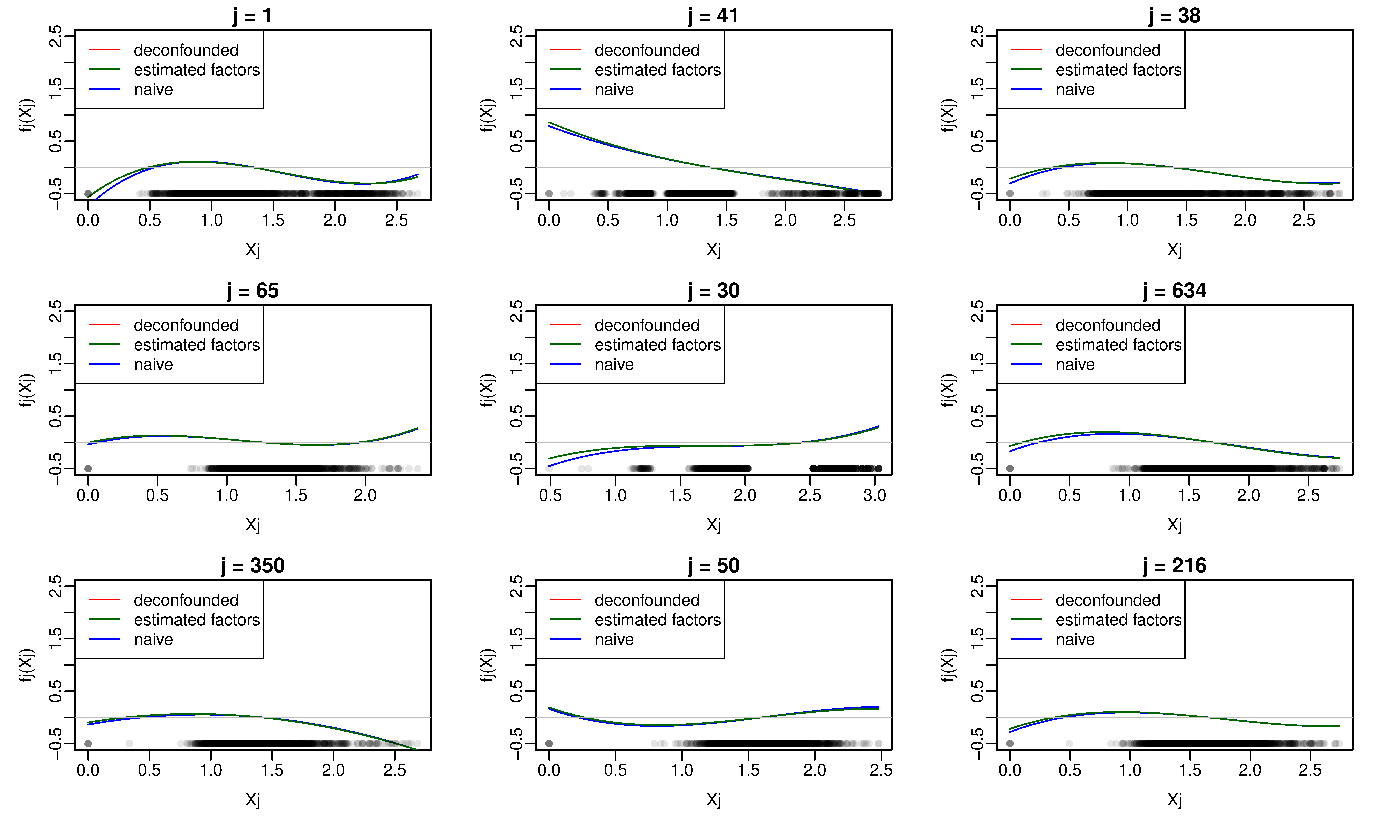}
\caption{Motif data set. Fitted functions $\hat f_j$ for the covariates $X_j$ with zero estimated effect by the deconfounded method but strongest estimated effect by the naive method. The grey dots indicate the observed values of $X_j$. 
}
\label{fig_MotifFittedDifference}
\end{figure}

\begin{figure}
\centering
\includegraphics[width=0.91\textwidth]{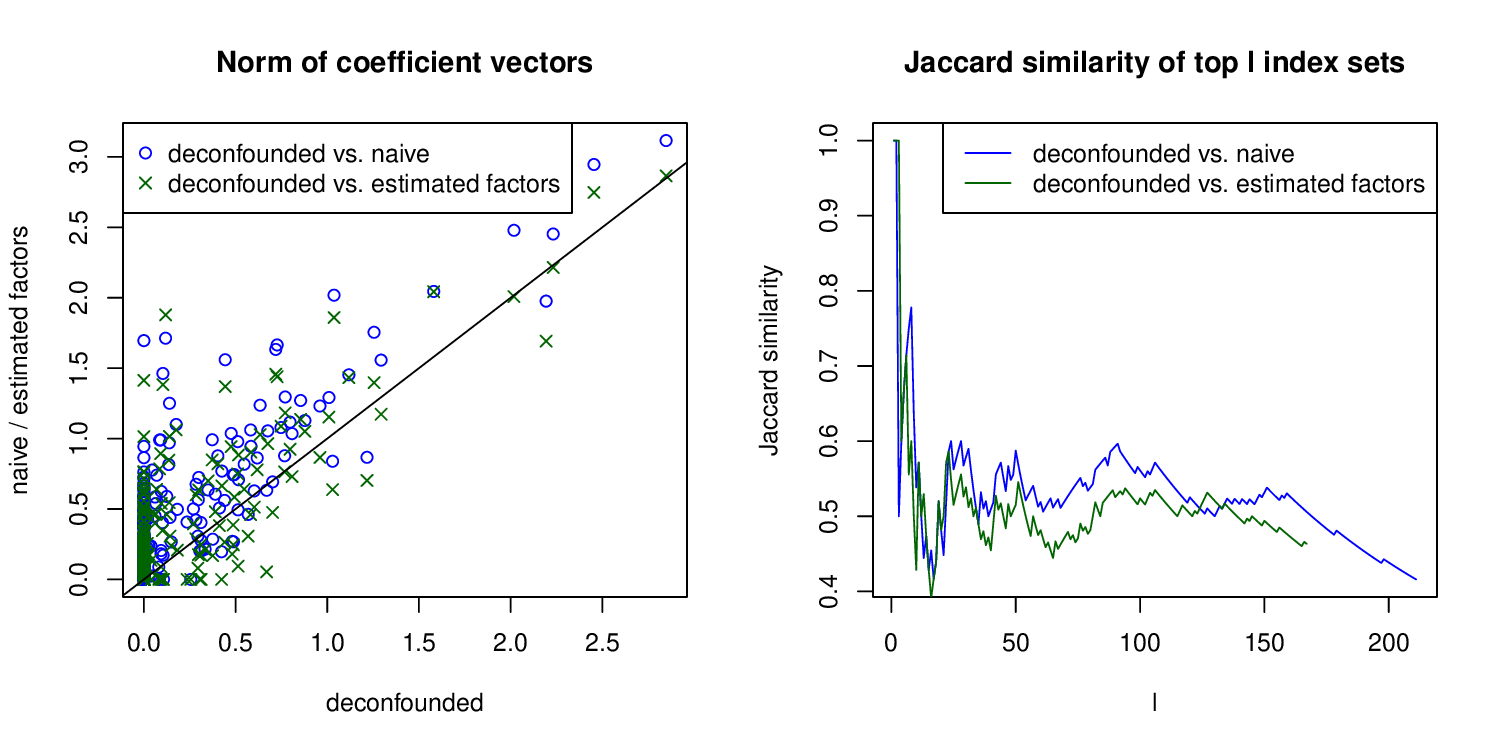}
\caption{Comparison of the strength of the fitted functions (measured by the norm of the coefficient vector $\|\hat\beta_j\|_2$ of $\hat f_j(\cdot)=b_j(\cdot)^T\hat\beta_j$) for the motif data set. Left: Strength of the fitted functions estimated using the deconfounded method vs. strength of the fitted functions estimated using the other methods. Right: Jaccard similarity $\frac{|\mathcal T_{dec}^{(l)}\cap \mathcal T_{naive / e.f.}^{(l)}|}{|\mathcal T_{dec}^{(l)}\cup \mathcal T_{naive / e.f.}^{(l)}|}$ vs. $l$ with $\mathcal T_{dec}^{(l)}\subset\{1,\ldots, p\}$ being the indices of the $l$ strongest fitted functions using the deconfounded method and $\mathcal T_{naive / e.f.}^{(l)}$ the indices of the $l$ strongest fitted functions using the naive method and the estimated factors method, respectively.}
\label{fig_CompareCoefLength}
\end{figure}

\section{Discussion}\label{sec_Discussion}
We developed novel theory and methodology for fitting high-dimensional additive models in presence of hidden confounding. With this, we established that spectral transformations introduced by \cite{CevidSpectralDeconfounding} can also be used in the context of nonlinear regression. Our rigorous theoretical development covers convergence rates as well as detailed justification of high-level assumptions such as the compatibility condition. We demonstrated good empirical performance of our procedure on a wide range of simulation scenarios as well as on real data. In case of no hidden confounding, the method is slightly worse than plain sparse additive model fitting. In presence of hidden confounding though, there is much to be gained. \Revision{Compared to an ad hoc approach of explicitly estimating the confounding dimension and the confounders, our approach is shown to be more robust in situations where the factor structure of $\mathbf X$ is not very clear. The reason is that our method does not depend on finding a clear gap in the spectrum of $\mathbf X$ but instead only needs the median singular value.

While our method is simple and easy to implement using standard group lasso software, the obtained convergence rate may be suboptimal for the high-dimensional additive model under hidden confounding. There might also be more sophisticated algorithms with better properties, for example varying smoothness across components or even adaptive smoothness \cite{TanDoublyPenalizedEstimation, SadhanalaAdditiveModelsTrendFiltering}. Nevertheless, our work indicates that the extension of using spectral transformations with such methods and even }with arbitrary machine learning algorithms could be possible. A general path for such extensions is to replace least squares type objectives $\arg\min_{f\in \mathcal F}\|\mathbf Y-\mathbf f(\mathbf X)\|_2^2/n$, where $\mathcal F$ is some function class, by their deconfounded version $\arg\min_{f\in\mathcal F}\|Q(\mathbf Y-\mathbf f(\mathbf X))\|_2^2/n$ as we did it for the function class of additive models. A rigorous and detailed theoretical understanding will be challenging, but some of our developed results may be useful for such an analysis.

\section*{Acknowledgements}
We are grateful to Cun-Hui Zhang for helpful discussions. Moreover, we want to thank Max Baum for doing preliminary simulations with a slightly different algorithm. Furthermore, we are grateful to Wei Yuan for sharing the pre-processed motif data. We thank Maximilian R\"ucker for kindly pointing out typos in an earlier version of this manuscript. We also thank the anonymous reviewers and the editor for constructive feedback.
CS and PB received funding from the European Research Council (ERC) under the European Union’s Horizon 2020 research and innovation programme (grant agreement No. 786461). The research of ZG was supported in part by the NSF-DMS 2015373 and NIH-R01GM140463 and R01LM013614; ZG also acknowledges financial support for visiting the Institute of Mathematical Research (FIM) at ETH Zurich.

\appendix

\section{Proofs of Theorem \ref{thm_BoundInSample} and Corollary \ref{cor_RateOutSample}}
\subsection{Proof of Theorem \ref{thm_BoundInSample}}\label{sec_ProofBoundInSample}

We first show that the functions $\hat f_j$ are empirically centered. This is an implication of Assumption \ref{ass_BasisFunctions} (partition of unity). For  all $\gamma_1,\ldots, \gamma_p\in \mathbb R$, we have the equality
$$\frac{1}{n}\left\|Q(\mathbf Y-\hat \beta_0\mathbf 1_n-\sum_{j=1}^pB^{(j)}\hat\beta_j)\right\|_2^2=\frac{1}{n}\left\|Q(\mathbf Y-(\hat\beta_0+\sum_{j=1}^p\gamma_j)\mathbf 1_n-\sum_{j=1}^pB^{(j)}(\hat \beta_j-\gamma_j\mathbf 1_K))\right\|_2^2$$
for the first term in the objective \eqref{eq_OptProblem}. Since $\hat\beta$ is the minimizer of \eqref{eq_OptProblem}, we must have that it minimizes the penalty term. Hence, for $j=1,\ldots, p$, we have
$\frac{\textup d}{\textup d\gamma_j}\rvert_{\gamma_j=0}\|B^{(j)}(\hat \beta_j+\gamma_j\mathbf 1_K)\|_2^2=0$. This implies that $\mathbf 1_K^T(B^{(j)})^TB^{(j)}\hat\beta_j=0$. Using again the partition of unity, we have that $B^{(j)}\hat\beta_j=0$. Hence, the estimated functions $\hat f_j$ are empirically centered, i.e. $\sum_{i=1}^n\hat f_j(x_{i,j})=0$.

For $j=1,\ldots p$, consider functions $f_j^c(\cdot)=b_j(\cdot)^T\beta_j^c$ that are empirically centered, that is $\sum_{i=1}^nf_j^c(x_{i,j})=0$. Also, define $\beta_0^c=\beta_0^0$. For $j\notin \mathcal T$, let $f_j^c=0$. In the end, we will set $f_j^c(\cdot)= f_j^\ast(\cdot)-\frac{1}{n}\sum_{i=1}^nf_j^\ast(x_{i,j})$ and $\beta_j^c=\beta_j^\ast-(\frac{1}{n}\sum_{i=1}^n f_j^\ast(x_{i,j}))\mathbf 1_K$.

We now follow the strategy of the proof of Proposition 5 in \cite{GuoDoublyDebiasedLasso}.
By the definition of $\hat \beta$, we have
\begin{align*}
&\frac{1}{n}\left\|Q(\mathbf Y-\hat \beta_0\mathbf 1_n - \sum_{j=1}^p B^{(j)}\hat\beta_j)\right\|_2^2+\frac{\lambda}{\sqrt n}\sum_{j=1}^p\|B^{(j)}\hat\beta_j\|_2\\
&\leq \frac{1}{n}\left\|Q(\mathbf Y-\beta_0^c\mathbf 1_n-\sum_{j=1}^p B^{(j)}\beta_j^c)\right\|_2^2+\frac{\lambda}{\sqrt n}\sum_{j=1}^p\|B^{(j)}\beta_j^c\|_2
\end{align*}
We use decomposition \eqref{eq_ModelWithB} to write
$$\mathbf Y-\hat \beta_0\mathbf 1_n -\sum_{j=1}^p B^{(j)} \hat \beta_j=\mathbf Xb+\bm \epsilon+(\beta_0^c-\hat\beta_0)\mathbf 1_n+\sum_{j=1}^p B^{(j)}(\beta_j^c-\hat \beta_j)+\sum_{j=1}^p(\mathbf f_j^0 -\mathbf f_j^c).$$
It follows that
\begin{align}
&\frac{1}{n}\|Q((\beta_0^c-\hat\beta_0)\mathbf 1_n +\sum_{j=1}^p B^{(j)}(\beta_j^c -\hat \beta_j))\|_2^2+\frac{\lambda}{\sqrt n}\sum_{j=1}^p \|B^{(j)}\hat\beta_j\|_2\nonumber\\
&\leq\frac{\lambda}{\sqrt n}\sum_{j=1}^p \|B^{(j)} \beta_j^c\|_2-\frac{2}{n}\left(\mathbf Xb+\bm{\epsilon} +\sum_{j=1}^p(\mathbf f_j^0-\mathbf f_j^c)\right)^T Q^2\left((\beta_0^c-\hat\beta_0)\mathbf 1_n+\sum_{j=1}^p B^{(j)}(\beta_j^c-\hat\beta_j)\right)\label{eq_BasicInequality}
\end{align}

We use a reparametrization: Let $R_j\in \mathbb R^{K\times K}$ such that $R_j^T R_j= \frac{1}{n} (B^{(j)})^T B^{(j)}$ and define $\tilde \beta_j = R_j\beta_j$ and $\tilde B^{(j)}=B^{(j)} R_j^{-1}$. Then $\tilde B^{(j)}\tilde \beta_j= B^{(j)}\beta_j$ and $(\tilde B^{(j)})^T \tilde B^{(j)}= n I_{K}$. Moreover,
$$\frac{1}{\sqrt n}\sum_{j=1}^p\left \| B^{(j)}\beta_j\right\|_2=\sum_{j=1}^p\|\tilde \beta_j\|_2.$$

Note that using the Cauchy-Schwarz inequality
\begin{align}
\left|\frac{2}{n}\bm\epsilon^T Q^2 \sum_{j=1}^p B^{(j)}(\beta_j^c- \hat\beta_j)\right|&=\left|\frac{2}{n}\bm\epsilon^T Q^2 \sum_{j=1}^p \tilde B^{(j)}(\tilde\beta_j^c- \tilde {\hat\beta}_j)\right|\nonumber\\
&\leq\frac{2}{n}\sum_{j=1}^p \|(\tilde B^{(j)})^TQ^2\bm\epsilon\|_2\|\tilde \beta_j^c- \tilde{\hat \beta}_j\|_2\nonumber\\
&\leq \frac{2}{n}\max_{j=1, \ldots, p} \|(\tilde B^{(j)})^T Q^2\bm\epsilon\|_2\sum_{j=1}^p\|\tilde \beta_j^c- \tilde{\hat \beta}_j\|_2\label{eq_EmpProc1}
\end{align}
and also
\begin{equation}\label{eq_EmpProc2}
\left|\frac{2}{n}\bm \epsilon^T Q^2(\beta_0^c-\hat\beta_0)\mathbf 1_n\right|= \frac{2}{n}|\bm\epsilon^T Q^2\mathbf 1_n||\beta_0^c-\hat \beta_0|
\end{equation}
For some constant $c>0$, let $\lambda_0=\lambda/(1+c)$ and $A_0= A/(1+c)$ and define the event
$$\mathcal A=\left\{\max\left(\frac{2}{n}|\bm\epsilon^T Q^2\mathbf 1_n|,\frac{2}{n}\max_{j=1,\ldots, p}\|(\tilde B^{(j)})^TQ^2\bm\epsilon\|_2\right)\leq \lambda_0\right\}.$$
The goal is to show that $\mathcal A$ has high probability for $n\to\infty$. Recall from decomposition \eqref{eq_ModelWithB} that $\bm \epsilon =\bm e+\Delta$ with $\Delta_i=h_i^T\psi- x_i^T b$ \Revision{and recall the notation $\|e_i\|_{\psi_2|\mathbf X}$ for the sub-Gaussian norm of $e_i$ conditional on $\mathbf X$ defined in Assumption \ref{ass_ConditionsModel0}. Observe that $\|e_i\|_{\psi_2|\mathbf X}=\|e\|_{\psi_2|X = x_i} \leq C_0$. Also note that $\|Q^2 \mathbf1_n\|_2^2\leq\|\mathbf 1_n\|_2^2 = n$, since $\|Q\|_{op}\leq 1$.} By Hoeffding's inequality (see for example Theorem 2.6.3 in \cite{VershyninHDProb}) applied conditionally on $\mathbf X$, there exists $c'>0$ such that
\begin{align}
\Prob\left(\frac{2}{n}|\bm e^TQ^2\mathbf 1_n|> A_0 C_0\sqrt{\frac{K\log p}{n}}|\mathbf X\right)&\leq 2\exp\left(\frac{-c'A_0^2C_0^2 K n\log p}{4\Revision{\max_{i=1,\ldots, n}\left(\|e_i\|_{\psi_2|\mathbf X}^2\right)}\|Q^2\mathbf 1_n\|_2^2}\right)\nonumber\\
&\leq 2\exp\left(\frac{-c'A_0^2 K \log p}{4}\right)\nonumber\\
&\leq 2 p^{-c'A_0^2K/4}.\label{eq_AFirstPart1}
\end{align}
Since the bound does not depend on $\mathbf X$, it also holds for the unconditional probability.
Define $t_n=\frac{1}{2}A_0C_0\sqrt{nK\log p}$. For $t_n^2\geq\|(\tilde B^{(j)})^TQ^2\|_F^2(C_0^2+\sigma_e^2)$, we have by the union bound and Lemma \ref{lem_TailSubGaussian} below (applied conditionally on $\mathbf X$), that there exists $c''>0$ such that
\begin{align*}
\Prob\left(\frac{2}{n}\max_{j=1,\ldots, p} \|(\tilde B^{(j)})^TQ^2\mathbf e\|_2> A_0C_0\sqrt{\frac{K\log p}{n}}|\mathbf X\right)&\leq  \sum_{j=1}^p\Prob\left(\|(\tilde B^{(j)})^TQ^2\mathbf e\|_2> t_n|\mathbf X\right)\\
&\leq\sum_{j=1}^p 2\exp\left[-c''\left(\frac{t_n^2}{C_0^2\|(\tilde B^{(j)})^TQ^2\|_F^2}-\frac{\sigma_e^2}{C_0^2}\right)\right]
\end{align*}
Since the singular values of $Q$ are bounded by $1$, we have by von Neumann's trace inequality \citep{MirskyTrace},
$$\|(\tilde B^{(j)})^T Q^2\|_F^2=\tr((\tilde B^{(j)})^T Q^4 \tilde B^{(j)})= \tr(Q^4 \tilde B^{(j)} (\tilde B^{(j)})^T)\leq \tr(\tilde B^{(j)}(\tilde B^{(j)})^T)= \tr((\tilde B^{(j)})^T\tilde B^{(j)})= nK.$$
Hence for $t_n^2\geq nK(C_0^2+\sigma_e^2)$ and plugging in the definition of $t_n$,
\begin{align}
\Prob\left(\frac{2}{n}\max_{j=1,\ldots, p} \|(\tilde B^{(j)})^TQ^2\mathbf e\|_2> A_0C_0\sqrt{\frac{K\log p}{n}}\right)&\leq  2p\exp\left[-c''\left(\frac{t_n^2}{nKC_0^2}-\frac{\sigma_e^2}{C_0^2}\right)\right]\nonumber\\
&= 2p \exp\left(c''\frac{\sigma_e^2}{C_0^2}\right)\exp\left(\frac{c''}{4}A_0^2 \log p\right)\nonumber\\
&= 2\exp\left(c''\frac{\sigma_e^2}{C_0^2}\right)p^{1-c''A_0^2/4}\label{eq_AFirstPart2}
\end{align}
Hence, we can choose $A_0^2>\max\left(\frac{4}{c''}, 4\frac{C_0^2+\sigma_e^2}{C_0^2\log p}\right)$.

On the other hand, since $(\tilde B^{(j)})^T\tilde B^{(j)} = n I_{K}$ and $\|Q\|_{op}\leq 1$,
$$\|(\tilde B^{(j)})^TQ^2\Delta\|_2\leq  \|(\tilde B^{(j)})^TQ^2\|_{op}\|\Delta\|_2\leq \|\tilde B^{(j)}\|_{op}\|\Delta\|_2\leq \sqrt{n}\|\Delta\|_2.$$
Together with Markov's inequality, we obtain for $t>0$
\begin{align*}
\Prob\left(\frac{2}{ n}\max_{j=1,\ldots, p}\|(\tilde B^{(j)})^T Q^2\Delta\|_2 > t\right)&\leq \Prob\left(\frac{2}{\sqrt {n} }\|\Delta\|_2 > t\right)\\
&\leq \frac{4\E[\|\Delta\|_2^2]}{n t^2}.
\end{align*}
From the definition \eqref{eq_DefB} of $b$ and using Assumption \ref{ass_ConditionsModel0}, we get that
\begin{align*}
\frac{1}{n}\E[\|\Delta\|_2^2]&=\E[\Delta_i^2]\\
&=\E[(\psi^TH-b^TX)^2]\\
&=\E[(\psi^T(H-\Psi\E[XX^T]^{-1} X))^2]\\
&=\psi^T(I_q-\Psi\E[XX^T]^{-1}\Psi^T)\psi
\end{align*}
By Lemma \ref{lem_ApproxB} below, we arrive at
$$\frac{1}{n}\E[\|\Delta\|_2^2]\lesssim  \frac{\|\psi\|_2^2}{1+\lambda_q^2(\Psi)}.$$
Hence by condition \eqref{eq_Lambda} on $\lambda_2$,
$$\Prob\left(\frac{2}{ n}\max_{j=1,\ldots, p}\|(\tilde B^{(j)})^T Q^2\Delta\|_2 > \frac{1}{1+c} \lambda_2\right)\lesssim  \frac{\|\psi\|_2^2}{1+\lambda_q^2(\Psi)}/\lambda_2^2=o(1).$$
Similarly, also
$$\Prob\left(\frac{2}{n}|\Delta^TQ^2\mathbf 1_n|>\frac{1}{1+c}\lambda_2 \right)\lesssim  \frac{\|\psi\|_2^2}{1+\lambda_q^2(\Psi)}/\lambda_2^2=o(1).$$
From this, (\ref{eq_AFirstPart1}) and (\ref{eq_AFirstPart2}), we get that $\Prob(\mathcal A)> 1- o(1)$. In the following, we establish \eqref{eq_RateInSample} on the event $\mathcal A$.
Together with \eqref{eq_EmpProc1} and \eqref{eq_EmpProc2}, we get from \eqref{eq_BasicInequality} that on the event $\mathcal A$, we have
\begin{align*}
&\frac{1}{n}\|Q((\beta_0^c-\hat\beta_0)\mathbf 1_n +\sum_{j=1}^p B^{(j)}(\beta_j^c -\hat \beta_j))\|_2^2+\lambda \sum_{j=1}^p \|\tilde{\hat\beta}_j\|_2\\
&\leq\lambda\sum_{j=1}^p \|\tilde \beta_j^c\|_2+\lambda_0\left(|\beta_0^c-\hat \beta_0|+\sum_{j=1}^p \|\tilde \beta_j^c-\tilde{\hat\beta}_j\|_2\right)+U_n
\end{align*}
With 
$$U_n=\left|\frac{2}{n}\left(\mathbf X b+\sum_{j=1}^p(\mathbf f_j^0-\mathbf f_j^c)\right)^TQ^2 \left((\beta_0^c-\hat\beta_0)\mathbf 1_n +\sum_{j=1}^p B^{(j)}(\beta_j^c -\hat \beta_j)\right)\right|.$$
Recall that $\beta_j^c=0$ for all $j\in \mathcal T^c$. By the triangle inequality,
$$\sum_{j\in \mathcal T}\|\tilde \beta_j^c\|_2-\sum_{j\in \mathcal T}\|\tilde{\hat\beta}_j\|_2\leq \sum_{j\in \mathcal T}\|\tilde \beta_j^c- \tilde{\hat \beta}_j\|_2, \text{ and } \sum_{j\in \mathcal T^c}\|\tilde{\hat\beta}_j\|_2=\sum_{j\in \mathcal T^c}\|\tilde \beta_j^c- \tilde{\hat\beta}_j\|_2.$$
It follows that
\begin{align*}
&\frac{1}{n}\|Q((\beta_0^c-\hat\beta_0)\mathbf 1_n+\sum_{j=1}^p B^{(j)}(\beta_j^c -\hat \beta_j))\|_2^2+(\lambda-\lambda_0)\sum_{j\in \mathcal T^c}\|\tilde{\hat\beta}_j\|_2\\
&\leq \lambda\left(\sum_{j\in \mathcal T}\|\tilde\beta_j^c\|-\sum_{j\in \mathcal T}\|\tilde{\hat\beta}_j\|_2\right)+\lambda_0\left(|\beta_0^c-\hat \beta_0|+\sum_{j\in \mathcal T} \|\tilde \beta_j^c-\tilde{\hat\beta}_j\|_2\right)+U_n\\
&\leq (\lambda+\lambda_0)\left(|\beta_0^c-\hat \beta_0|+\sum_{j\in \mathcal T} \|\tilde \beta_j^c-\tilde{\hat\beta}_j\|_2\right) + U_n.
\end{align*}
We consider two cases:
\begin{description}
	\item[Case 1:] $$(\lambda+\lambda_0)\left(|\beta_0^c-\hat \beta_0|+\sum_{j\in \mathcal T} \|\tilde \beta_j^c-\tilde{\hat\beta}_j\|_2\right) \geq U_n,$$
	\item[Case 2:] \begin{equation}\label{eq_DefCase2}
	(\lambda+\lambda_0)\left(|\beta_0^c-\hat \beta_0|+\sum_{j\in \mathcal T} \|\tilde \beta_j^c-\tilde{\hat\beta}_j\|_2\right) < U_n.
	\end{equation}
\end{description}
In Case 1, we have
\begin{equation}\label{eq_Case1}
\frac{1}{n}\|Q((\beta_0^c-\hat\beta_0)\mathbf 1_n+\sum_{j=1}^p B^{(j)}(\beta_j^c -\hat \beta_j))\|_2^2+(\lambda-\lambda_0)\sum_{j\in \mathcal T^c}\|\tilde{\hat\beta}_j\|_2\leq 2(\lambda+\lambda_0)\left(|\beta_0^c-\hat \beta_0|+\sum_{j\in \mathcal T} \|\tilde \beta_j^c-\tilde{\hat\beta}_j\|_2\right)
\end{equation}
and in particular
\begin{equation}\label{eq_CCFeas}
\sum_{j\in \mathcal T^c}\|\tilde{\hat\beta}_j\|_2\leq \frac{4+2c}{c}\left(|\beta_0^c-\hat \beta_0|+\sum_{j\in \mathcal T} \|\tilde \beta_j^c-\tilde{\hat\beta}_j\|_2\right).
\end{equation}
By the definition of $\tilde{\hat \beta}_j$, it follows that $\|\tilde{\hat \beta}_j\|_2=\frac{1}{\sqrt n}\|B^{(j)}\beta_j\|_2=\frac{1}{\sqrt n}\|\hat {\mathbf f}_j\|_2$ and similarly $\|\tilde\beta_j^c-\tilde{\hat\beta}_j\|_2=\frac{1}{\sqrt n}\|\mathbf f_j^c-\hat{\mathbf f}_j\|_2$. Hence, we can rewrite \eqref{eq_CCFeas} as
\begin{equation}\label{eq_CCFeas1}
\sum_{j\in \mathcal T^c}\frac{1}{\sqrt n}\|\mathbf f_j^c -\hat{\mathbf f}_j\|_2\leq \frac{4+2c}{c}\left(|\beta_0^c-\hat\beta_0|+\sum_{j\in \mathcal T}\frac{1}{\sqrt n}\|\mathbf f_j^c -\hat{\mathbf f}_j\|_2\right).
\end{equation}
This means that for $M=(4+2c)/c$, the function $f^c- \hat f=(\beta_0^c-\hat\beta_0)+\sum_{j=1}^p(f_j^c-\hat f_j)$ lies in the set $\mathcal F_{M, \mathcal T}^n$ defined in \eqref{eq_DefFMTn} (recall from the beginning of the proof that $\hat f_j$ and $f_j^c$ are empirically centered for all $j=1,\ldots, p$). By the definition \eqref{eq_DefCC} of the compatibility constant $\tau_n$, we have that
$$\frac{1}{n}\|Q((\beta_j^c-\hat\beta_j)\mathbf 1_n+\sum_{j=1}^p B^{(j)}(\beta_j^c- \hat \beta_j))\|_2^2= \frac{1}{n}\|Q(\mathbf f^c -\hat{\mathbf f})\|_2^2 \geq \tau_n \left((\beta_0^c-\hat\beta_0)^2+\sum_{j=1}^p \frac{1}{n}\|\mathbf f_j^c - \hat{\mathbf f}_j\|_2^2\right).$$
Together with the Cauchy-Schwarz inequality and \eqref{eq_Case1}, we have
\begin{align*}
\tau_n\left(|\beta_0^c-\hat\beta_0|+\sum_{j\in \mathcal T}\frac{1}{\sqrt n}\|\mathbf f_j^c-\hat{\mathbf f}_j\|_2\right)^2 &\leq (s+1)\tau_n \left((\beta_0^c-\hat\beta_0)^2+\sum_{j=1}^p\frac{1}{ n}\|\mathbf f_j^c-\hat{\mathbf f}_j\|_2^2\right)\\
&\leq (s+1)\frac{1}{n}\|Q((\beta_j^c-\hat\beta_j)\mathbf 1_n+\sum_{j=1}^p B^{(j)}(\beta_j^c- \hat \beta_j))\|_2^2\\
&\leq 2(s+1)(\lambda+\lambda_0)\left(|\beta_0^c-\hat \beta_0|+\sum_{j\in \mathcal T} \|\tilde \beta_j^c-\tilde{\hat\beta}_j\|_2\right)\\
&=2(s+1)(\lambda+\lambda_0)\left(|\beta_0^c-\hat \beta_0|+\sum_{j\in \mathcal T} \frac{1}{\sqrt n}\|\mathbf f_j^c-\hat{\mathbf f}_j\|_2\right)
\end{align*}
and hence,
$$|\beta_0^c-\hat\beta_0|+\sum_{j\in \mathcal T}\frac{1}{\sqrt n}\|\mathbf f_j^c-\hat{\mathbf f}_j\|_2\leq \frac{2(s+1)(\lambda+\lambda_0)}{\tau_n}.$$
Together with \eqref{eq_CCFeas1}, we arrive at
\begin{equation}\label{eq_Case1Result}
|\beta_0^c-\hat\beta_0|+\sum_{j=1}^p \frac{1}{\sqrt n}\|\mathbf f_j^c-\hat{\mathbf f}_j\|_2\lesssim \frac{s\lambda}{\tau_n}.
\end{equation}

In Case 2, we have
\begin{equation}\label{eq_Case2}
\frac{1}{n}\|Q((\beta_0^c-\hat\beta_0)\mathbf 1_n+\sum_{j=1}^p B^{(j)}(\beta_j^c -\hat \beta_j))\|_2^2+(\lambda-\lambda_0)\sum_{j\in \mathcal T^c}\|\tilde{\hat\beta}_j\|_2\leq 2 U_n.
\end{equation}
By the Cauchy-Schwarz inequality,
\begin{equation}\label{eq_Case2A}
2 U_n\leq \frac{4}{n}\|Q\mathbf Xb+Q\sum_{j=1}^p(\mathbf f_j^0-\mathbf f_j^c)\|_2\|Q((\beta_0^c-\hat\beta_0)\mathbf 1_n+\sum_{j=1}^pB^{(j)}(\beta_j^c-\hat\beta_j))\|_2
\end{equation}
In particular, it follows from \eqref{eq_Case2} and \eqref{eq_Case2A} that
$$\frac{1}{\sqrt n}\|Q((\beta_0^c-\hat\beta_0)\mathbf 1_n+\sum_{j=1}^pB^{(j)}(\beta_j^c-\hat\beta_j))\|_2\leq\frac{4}{\sqrt n}\|Q\mathbf Xb+Q\sum_{j=1}^p(\mathbf f_j^0-\mathbf f_j^c)\|_2.$$
Plugging this back into (\ref{eq_Case2A}), yields
$$2U_n\leq \frac{16}{n} \|Q\mathbf Xb+Q\sum_{j=1}^p(\mathbf f_j^0-\mathbf f_j^c)\|_2^2.$$
From (\ref{eq_DefCase2}) and (\ref{eq_Case2}), we have
$$|\beta_0^c-\hat\beta_0| + \sum_{j=1}^p\|\tilde \beta_j^c- \tilde{\hat \beta}_j\|_2\leq \frac{U_n}{ (\lambda+\lambda_0)}+\frac{2 U_n}{(\lambda-\lambda_0)}.$$

Hence, using again that $\|\tilde \beta_j^c- \tilde{\hat \beta}_j\|_2=\frac{1}{\sqrt n}\|\mathbf f_j^c - \hat{\mathbf f}_j\|_2$,
\begin{equation}\label{eq_Case2Result}
|\beta_0^c - \hat\beta_0| + \sum_{j=1}^p \frac{1}{\sqrt n}\|\mathbf f_j^c - \hat{\mathbf f}_j\|_2\lesssim\frac{1}{\lambda}\left(\frac{1}{n}\|Q\mathbf Xb\|_2^2+\frac{1}{n}\|Q\sum_{j=1}^p (\mathbf f_j^0- \mathbf f_j^c)\|_2^2\right)\
\end{equation}

Since either Case 1 or Case 2 holds, \eqref{eq_Case2Result} and \eqref{eq_Case1Result} together imply that on the event $\mathcal A$,
\begin{equation}\label{eq_RateFc}
|\beta_0^c-\hat\beta_0|+\sum_{j=1}^p\frac{1}{\sqrt n}\|\mathbf f_j^c-\hat{\mathbf f}_j\|_2\lesssim \frac{s \lambda}{\tau_n}+\frac{1}{\lambda}\frac{\|Q\mathbf Xb\|_2^2}{n}+\frac{1}{\lambda}\frac{\|Q\sum_{j=1}^p(\mathbf f_j^0-\mathbf f_j^c)\|_2^2}{n}.
\end{equation}

We now return to the beginning and define $f_j^c(\cdot)=f_j^\ast(\cdot)-\frac{1}{n}\sum_{i=1}^nf_j^\ast(x_{i,j})$. By Assumption \ref{ass_BasisFunctions} (partition of unity), we have $f_j^c(\cdot)=b_j(\cdot)^T\beta_j^c$ with $\beta_j^c =\beta_j^\ast-(\frac{1}{n}\sum_{i=1}^nf_j^\ast(x_{i,j}))\mathbf 1_k$.
Note that
\begin{align}
\frac{1}{\sqrt n}\|\mathbf f_j^c-\mathbf f_j^\ast\|_2&=\frac{1}{\sqrt n}\|\mathbf 1_n\sum_{i=1}^n\frac{1}{n} f_j^\ast(x_{i,j})\|_2\nonumber\\
&=|\frac{1}{n}\sum_{i=1}^n f_j^\ast(x_{i,j})|\nonumber\\
&\leq|\frac{1}{n}\sum_{i=1}^n (f_j^\ast(x_{i,j})-f_j^0(x_{i,j}))|+|\frac{1}{n}\sum_{i=1}^nf_j^0(x_{i,j})|.\label{eq_DecompCStar}
\end{align}
By the Cauchy-Schwarz inequality, 
\begin{equation}\label{eq_DecompCStar1}
|\frac{1}{n}\sum_{i=1}^n(f_j^\ast(x_{i,j})-f_j^0(x_{i,j}))|\leq \sqrt{\frac{1}{n}\sum_{i=1}^n (f_j^\ast(x_{i,j})-f_j^0(x_{i,j}))^2}=\frac{1}{\sqrt n}\|\mathbf f_j^\ast-\mathbf f_j^0\|_2.
\end{equation}
By the triangle inequality,
\begin{equation}\label{eq_DiffStarHat}
|\beta_0^0-\hat\beta_0|+\sum_{j=1}^p\frac{1}{\sqrt n}\|\mathbf f_j^\ast-\hat{\mathbf f}_j\|_2\leq |\beta_0^c-\hat\beta_0|+\sum_{j=1}^p\frac{1}{\sqrt n}\|\mathbf f_j^c-\hat{\mathbf f}_j\|_2+\sum_{j\in \mathcal T}\frac{1}{\sqrt n}\|\mathbf f_j^\ast-\mathbf f_j^c\|_2.
\end{equation}
Since $\|Q\|_{op}=1$,
$$\frac{1}{n}\|Q\sum_{j=1}^p(\mathbf f_j^0-\mathbf f_j^c)\|_2^2\leq \frac{1}{n}\|\sum_{j=1}^p(\mathbf f_j^0-\mathbf f_j^c)\|_2^2\leq \frac{1}{n}\left(\sum_{j\in \mathcal T}\|\mathbf f_j^0-\mathbf f_j^\ast\|_2+\sum_{j\in \mathcal T}\|\mathbf f_j^\ast-\mathbf f_j^c\|_2\right)^2.$$
Together with \eqref{eq_RateFc}, \eqref{eq_DecompCStar}, \eqref{eq_DecompCStar1} and \eqref{eq_DiffStarHat}, we obtain that on the event $\mathcal A$,
\begin{align*}
|\beta_0^0-\hat\beta_0|+\sum_{j=1}^p\frac{1}{\sqrt n}\|\mathbf f_j^\ast-\hat{\mathbf f}_j\|_2 &\lesssim\frac{s K\lambda}{\tau_n}+\frac{1}{\lambda}\frac{\|Q\mathbf Xb\|_2^2}{n}\\
&+\sum_{j\in \mathcal T}\frac{1}{\sqrt n}\|\mathbf f_j^\ast-\mathbf f_j^0\|_2+\sum_{j\in \mathcal T}|\frac{1}{n}\sum_{i=1}^n f_j^0(x_{i,j})|\\
&+\frac{1}{\lambda}\left(\sum_{j\in \mathcal T}\frac{1}{\sqrt n}\|\mathbf f_j^\ast-\mathbf f_j^0\|_2+\sum_{j\in \mathcal T}|\frac{1}{n}\sum_{i=1}^n f_j^0(x_{i,j})|\right)^2
\end{align*}
This concludes the proof.

\subsection{Some Lemmas}

\begin{lemma}\label{lem_TailSubGaussian}
Let the random vector $\mathbf e\in \mathbb R^n$ have \Revision{independent entries with variance $\E[e_i^2]\leq \sigma_e^2, \, i = 1,\ldots, n$ and sub-Gaussian norm $\|e_i\|_{\psi_2}\leq C_0,\, i = 1,\ldots, n$ with $\sigma_e^2$ and $C_0$ independent of $i$.} Let $A\in \mathbb R^{k\times n}$ be a matrix. Then for any $t^2\geq\|A\|_F^2(C_0^2+\sigma_e^2)$, we have
$$\Prob\left(\|A\mathbf e\|_2\geq t\right)\leq 2\exp\left[-c\left(\frac{t^2}{C_0^2\|A\|_F^2}-\frac{\sigma_e^2}{C_0^2}\right)\right].$$
\begin{proof}
We first observe that \Revision{$\E[\mathbf e^T A^TA\mathbf e]=\tr(A^T A\E[\mathbf e\mathbf e^T])\leq\sigma_e^2\|A\|_F^2$.}
Using the Hanson-Wright inequality (see for example \cite{RudelsonHansonWright}), we have
\begin{align*}
\Prob(\|A\mathbf e\|_2\geq t )&=\Prob(\mathbf e^T A^TA\mathbf e\geq t^2)\\
&=\Revision{\Prob(\mathbf e^T A^T A\mathbf e- \E[\mathbf e^TA^TA\mathbf e]\geq t^2-\E[\mathbf e^TA^TA\mathbf e])}\\
&\leq\Prob(\mathbf e^T A^T A\mathbf e- \E[\mathbf e^TA^TA\mathbf e]\geq t^2-\sigma_e^2\|A\|_F^2)\\
&\leq 2\exp\left[-c \min\left(\frac{(t^2-\sigma_e^2\|A\|_F^2)^2}{C_0^4\|A^TA\|_F^2},\frac{t^2-\sigma_e^2\|A\|_F^2}{C_0^2\|A^TA\|_{op}}\right)\right]
\end{align*}
Since, $\|A^TA\|_{op}\leq\|A\|_F^2$ and $\|A^TA\|_F^2\leq\|A\|_F^4$, we obtain
$$\Prob\left(\|A\mathbf e\|_2\geq t\right)\leq 2\exp\left[-c\min\left(\left(\frac{t^2}{C_0^2\|A\|_F^2}-\frac{\sigma_e^2}{C_0^2}\right)^2,\frac{t^2}{C_0^2\|A\|_F^2}-\frac{\sigma_e^2}{C_0^2}\right)\right].$$
Since $t^2\geq\|A\|_F^2(C_0^2+\sigma_e^2)$, we have $\frac{t^2}{C_0^2\|A\|_F^2}-\frac{\sigma_e^2}{C_0^2}\geq 1$, which gives the result.
\end{proof}
\end{lemma}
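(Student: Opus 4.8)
The plan is to recognize $\|A\mathbf e\|_2^2=\mathbf e^T(A^TA)\mathbf e$ as a quadratic form in the vector $\mathbf e$, whose coordinates are independent and (as in the application, where $\mathbf e$ is the conditionally centered noise) mean-zero sub-Gaussian with $\|e_i\|_{\psi_2}\leq C_0$. This is exactly the setting of the Hanson-Wright inequality \citep{RudelsonHansonWright} applied to the matrix $B=A^TA$. First I would rewrite the target event as $\{\|A\mathbf e\|_2\geq t\}=\{\mathbf e^TB\mathbf e\geq t^2\}$ and subtract the mean, so that the inequality can be invoked on the centered form $\mathbf e^TB\mathbf e-\E[\mathbf e^TB\mathbf e]$.

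The mean is controlled via $\E[\mathbf e^TB\mathbf e]=\tr(B\,\E[\mathbf e\mathbf e^T])$. Since the coordinates are independent and centered, $\E[\mathbf e\mathbf e^T]$ is diagonal with entries $\E[e_i^2]\leq\sigma_e^2$, and $B=A^TA$ has nonnegative diagonal, so $\E[\mathbf e^TB\mathbf e]=\sum_i B_{ii}\E[e_i^2]\leq\sigma_e^2\tr(A^TA)=\sigma_e^2\|A\|_F^2$. Consequently $\Prob(\mathbf e^TB\mathbf e\geq t^2)\leq\Prob(\mathbf e^TB\mathbf e-\E[\mathbf e^TB\mathbf e]\geq t^2-\sigma_e^2\|A\|_F^2)$, where the deviation $s:=t^2-\sigma_e^2\|A\|_F^2$ is nonnegative precisely because of the hypothesis $t^2\geq\|A\|_F^2(C_0^2+\sigma_e^2)$.

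Applying Hanson-Wright then yields a bound of the form $2\exp[-c\min(s^2/(C_0^4\|B\|_F^2),\,s/(C_0^2\|B\|_{op}))]$. Here I would use the elementary estimates $\|B\|_{op}=\|A\|_{op}^2\leq\|A\|_F^2$ and $\|B\|_F^2=\|A^TA\|_F^2\leq\|A\|_F^4$ to replace the two denominators, which turns the two arguments of the minimum into $x^2$ and $x$ respectively, where $x=\frac{t^2}{C_0^2\|A\|_F^2}-\frac{\sigma_e^2}{C_0^2}$.

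The final step removes the minimum: the threshold $t^2\geq\|A\|_F^2(C_0^2+\sigma_e^2)$ is equivalent to $x\geq 1$, for which $\min(x^2,x)=x$, giving the claimed exponent $-c\bigl(t^2/(C_0^2\|A\|_F^2)-\sigma_e^2/C_0^2\bigr)$. I do not anticipate a genuine obstacle; the only points requiring care are that the mean bound relies on independence/centeredness so that $\E[\mathbf e\mathbf e^T]$ is diagonal, and that the threshold on $t$ is used twice — once to ensure the post-centering deviation $s$ is nonnegative, and once to select the linear branch of the minimum.
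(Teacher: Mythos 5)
Your proposal is correct and follows essentially the same route as the paper's proof: center the quadratic form $\mathbf e^TA^TA\mathbf e$, bound its mean by $\sigma_e^2\|A\|_F^2$ via the trace identity, apply Hanson--Wright, pass to $\|A^TA\|_{op}\leq\|A\|_F^2$ and $\|A^TA\|_F^2\leq\|A\|_F^4$, and use the threshold on $t$ to select the linear branch of the minimum. The one point you make explicit that the paper leaves implicit -- that independence and centeredness are what make $\E[\mathbf e\mathbf e^T]$ diagonal so the mean bound holds -- is a welcome clarification but not a different argument.
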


The following result is a slight variant of Lemma 2 in \cite{GuoDoublyDebiasedLasso}.
\begin{lemma}\label{lem_ApproxB}
Under Assumption \ref{ass_ConditionsModel0}, assertion (1) and Assumption \ref{ass_ConditionsModel1}, we have that
$$|\psi^T(I_q-\Psi\E[XX^T]^{-1}\Psi^T)\psi | \lesssim \frac{\|\psi\|_2^2}{1+\lambda_q^2(\Psi)}.$$
\end{lemma}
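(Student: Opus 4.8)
The plan is to reduce the quadratic form to an explicit expression via the Woodbury identity and then bound it using the eigenvalue control on $\Sigma_E$ supplied by Assumption \ref{ass_ConditionsModel1}.

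First I would compute the second moment matrix of $X$. Since $X = \Psi^T H + E$ with $H$ and $E$ centered and uncorrelated and $\E[HH^T] = I_q$ by assertion (1) of Assumption \ref{ass_ConditionsModel0}, the cross terms vanish and
\[
\E[XX^T] = \Psi^T \E[HH^T]\Psi + \E[EE^T] = \Psi^T\Psi + \Sigma_E,
\]
where $\Sigma_E$ is invertible by Assumption \ref{ass_ConditionsModel1}. Writing $M = \Psi\Sigma_E^{-1}\Psi^T \in \mathbb{R}^{q\times q}$, I would then apply the Sherman--Morrison--Woodbury formula to $(\Sigma_E + \Psi^T I_q \Psi)^{-1}$ and multiply on both sides by $\Psi$, $\Psi^T$ to obtain
\[
\Psi\,\E[XX^T]^{-1}\Psi^T = M - M(I_q + M)^{-1}M = M(I_q+M)^{-1},
\]
the last equality following from the telescoping identity $(I_q+M)^{-1}M = I_q - (I_q+M)^{-1}$. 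Consequently
\[
I_q - \Psi\,\E[XX^T]^{-1}\Psi^T = I_q - M(I_q+M)^{-1} = \left(I_q + \Psi\Sigma_E^{-1}\Psi^T\right)^{-1},
\]
which is positive definite, so the absolute value in the statement can be dropped.

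Finally I would bound the resulting quadratic form. Since $(I_q+M)^{-1}$ is positive definite,
\[
\psi^T(I_q+M)^{-1}\psi \le \frac{\|\psi\|_2^2}{1 + \lambda_{\min}(M)}.
\]
By Assumption \ref{ass_ConditionsModel1} we have $\Sigma_E^{-1} \succeq c\, I_p$, hence $M = \Psi\Sigma_E^{-1}\Psi^T \succeq c\,\Psi\Psi^T$ and, by monotonicity of the minimal eigenvalue under the positive semidefinite order, $\lambda_{\min}(M) \ge c\,\lambda_{\min}(\Psi\Psi^T) = c\,\lambda_q(\Psi)^2$. Combining these bounds gives
\[
\bigl|\psi^T(I_q - \Psi\,\E[XX^T]^{-1}\Psi^T)\psi\bigr| \le \frac{\|\psi\|_2^2}{1 + c\,\lambda_q(\Psi)^2} \lesssim \frac{\|\psi\|_2^2}{1 + \lambda_q^2(\Psi)},
\]
where the final step absorbs the constant $\max(1,c^{-1})$ into $\lesssim$.

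The only genuinely nontrivial point is the Woodbury simplification: I expect the main obstacle to be carrying out the telescoping cancellation cleanly so that the inverse collapses exactly to $(I_q+\Psi\Sigma_E^{-1}\Psi^T)^{-1}$. Everything afterward is a routine eigenvalue estimate. This mirrors the structure exploited in Lemma~2 of \cite{GuoDoublyDebiasedLasso}, adapted to the present notation with $\E[HH^T] = I_q$.
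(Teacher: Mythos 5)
Your proposal is correct and takes essentially the same route as the paper's proof: both compute $\E[XX^T]=\Psi^T\Psi+\Sigma_E$, use the Woodbury identity to collapse $I_q-\Psi\E[XX^T]^{-1}\Psi^T$ into $(I_q+\Psi\Sigma_E^{-1}\Psi^T)^{-1}$, and then bound the quadratic form by $\|\psi\|_2^2/\lambda_{\min}(I_q+\Psi\Sigma_E^{-1}\Psi^T)$ with $\lambda_{\min}(I_q+\Psi\Sigma_E^{-1}\Psi^T)\geq 1+c\,\lambda_q(\Psi)^2$ from Assumption \ref{ass_ConditionsModel1}. The only cosmetic difference is that you spell out the telescoping cancellation $M-M(I_q+M)^{-1}M=M(I_q+M)^{-1}$ explicitly, whereas the paper invokes Woodbury directly; the content is identical.
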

\begin{proof}
By Assumption \ref{ass_ConditionsModel0}, assertion (1) and the Woodbury identity \cite{GolubMatrixComputations}, we have that
\begin{align*}
|\psi^T(I_q-\Psi\E[XX^T]^{-1}\Psi^T)\psi |&=|\psi^T(I_q-\Psi(\Psi^T\Psi+\Sigma_E)^{-1}\Psi^T)\psi |\\
&=\psi^T(I_q+\Psi\Sigma_E^{-1}\Psi^T)^{-1}\psi\\
&\leq\|\psi\|_2^2/\lambda_{\min}(I_q+\Psi\Sigma_E^{-1}\Psi^T).
\end{align*}
Moreover,
\begin{align*}
\lambda_{\min}(I_q+\Psi\Sigma_E^{-1}\Psi^T)&=\inf_{y\neq 0}\frac{\|y\|_2^2+y^T\Psi\Sigma_E^{-1}\Psi^Ty}{\|y\|_2^2}\\
&\geq 1+\inf_{z\neq 0}\frac{z^T\Sigma_E^{-1} z}{\|z\|_2^2}\inf_{y\neq 0}\frac{y^T\Psi\Psi^T y}{\|y\|_2^2}\\
&=1+\lambda_{\min}(\Sigma_E^{-1})\lambda_	q (\Psi)^2.
\end{align*}
Using $\lambda_{\min}(\Sigma_E^{-1})\geq c$ (Assumption \ref{ass_ConditionsModel1}) yields the result.
\end{proof}
\subsection{Proof of Corollary \ref{cor_RateOutSample}}\label{sec_ProofCorOutSample}
For a function $f_j(\cdot)=b_j(\cdot)^T\beta_j$, we have on one hand
$$\frac{1}{n}\|\mathbf f_j\|_2^2=\frac{1}{n}\beta_j^T(B^{(j)})^TB^{(j)}\beta_j\geq \lambda_{\min}\left(\frac{1}{n}(B^{(j)})^TB^{(j)}\right)\|\beta_j\|_2^2$$
and on the other hand
$$\|f_j\|_{L_2}^2=\E[f_j(X_j)^2]=\beta_j^T\E[b_j(X_j)b_j(X_j)^T]\beta_j\leq \|\beta_j\|_2^2\lambda_{\max}\left(\E[b_j(X_j)b_j(X_j)^T]\right).$$
It follows that on the event $\mathcal B$ from Assumption \ref{ass_BasisOutSample} with $\Prob(\mathcal B)=1-o(1)$ we have $\|f_j\|_{L_2}\leq \sqrt C \frac{1}{\sqrt n}\|\mathbf f_j\|_2$. Since this holds for all $j=1,\ldots, p$ and independently of $\beta_j$, we establish \eqref{eq_RateCorOutSample} on the event $\mathcal{B}\cap\mathcal A$ for the event $\mathcal A$ with $\Prob(\mathcal A) = 1-o(1)$ from the proof of Theorem \ref{thm_BoundInSample}.

\section{Proofs for Section \ref{sec_Compatibility}}
\subsection{Proof of Theorem \ref{thm_BoundCC}}\label{sec_ProofBoundCC}
\begin{remark}
    For this proof, the Gaussianity assumption (Assumption \ref{ass_Gaussian}) can be relaxed to sub-Gaussian (with additional restrictions at some places to be able to apply Lemma 7 in \cite{GuoDoublyDebiasedLasso}).
\end{remark}

We first define a second spectral transformation $Q^{\text{PCA}}$ similar to $Q^{\text{trim}}$. Instead of shrinking the top half of the singular values of $\mathbf X$ to the median singular value, $Q^\text{PCA}$ shrinks the first $q$ singular values of $\mathbf X$ to $0$ and leaves the others as they are. More formally, as in Section \ref{sec_Method}, let $\mathbf X\mathbf X^T= UDU^T$ be the eigenvalue decomposition of $\mathbf X\mathbf X^T$. Let $\bar d_l=1\{l>q\}$ and $Q^\text{PCA}= U\diag(\bar d_1, \ldots, \bar d_r, 1, \dots 1) U^T$. Note that $q$ is not known in practice. However, we only use $Q^\text{PCA}$ as a theoretical construct. For $\mathbf X \mathbf X^T=UDU^T$, define $\hat{\mathbf H} = \sqrt n U_{1:q}$ to be the scaled first $q$ columns of $U$. $\hat{\mathbf H}$ is the solution of the following least squares problem, see for example \cite{FanAreLatent}:
$$(\hat{\mathbf H}, \hat \Psi)=\arg\min_{\mathbf H_0\in \mathbb R^{n\times q}, \Psi_0\in \mathbb R^{q\times p}}\|\mathbf X-\mathbf H_0\Psi_0\|_F^2 \text{ subject to }\frac{1}{n}\mathbf{H}_0^T\mathbf H_0= I_q\text{ and } \Psi_0\Psi_0^T \text{ is diagonal}.$$
Observe that $Q^\text{PCA}=I_n-U_{1:q}U_{1:q}^T=I_n-\frac{1}{n}\hat{\mathbf H}\hat{\mathbf H}^T$. Since $\frac{1}{n}\hat{\mathbf H}^T\hat{ \mathbf H}=I_	q$, we have that $Q^\text{PCA}$ is the projection on the orthogonal complement of the space spanned by the columns of $\hat{\mathbf H}$.
Up to rotation, $\hat{\mathbf H}$ is an approximation of $\mathbf H$.

\begin{lemma}\label{lem_HatH}
Under the assumptions of Theorem \ref{thm_BoundCC}, there exists a matrix $O\in \mathbb R^{q\times q}$ such that
\begin{enumerate}
	\item $\frac{1}{\sqrt n}\|\mathbf H O- \hat{\mathbf H}\|_{op}= o_P\left(\frac{1}{s}\right)$,
	\item $\|I_q-O O^T\|_{op}=o_P\left(\frac{1}{s}\right)$.
\end{enumerate}
\end{lemma}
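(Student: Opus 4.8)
The plan is to follow the standard perturbation analysis for the PCA estimator of a factor model, as in \cite{FanAreLatent} and Lemma 7 of \cite{GuoDoublyDebiasedLasso}, adapting the bounds to the weaker factor-strength Assumption \ref{ass_DimAndPsi}. Writing $\mathbf X = \mathbf H\Psi + \mathbf E$ and letting $\hat V = \diag(d_1^2,\ldots,d_q^2)/n$ be the diagonal matrix of the top $q$ rescaled eigenvalues of $\mathbf X\mathbf X^T$, the defining eigen-equation $\frac 1n\mathbf X\mathbf X^T\hat{\mathbf H} = \hat{\mathbf H}\hat V$ together with the expansion of $\mathbf X\mathbf X^T$ yields
\begin{equation*}
\hat{\mathbf H} = \mathbf H O + \frac 1n\left(\mathbf H\Psi\mathbf E^T + \mathbf E\Psi^T\mathbf H^T + \mathbf E\mathbf E^T\right)\hat{\mathbf H}\hat V^{-1}, \qquad O := \frac 1n\Psi\Psi^T\mathbf H^T\hat{\mathbf H}\hat V^{-1}\in\mathbb R^{q\times q}.
\end{equation*}
This explicit choice of $O$ reduces the first assertion to bounding the operator norm of the three remaining noise terms. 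Note also that $\|O\|_{op}\lesssim 1$ follows directly from this definition once the eigenvalue bounds below are in place, since $\|O\|_{op}\le\frac 1n\lambda_1(\Psi)^2\|\mathbf H\|_{op}\|\hat{\mathbf H}\|_{op}\|\hat V^{-1}\|_{op}$.

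First I would control the eigenvalues. Since $\mathbf X\mathbf X^T\approx\mathbf H\Psi\Psi^T\mathbf H^T$ and $\frac 1n\mathbf H^T\mathbf H\to I_q$ by Gaussian concentration, Weyl's inequality shows that the top $q$ eigenvalues of $\frac 1n\mathbf X\mathbf X^T$ are all of order $\lambda_q(\Psi)^2$ (comparable, using $\lambda_1(\Psi)/\lambda_q(\Psi)\lesssim 1$), whereas the $(q+1)$-st is of order $\|\mathbf E\|_{op}^2/n\lesssim\max(1,p/n)$; in particular $\|\hat V^{-1}\|_{op}\lesssim\lambda_q(\Psi)^{-2}$ with an eigengap of order $\lambda_q(\Psi)^2$. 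Then, using $\|\hat{\mathbf H}\|_{op}=\sqrt n$, $\|\mathbf H\|_{op}\lesssim\sqrt n$, $\|\Psi\|_{op}\asymp\lambda_q(\Psi)$ and the high-probability bound $\|\mathbf E\|_{op}\lesssim\sqrt{N}$ with $N=\max(n,p)$ (valid under Assumptions \ref{ass_ConditionsModel0}, \ref{ass_ConditionsModel1} and \ref{ass_Gaussian}, up to logarithmic factors in the sub-Gaussian case of Remark \ref{rmk_GaussWeak}),
\begin{equation*}
\frac 1{\sqrt n}\|\hat{\mathbf H} - \mathbf H O\|_{op}\lesssim\frac 1n\cdot\frac{\|\mathbf H\Psi\mathbf E^T\|_{op} + \|\mathbf E\Psi^T\mathbf H^T\|_{op} + \|\mathbf E\mathbf E^T\|_{op}}{\lambda_q(\Psi)^2}\lesssim\frac{\sqrt N}{\sqrt n\,\lambda_q(\Psi)} + \frac N{n\,\lambda_q(\Psi)^2}.
\end{equation*}
Assertion (3) of Assumption \ref{ass_DimAndPsi} is precisely the lower bound on $\lambda_q(\Psi)^2$ (tracking the $q$- and $\log N$-factors carried by the refined noise bounds) that makes the right-hand side $o_P(1/s)$, proving the first claim.

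For the second assertion I would exploit $\frac 1n\hat{\mathbf H}^T\hat{\mathbf H} = I_q$. Writing $R = \hat{\mathbf H} - \mathbf H O$ and expanding,
\begin{equation*}
I_q = \frac 1n\hat{\mathbf H}^T\hat{\mathbf H} = O^T\left(\frac 1n\mathbf H^T\mathbf H\right)O + \frac 1n O^T\mathbf H^T R + \frac 1n R^T\mathbf H O + \frac 1n R^T R.
\end{equation*}
Since $\frac 1n\mathbf H^T\mathbf H = I_q + o_P(1/s)$ (Gaussian sample-covariance concentration, using $s\sqrt{q/n}\to 0$, which is implied by assertion (1) of Assumption \ref{ass_DimAndPsi}), each cross term is bounded by $\|O\|_{op}\cdot\frac 1{\sqrt n}\|\mathbf H\|_{op}\cdot\frac 1{\sqrt n}\|R\|_{op} = o_P(1/s)$, and $\frac 1n\|R^T R\|_{op} = \bigl(\frac 1{\sqrt n}\|R\|_{op}\bigr)^2 = o_P(1/s^2)\subseteq o_P(1/s)$, we obtain $\|I_q - O^TO\|_{op} = o_P(1/s)$. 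As $O$ is then a square matrix with $\|O\|_{op},\|O^{-1}\|_{op}\lesssim 1$, the identity $I_q - OO^T = O(I_q - O^TO)O^{-1}$ transfers the bound to $\|I_q - OO^T\|_{op} = o_P(1/s)$.

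The main obstacle is the eigenvalue analysis: establishing the lower bound of order $\lambda_q(\Psi)^2$ on $\hat V$ together with the corresponding eigengap, and obtaining sharp enough high-probability bounds on $\|\mathbf E\|_{op}$, $\|\mathbf H^T\mathbf E\|_{op}$ and $\|\mathbf E\mathbf E^T - n\Sigma_E\|_{op}$ under only the dense (rather than the classical $\lambda_q(\Psi)^2\asymp p$) factor-strength assumption. This is exactly the content that Lemma 7 of \cite{GuoDoublyDebiasedLasso} supplies in the linear-confounding setting, and the precise form of assertion (3) of Assumption \ref{ass_DimAndPsi} is what absorbs the $\sqrt{q^3(\log N)^3}$ and $\sqrt{p/n}\sqrt{q(\log N)^2}$ factors these bounds carry.
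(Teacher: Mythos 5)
Your choice of $O$ is exactly the paper's: the paper sets $O=\frac{1}{np}\Psi\Psi^T\mathbf H^T\hat{\mathbf H}\Lambda^{-2}$ with $\Lambda^2$ the top-$q$ eigenvalues of $\frac{1}{np}\mathbf X\mathbf X^T$, which coincides with your $\frac 1n\Psi\Psi^T\mathbf H^T\hat{\mathbf H}\hat V^{-1}$, and your proof of assertion (2) (expand $I_q=\frac 1n\hat{\mathbf H}^T\hat{\mathbf H}$, use $\|O\|_{op}\lesssim 1$, concentration of $\frac 1n\mathbf H^T\mathbf H$, assertion (1), and then pass from $O^TO$ to $OO^T$ via invertibility of $O$) is essentially identical to the paper's. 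For assertion (1), however, you take a genuinely different route: you perturb the eigen-equation globally in operator norm, whereas the paper (following Lemma 9 and Section B.2 of the doubly debiased Lasso paper) performs a row-wise analysis, bounding $\max_{t}\|\hat h_t-O^Th_t\|_2$ through events controlling $\max_t\|h_t\|_2$, $\max_i\|\Psi e_i/p\|_2$ and $\max_{i\neq t}|e_i^Te_t|/p$, and only then converts via $\|\hat{\mathbf H}-\mathbf HO\|_{op}\leq\sqrt n\max_t\|\hat h_t-O^Th_t\|_2$. The payoff of the paper's finer route is the bound $\frac{1}{\sqrt n}\|\hat{\mathbf H}-\mathbf HO\|_{op}\lesssim\frac{\sqrt p\,q^{3/2}(\log N)^{3/2}}{\lambda_q(\Psi)^2}+\frac{p\sqrt q\log N}{\sqrt n\,\lambda_q(\Psi)^2}$, whose two terms match the two terms of assertion (3) of Assumption \ref{ass_DimAndPsi} verbatim; this is why that assumption reads the way it does.

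This exposes the one imprecise step in your plan: your cruder bound $\frac{\sqrt N}{\sqrt n\,\lambda_q(\Psi)}+\frac{N}{n\,\lambda_q(\Psi)^2}$ carries $\lambda_q(\Psi)$ to the \emph{first} power in its leading term (coming from $\|\mathbf H\Psi\mathbf E^T\|_{op}\leq\|\mathbf H\|_{op}\|\Psi\|_{op}\|\mathbf E\|_{op}$), so assertion (3) is \emph{not} ``precisely'' the condition that makes it $o_P(1/s)$ --- e.g.\ for $p\ll n$ your bound requires $\lambda_q(\Psi)^2\gg s^2$, which assertion (3) alone does not state. The gap can be closed, but only by invoking the other parts of Assumption \ref{ass_DimAndPsi}: assertion (4) gives $\lambda_q(\Psi)^2\leq\|\Psi\|_F^2/q\lesssim p\log(pq)$, and with this cap (plus assertion (2) and $s\ll\sqrt{n/\log(Kp)}$ from assertion (1)) one checks, separately for $p\geq n$ and $p<n$, that each of your two terms is dominated by the paper's two terms, so assertion (3) then suffices. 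You should add this verification explicitly; as written, the sentence claiming assertion (3) ``is precisely the lower bound \ldots that makes the right-hand side $o_P(1/s)$'' asserts the crux rather than proving it.
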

The proof of Lemma \ref{lem_HatH} is presented in Section \ref{sec_ProofWeakFactorModel}.
Define $\tau_n^\text{PCA}$ according to \eqref{eq_DefCC} but with $Q = Q^\text{PCA}$ instead of $Q = Q^\text{trim}$. We first show that 
\begin{equation}\label{eq_CompTrimPCA}
\tau_n\gtrsim \tau_n^\text{PCA}
\end{equation}
with high probability. For this, recall the definition of $Q^\text{trim}= U\diag(\tilde d_1, \ldots, \tilde d_r, 1, \ldots, 1) U^T$ with $\tilde d_l=\min(d_{\lfloor\rho r\rfloor}/d_l, 1)$ for some $\rho \in (0,1)$. Hence, if $q< \lfloor\rho r\rfloor$,
$$\inf_{z\in \mathbb R^n}\frac{\|Q^\text{trim}z\|_2^2}{\|Q^\text{PCA}z\|_2^2}=\inf_{z\in \mathbb R^n}\frac{\sum_{l=1}^r \tilde d_l^2 z_l^2+\sum_{l=r+1}^nz_l^2}{\sum_{l=1}^r \bar d_l^2 z_l^2+\sum_{l=r+1}^nz_l^2}=\min_{l=1,\ldots, r}\frac{\tilde d_l^2}{\bar d_l^2}= \frac{d_{\lfloor\rho r\rfloor}^2}{d_{q+1}^2}.$$
It follows that for $q<\lfloor \rho r\rfloor$,  $\tau_n\geq \frac{d_{\lfloor\rho r\rfloor}^2}{d_{q+1}^2} \tau_n^\text{PCA}$. By Proposition 3 in \cite{GuoDoublyDebiasedLasso}, we have that with high probability $d_{q+1}^2\lesssim \max(n, p).$ \Revision{By Assumption \ref{ass_Gaussian}, $X$ is a Gaussian random vector and hence, the random vector $\E[XX^T]^{-1} X$ has independent entries. We can apply Lemma 7 from \cite{GuoDoublyDebiasedLasso} to obtain that with high probability $d_{\lfloor\rho r\rfloor}^2\gtrsim \max(n, p)$.\footnote{If one wants to weaken the Gaussianity assumption as written in Remark \ref{rmk_GaussWeak}, one needs additional assumptions to apply Lemma 7 from \cite{GuoDoublyDebiasedLasso}, in particular $p/n\to c^\ast \in [0,\infty)$.}} Hence, on an event $\mathcal C$ with $\Prob(\mathcal C) = 1-o(1)$, we have that $\tau_n\gtrsim \tau_n^\text{PCA}$. It remains to prove
\begin{equation}\label{eq_ControlTauPCA}
    \tau_n^\text{PCA}\gtrsim \tau_0
\end{equation}
with high probability.

\subsubsection{Proof of (\ref{eq_ControlTauPCA})}
For ease of notation, we omit the $\inf_{\mathcal T\subset[p],|\mathcal T|\leq s}$ in the following, but work with a fixed $\mathcal T$. One can just replace all $\inf_{f\in\mathcal F_{M,\mathcal T}^n}$ by $\inf_{\mathcal T\subset[p],|\mathcal T|\leq s}\inf_{f\in\mathcal F_{M,\mathcal T}^n}$ (and similarly for the supremum) to obtain the full result.
 Recall that $Q^\text{PCA}=I_n-\frac{1}{n}\hat{\mathbf H}\hat{\mathbf H}^T$. Hence,
\begin{align}
&\inf_{f^w\in \mathcal F^n_{M, \mathcal T}}\frac{\frac{1}{n}\|Q^\text{PCA}\mathbf f^w\|_2^2}{w_0^2+\sum_{j=1}^p\frac{1}{n}\|\mathbf f^w_j\|_2^2}\nonumber=\inf_{f^w\in \mathcal F^n_{M, \mathcal T}}\frac{\frac{1}{n}\|\mathbf f^w-\frac{1}{n}\hat {\mathbf H}\hat{\mathbf H}^T \mathbf f^w\|_2^2}{w_0^2+\sum_{j=1}^p\frac{1}{n}\|\mathbf f^w_j\|_2^2}\nonumber\\
&\geq \inf_{f^w\in \mathcal F^n_{M, \mathcal T}}\frac{\frac{1}{n}\|\mathbf f^w-\frac{1}{n}\mathbf H \mathbf H^T \mathbf f^w\|_2^2}{w_0^2+\sum_{j=1}^p\frac{1}{n}\|\mathbf f^w_j\|_2^2}-\sup_{f^w\in \mathcal F^n_{M, \mathcal T}}\frac{\left|\frac{1}{n}\|\mathbf f^w-\frac{1}{n}\mathbf H \mathbf H^T \mathbf f^w\|_2^2-\frac{1}{n}\|\mathbf f^w-\frac{1}{n}\hat{\mathbf H} \hat{\mathbf H}^T \mathbf f^w\|_2^2\right|}{w_0^2+\sum_{j=1}^p\frac{1}{n}\|\mathbf f^w_j\|_2^2}\label{eq_SampleCCBound}
\end{align}
We first prove the following lemma.
\begin{lemma}\label{lem_TechnicalCC}
$$\sup_{f^w\in \mathcal F^n_{M, \mathcal T}}\frac{\left|\frac{1}{n}\|\mathbf f^w-\frac{1}{n}\mathbf H \mathbf H^T \mathbf f^w\|_2^2-\frac{1}{n}\|\mathbf f^w-\frac{1}{n}\hat{\mathbf H} \hat{\mathbf H}^T \mathbf f^w\|_2^2\right|}{w_0^2+\sum_{j=1}^p\frac{1}{n}\|\mathbf f^w_j\|_2^2}=o_P(1).$$
\end{lemma}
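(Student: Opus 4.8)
The plan is to introduce the (near-)projection matrices $P=\tfrac1n\mathbf H\mathbf H^T$ and $\hat P=\tfrac1n\hat{\mathbf H}\hat{\mathbf H}^T$, so that $Q^\text{PCA}=I_n-\hat P$, and to exploit the factorization $\|a\|_2^2-\|b\|_2^2=\langle a+b,\,a-b\rangle$ with $a=(I_n-P)\mathbf f^w$ and $b=(I_n-\hat P)\mathbf f^w$. This turns the numerator into
\begin{equation*}
\tfrac1n\left|\,\|(I_n-P)\mathbf f^w\|_2^2-\|(I_n-\hat P)\mathbf f^w\|_2^2\,\right|=\tfrac1n\left|\big\langle (2I_n-P-\hat P)\mathbf f^w,\,(\hat P-P)\mathbf f^w\big\rangle\right|,
\end{equation*}
which by Cauchy--Schwarz is at most $\tfrac1n\|2I_n-P-\hat P\|_{op}\,\|\hat P-P\|_{op}\,\|\mathbf f^w\|_2^2$. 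Since $\hat P$ is an orthogonal projection (because $\tfrac1n\hat{\mathbf H}^T\hat{\mathbf H}=I_q$) and $\|P\|_{op}=\lambda_{\max}(\tfrac1n\mathbf H^T\mathbf H)=O_P(1)$ (as $\tfrac1n\mathbf H^T\mathbf H\to I_q$ for $q\ll n$ with $H$ Gaussian), we have $\|2I_n-P-\hat P\|_{op}=O_P(1)$. Thus the numerator is controlled, uniformly over $f^w$, by $O_P(1)\,\|\hat P-P\|_{op}\,\tfrac1n\|\mathbf f^w\|_2^2$, and it remains to bound the two factors.

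The first main step is to show $\|\hat P-P\|_{op}=o_P(1/s)$. Writing $\hat{\mathbf H}=\mathbf H O+\Delta$ with $\tfrac1{\sqrt n}\|\Delta\|_{op}=o_P(1/s)$ from Lemma \ref{lem_HatH}(1), I would expand
\begin{equation*}
n(P-\hat P)=\mathbf H(I_q-OO^T)\mathbf H^T-\mathbf H O\Delta^T-\Delta O^T\mathbf H^T-\Delta\Delta^T
\end{equation*}
and bound each term in operator norm. Using $\tfrac1n\|\mathbf H\|_{op}^2=O_P(1)$ and $\|O\|_{op}=O_P(1)$ (the latter following from $\|I_q-OO^T\|_{op}=o_P(1/s)$), together with Lemma \ref{lem_HatH}(1)--(2), the three contributions are of order $O_P(1)\cdot o_P(1/s)$, $O_P(1)\cdot o_P(1/s)$ and $o_P(1/s^2)$ respectively, giving $\|\hat P-P\|_{op}=o_P(1/s)$.

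The second main step is to show $\tfrac1n\|\mathbf f^w\|_2^2\lesssim s\big(w_0^2+\sum_{j=1}^p\tfrac1n\|\mathbf f_j^w\|_2^2\big)$ uniformly over $f^w\in\mathcal F_{M,\mathcal T}^n$; this is exactly where the cone condition defining $\mathcal F_{M,\mathcal T}^n$ enters. Since $\mathbf f^w=w_0\mathbf 1_n+\sum_{j=1}^p\mathbf f_j^w$, the triangle inequality gives $\tfrac1{\sqrt n}\|\mathbf f^w\|_2\le |w_0|+\sum_{j=1}^p\tfrac1{\sqrt n}\|\mathbf f_j^w\|_2$; the cone condition bounds the $\mathcal T^c$-part of this sum by $M\big(|w_0|+\sum_{j\in\mathcal T}\tfrac1{\sqrt n}\|\mathbf f_j^w\|_2\big)$, and Cauchy--Schwarz over the $|\mathcal T|+1\le s+1$ remaining terms yields $|w_0|+\sum_{j\in\mathcal T}\tfrac1{\sqrt n}\|\mathbf f_j^w\|_2\le\sqrt{s+1}\,\big(w_0^2+\sum_{j=1}^p\tfrac1n\|\mathbf f_j^w\|_2^2\big)^{1/2}$. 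Squaring produces the factor $s$. Combining the two steps, the supremum is bounded by $O_P(1)\cdot o_P(1/s)\cdot O(s)=o_P(1)$, as claimed.

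I expect the delicate point to be the bookkeeping of the $s$-dependence: the cone condition forces a loss of a factor $s$ when passing from $\tfrac1n\|\mathbf f^w\|_2^2$ to the denominator, and this is precisely compensated by the $o_P(1/s)$ rate in Lemma \ref{lem_HatH}. This is exactly why that lemma is stated with the sharper $o_P(1/s)$ bounds rather than mere $o_P(1)$; a weaker factor-model approximation would not suffice to kill the $s$ introduced by the cone.
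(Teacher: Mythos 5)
Your proof is correct and takes essentially the same route as the paper: the key steps in both are the operator-norm bound $\|\tfrac{1}{n}\hat{\mathbf H}\hat{\mathbf H}^T-\tfrac{1}{n}\mathbf H\mathbf H^T\|_{op}=o_P(1/s)$ deduced from Lemma \ref{lem_HatH}, followed by the cone-condition/Cauchy--Schwarz bound $\tfrac{1}{n}\|\mathbf f^w\|_2^2\leq (1+M)^2(s+1)\bigl(w_0^2+\sum_{j=1}^p\tfrac{1}{n}\|\mathbf f_j^w\|_2^2\bigr)$, so that the loss of a factor $s$ is exactly compensated. The only (harmless, and in fact slightly cleaner) difference is that your polarization identity $\|a\|_2^2-\|b\|_2^2=\langle a+b,\,a-b\rangle$ absorbs the term $\tfrac{1}{n^2}\mathbf f^T\mathbf H(\tfrac{1}{n}\mathbf H^T\mathbf H-I_q)\mathbf H^T\mathbf f$ arising from $P$ not being an exact projection, which the paper bounds separately using $\|\tfrac{1}{n}\mathbf H^T\mathbf H-I_q\|_{op}=o_P(1/s)$.
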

\begin{proof}
Since $\frac{1}{n}\hat{\mathbf H}^T\hat{\mathbf H}= I_q$, we have with $\mathbf f=\mathbf f^w$,
\begin{align}
&\left|\frac{1}{n}\|\mathbf f-\frac{1}{n}\mathbf H \mathbf H^T \mathbf f\|_2^2-\frac{1}{n}\|\mathbf f-\frac{1}{n}\hat{\mathbf H} \hat{\mathbf H}^T \mathbf f\|_2^2\right|=\left|-\frac{2}{n^2}\mathbf f^T\mathbf H \mathbf H^T\mathbf f + \frac{1}{n^2} \mathbf f^T \mathbf H\left(\frac{1}{n}\mathbf H^T \mathbf H\right) \mathbf H^T \mathbf f+\frac{1}{n^2}\mathbf f^T\hat{\mathbf H} \hat{\mathbf H}^T \mathbf f\right|\nonumber\\
&\leq\left|\frac{1}{n^2}\mathbf f^T \mathbf H\left(\frac{1}{n}\mathbf H^T \mathbf H- I_q\right)\mathbf H^T\mathbf f\right|+\left|\frac{1}{n}\mathbf f^T\left(\frac{1}{n}\hat{\mathbf H}\hat{\mathbf H}^T -\frac{1}{n} \mathbf H \mathbf H^T\right)\mathbf f\right|\nonumber\\
&\leq \frac{1}{n}\|\mathbf f\|_2^2\frac{1}{ n}\| \mathbf H\|_{op}^2\|\frac{1}{n}\mathbf H^T \mathbf H-I_q\|_{op}+\frac{1}{n}\|\mathbf f\|_2^2\|\frac{1}{n} \hat {\mathbf H}\hat {\mathbf H}^T-\frac{1}{n} \mathbf{H} \mathbf H^T\|_{op}.\label{eq_BoundDiffHHat}
\end{align}
Observe that
\begin{align*}
&\|\frac{1}{n}\hat{\mathbf H}\hat{\mathbf H}^T-\frac{1}{n} \mathbf H \mathbf H^T\|_{op}\leq\frac{1}{n}\|\hat{\mathbf H}\hat{\mathbf H}^T-\mathbf H O O^T \mathbf H^ T\|_{op}+\frac{1}{n}\| \mathbf H O O^T \mathbf H^T- \mathbf H \mathbf H^T\|_{op}\\
&\leq\frac{1}{n}\|\hat{\mathbf H}\|_{op}\|\hat{\mathbf H}- \mathbf HO\|_{op}+\frac{1}{n}\|\mathbf H\|_{op}\|O\|_{op}\|\hat {\mathbf H}-\mathbf H O\|_{op}+\frac{1}{n}\|\mathbf H\|_{op}^2\|OO^T-I_q\|_{op}.
\end{align*}
Since the rows of $\mathbf H$ are i.i.d. sub-Gaussian isotropic random vectors in $\mathbb R^q$, we have $\frac{1}{n}\|\mathbf H\|_{op}^2=\lambda_{\text{max}}(\frac{1}{n} \mathbf H^T \mathbf H)=O_P(1)$ and $\|\frac{1}{n} \mathbf H^T \mathbf H- I_q\|_{op} = O_P(\frac{\sqrt q}{\sqrt n})=o_P(\frac{1}{s})$, see for example Theorem 4.6.1 in \cite{VershyninHDProb}. Moreover, we have $\|\frac{1}{\sqrt n}\hat{\mathbf H}\|_{op}^2=\lambda_{\max}(\frac{1}{n} \hat{\mathbf H}^T\hat{\mathbf H})=1$ and $\|O\|_{op}=1+o_P(1)$ by Lemma \ref{lem_HatH}. Hence, we obtain from  \eqref{eq_BoundDiffHHat} and Lemma \ref{lem_HatH}
$$\sup_{f^w\in \mathcal F^n_{M, \mathcal T}}\frac{\left|\frac{1}{n}\|\mathbf f^w-\frac{1}{n}\mathbf H \mathbf H^T \mathbf f^w\|_2^2-\frac{1}{n}\|\mathbf f^w-\frac{1}{n}\hat{\mathbf H} \hat{\mathbf H}^T \mathbf f^w\|_2^2\right|}{w_0^2+\sum_{j=1}^p\frac{1}{n}\|\mathbf f^w_j\|_2^2}\leq \frac{\frac{1}{n}\|\mathbf f^w\|_2^2}{w_0^2+\sum_{j=1}^p\frac{1}{n}\|\mathbf f_j^w\|_2^2}o_P(\frac{1}{s}).$$
For $f^w\in \mathcal F^n_{M, \mathcal T}$, we apply the triangle inequality and the Cauchy-Schwarz inequality to get
\begin{align*}
\frac{1}{n}\|\mathbf f^w\|_2^2&\leq\left(|w_0|+\sum_{j=1}^p \frac{1}{\sqrt n}\|\mathbf f_j^w\|_2\right)^2\\
&\leq (1+M)^2\left(|w_0|+\sum_{j\in\mathcal T} \frac{1}{\sqrt n}\|\mathbf f_j^w\|_2\right)^2\\
&\leq (s+1)(1+M)^2\left(w_0^2+\sum_{j\in \mathcal T}\frac{1}{n}\|\mathbf f^w_j\|_2^2\right)\\
&\leq (s+1)(1+M)^2\left(w_0^2+\sum_{j=1}^p\frac{1}{n}\|\mathbf f^w_j\|_2^2\right),
\end{align*}
which gives the result.
\end{proof}
We now reduce the first term \eqref{eq_SampleCCBound} to its population version $\tau_0$. Note that the functions in $\mathcal F_{M,\mathcal T}^n$ are empirically centered, whereas the functions in $\mathcal F_{\textup{add}}$ are centered with respect to the expectation. Hence, we need additional centering. We use the following Lemma.
\begin{lemma}\label{lem_TechnicalCC2}
$$\inf_{f^w\in \mathcal F_{M,\mathcal T}^n}\inf _{j=1, \ldots, p}\frac{\E\left[\left(f_j^w(X_j)-\E[f_j^w(X_j)]\right)^2\right]}{\frac{1}{n}\|\mathbf f_j^w\|_2^2}= 1+o_P(1)$$
\end{lemma}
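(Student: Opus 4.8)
The plan is to observe that, for each fixed component $j$, the ratio in the statement is a ratio of two quadratic forms in the coefficient vector $w_j$, and to control these quadratic forms uniformly by showing that the empirical second-moment matrix is close to its population analogue relative to its smallest eigenvalue. Write $f_j^w(\cdot)=b_j(\cdot)^T w_j$ and set $\Sigma_j=\E[b_j(X_j)b_j(X_j)^T]$, $\mu_j=\E[b_j(X_j)]$, $\Sigma_j^c=\Sigma_j-\mu_j\mu_j^T$, so that the numerator equals $w_j^T\Sigma_j^c w_j$. Since the functions defining elements of $\mathcal F_{M,\mathcal T}^n$ are empirically centered, we have $\bar b_j^T w_j=0$ with $\bar b_j=\frac{1}{n}\sum_i b_j(x_{i,j})$, hence the denominator $\frac{1}{n}\|\mathbf f_j^w\|_2^2=w_j^T\hat\Sigma_j w_j=w_j^T\hat\Sigma_j^c w_j$ where $\hat\Sigma_j^c=\hat\Sigma_j-\bar b_j\bar b_j^T$. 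As $f^w$ ranges over $\mathcal F_{M,\mathcal T}^n$ with $j$ fixed, $w_j$ ranges within the set of vectors satisfying $\bar b_j^T w_j=0$, and it suffices to bound the ratio uniformly over all such $w_j$ (a superset of those actually realized), giving a two-sided bound. For every such $w_j$,
\begin{equation*}
\left|\frac{w_j^T\Sigma_j^c w_j}{w_j^T\hat\Sigma_j w_j}-1\right|=\frac{|w_j^T(\Sigma_j^c-\hat\Sigma_j^c)w_j|}{w_j^T\hat\Sigma_j w_j}\leq \frac{\|\Sigma_j^c-\hat\Sigma_j^c\|_{op}}{\lambda_{\min}(\hat\Sigma_j)},
\end{equation*}
which reduces the claim to showing that the right-hand side, maximized over $j=1,\ldots,p$, is $o_P(1)$.

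Next I would establish the two ingredients. For the denominator, assertion (2) of Assumption \ref{ass_CondBasis} gives directly $1/\min_j\lambda_{\min}(\hat\Sigma_j)=o_P\big(\sqrt{n/\log(Kp)}/(Ks)\big)$. For the numerator, I would first obtain entry-wise concentration: since $b_j^k(X_j)$ is sub-Gaussian with $\|b_j^k(X_j)\|_{\psi_2}\leq C$ by assertion (1) of Assumption \ref{ass_CondBasis}, each product $b_j^k(X_j)b_j^l(X_j)$ is sub-exponential, so Bernstein's inequality together with a union bound over the at most $pK^2$ triples $(j,k,l)$ yields $\max_{j,k,l}|(\hat\Sigma_j)_{kl}-(\Sigma_j)_{kl}|=O_P(\sqrt{\log(Kp)/n})$. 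Bounding the operator norm crudely by the Frobenius norm gives $\max_j\|\hat\Sigma_j-\Sigma_j\|_{op}\leq K\max_{j,k,l}|(\hat\Sigma_j)_{kl}-(\Sigma_j)_{kl}|=O_P(K\sqrt{\log(Kp)/n})$. The centering correction satisfies $\|\mu_j\mu_j^T-\bar b_j\bar b_j^T\|_{op}\leq \|\mu_j-\bar b_j\|_2(\|\mu_j\|_2+\|\bar b_j\|_2)$, which is of order at most $K\sqrt{\log(Kp)/n}$ by the same entry-wise concentration (using the partition of unity to control $\|\mu_j\|_2,\|\bar b_j\|_2$); hence $\max_j\|\Sigma_j^c-\hat\Sigma_j^c\|_{op}=O_P(K\sqrt{\log(Kp)/n})$.

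Combining the two displays, the bound becomes $O_P(K\sqrt{\log(Kp)/n})\cdot o_P\big(\sqrt{n/\log(Kp)}/(Ks)\big)=o_P(1/s)=o_P(1)$, uniformly in $j$ and in $w_j$ with $\bar b_j^T w_j=0$; thus $\sup_{f^w,j}|\text{ratio}-1|=o_P(1)$, which in particular gives $\inf_{f^w,j}\text{ratio}=1+o_P(1)$. The main obstacle is the numerator bound: because $\lambda_{\min}(\hat\Sigma_j)\asymp h\asymp 1/K$ shrinks with the number of basis functions, the concentration error must beat this $1/K$ scale, and it is exactly this competition — controlled through Assumption \ref{ass_CondBasis}(2) — that forces the restriction on how fast $K$ and $s$ may grow. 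A secondary technical point is keeping track of the discrepancy between the empirical centering built into $\mathcal F_{M,\mathcal T}^n$ and the population centering implicit in $\Sigma_j^c$, which is what necessitates the separate control of the rank-one correction $\mu_j\mu_j^T-\bar b_j\bar b_j^T$.
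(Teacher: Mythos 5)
Your proof is correct and follows essentially the same route as the paper's: both reduce the claim to the uniform bound $K\,\max_j\|\hat\Sigma_j-\Sigma_j\|_\infty/\min_{j}\lambda_{\min}(\hat\Sigma_j)=o_P(1/s)$, obtained from entrywise concentration of the sub-Gaussian basis products (the paper invokes Problem 14.3 and Lemma 14.16 of \cite{BuehlmannHDStats}, which is the same Bernstein-plus-union-bound argument you sketch) combined with Assumption \ref{ass_CondBasis}, assertion (2). The only cosmetic differences are that you package the factor-$K$ loss as $\|\cdot\|_{op}\leq\|\cdot\|_F\leq K\|\cdot\|_\infty$ and treat the centering as a rank-one correction $\mu_j\mu_j^T-\bar b_j\bar b_j^T$, whereas the paper uses $\|w_j\|_1^2\leq K\|w_j\|_2^2$ with the elementwise norm and controls $\E[f_j^w(X_j)]^2$ as a separate term before subtracting.
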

\begin{proof}
Define $\hat \Sigma_j=\frac{1}{n}(B^{(j)})^TB^{(j)}$ and $\Sigma_j=\E[b_j(X)b_j(X)^T]$ and observe
\begin{align}
\left|\inf_{f^w\in \mathcal F^n_{M, \mathcal T}} \inf_{j=1, \ldots, p}\frac{\E[f_j^w(X_j)^2]}{\frac{1}{n}\|\mathbf f_j^w\|_2^2}-1\right|&\leq \sup_{f^w\in \mathcal F^n_{M, \mathcal T}} \sup_{j=1, \ldots, p}\frac{|\E[f_j^w(X_j)^2]-\frac{1}{n}\|\mathbf f_j^w\|_2^2|}{\frac{1}{n}\|\mathbf f_j^w\|_2^2}\nonumber\\
&\leq \sup_{f^w\in \mathcal F^n_{M, \mathcal T}} \sup_{j=1, \ldots, p} \frac{\|w_j\|_1^2\|\hat\Sigma_j-\Sigma_j\|_\infty}{\|w_j\|_2^2\lambda_{\min}(\hat \Sigma_j)}\nonumber\\
&\leq \frac{K}{\min_{j=1, \ldots, p}\lambda_{\min}(\hat\Sigma_j)}\left\|\frac{1}{n}\mathbf B^T\mathbf B-\E[\mathbf b(X)\mathbf b(X)^T]\right\|_\infty\nonumber
\end{align}
with $\mathbf b(X)^T=(b_1(X_1)^T, \ldots, b_p(X_P))^T)^T\in \mathbb R^{Kp}$ and the matrix $\mathbf B\in \mathbb R^{n\times Kp}$ having rows $\mathbf b(x_i)\in\mathbb R^{Kp}$. By Assumption \ref{ass_CondBasis}, assertion (1), we can apply Problem 14.3 in \cite{BuehlmannHDStats} and obtain 
$$\left\|\frac{1}{n}\mathbf B^T\mathbf B-\E[\mathbf b(X)\mathbf b(X)^T]\right\|_\infty=O_P\left(\sqrt{\frac{\log(Kp)}{n}}\right).$$
By Assumption \ref{ass_DimAndPsi}, assertion (1), it follows that
\begin{equation}\label{eq_LemTechnicalCC2Part1}
\inf_{f^w\in \mathcal F^n_{M, \mathcal T}} \inf_{j=1, \ldots, p}\frac{\E[f_j^w(X_j)^2]}{\frac{1}{n}\|\mathbf f_j^w\|_2^2}=1+o_P(1).
\end{equation}
Since $f^w\in \mathcal F_{M, \mathcal T}^n$ is empirically centered, we have
\begin{align}
\E[f_j^w(X_j)]^2&=\left(\E[f_j^w(X_j)]-\frac{1}{n}\sum_{i=1}^n f_j^w(x_{i,j})\right)^2\nonumber \\
&=\left((\E[b_j(X_j)]-\frac{1}{n}\sum_{i=1}^n b_j(x_{i,j}))^Tw_j\right)^2\nonumber \\
&\leq \|w_j\|_1^2\|\E[b_j(X_j)]-\frac{1}{n}\sum_{i=1}^n b_j(x_{i,j}\|_\infty^2\nonumber \\
&\leq  K \|w_j\|_2^2\|\E[\mathbf b(X)]-\frac{1}{n}\sum_{i=1}^n \mathbf b(x_{i,\cdot})\|_\infty^2.\label{eq_ExpFjw}
\end{align}
 Using Lemma 14.16 in \cite{BuehlmannHDStats}, it follows that $\|\E[\mathbf b(X)]-\frac{1}{n}\sum_{i=1}^n \mathbf b(x_{i,\cdot})\|_\infty=O_P(\sqrt{\frac{\log (Kp)}{n}})$ and hence, we obtain that
$$\sup_{f^w\in \mathcal F_{M, \mathcal T}^n}\sup_{j=1, \ldots, p} \frac{\E[f_j^w(X_j)]^2}{\frac{1}{n}\|\mathbf f_j^w\|_2^2}\leq \frac{K}{\min_{j=1, \ldots, p}\lambda_{\min}(\hat \Sigma_j)} O_P\left(\frac{\log (Kp)}{n}\right)= o_P(1)$$
by Assumption \ref{ass_CondBasis}, assertion (2). Together with \eqref{eq_LemTechnicalCC2Part1}, it follows that 
\begin{align*}
    \inf_{f^w\in \mathcal F_{M,\mathcal T}^n}\inf _{j=1, \ldots, p}\frac{\E\left[\left(f_j^w(X_j)-\E[f_j^w(X_j)]\right)^2\right]}{\frac{1}{n}\|\mathbf f_j^w\|_2^2}&=\inf_{f^w\in \mathcal F_{M,\mathcal T}^n}\inf _{j=1, \ldots, p}\frac{\E[f_j^w(X_j)^2]-\E[f_j(X_j)]^2}{\frac{1}{n}\|\mathbf f_j^w\|_2^2}\\
    &= 1+o_P(1),
\end{align*}
which concludes the proof.
\end{proof}
We continue with \eqref{eq_SampleCCBound}. Define $a_w=\mathbf H^T \mathbf f^w\in \mathbb R^q$.
For every $f^w\in \mathcal F_{M,\mathcal T}^n$ and $a\in \mathbb R^q$, we can write
\begin{equation}\label{eq_FactorizeCC}
\frac{\frac{1}{n}\|\mathbf f^w-\mathbf H a\|_2^2}{w_0^2+\sum_{j=1}^p\frac{1}{n}\|\mathbf f_j^w\|_2^2}=A_{f^w, a}\cdot B_{f^w, a}\cdot C_{f^w}
\end{equation}
with
\begin{align*}
A_{f^w, a}&=\frac{\frac{1}{n}\|\mathbf f^w-\mathbf H a\|_2^2}{\E\left[(f^w(X)-H^Ta-\sum_{j=1}^p\E[f_j(X_j)])^2\right]}\\
B_{f^w, a}&=\frac{\E\left[(f^w(X)-H^Ta-\sum_{j=1}^p\E[f_j(X_j)])^2\right]}{w_0^2+\sum_{j=1}^p\E\left[(f_j^w(X_j)-\E[f_j^w(X_j)])^2\right]}\\
C_{f^w}&=\frac{w_0^2+\sum_{j=1}^p\E\left[(f_j^w(X_j)-\E[f_j^w(X_j)])^2\right]}{w_0^2+\sum_{j=1}^p\frac{1}{n}\|\mathbf f_j^w\|_2^2}
\end{align*}
From Lemma \ref{lem_TechnicalCC2}, it follows that
\begin{equation}\label{eq_ControlCfw}
\inf_{f^w\in \mathcal F_{M,\mathcal T}^n}C_{f_w}\leq\min\left(1,\inf_{f^w\in \mathcal F_{M,\mathcal T}^n}\inf _{j=1, \ldots, p}\frac{\E\left[\left(f_j^w(X_j)-\E[f_j^w(X_j)]\right)^2\right]}{\frac{1}{n}\|\mathbf f_j^w\|_2^2} \right)= 1+o_P(1).
\end{equation}
For $B_{f_w, a}$, note that $f^w-\sum_{j=1}^p\E[f_j(X_j)]=w_0+\sum_{j=1}^pf_j^w(X_j)-\E[f_j^w(X_j)]\in \mathcal F_{\textup{add}}$. Hence, 
\begin{equation}\label{eq_ControlBfw}
\inf_{f^w\in \mathcal F_{M, \mathcal T}^n, a\in \mathbb R^q} B_{f^w, a}\geq \tau_0.
\end{equation}
For $A_{f^w, a}$, note that for all $f^w\in \mathcal F_{M, \mathcal T}^n$ and $a\in \mathbb R^q$, 
\begin{equation}\label{eq_FactorizeAfw}
|A_{f^w, a}-1|=\frac{\left|\frac{1}{n}\|\mathbf f^w-\mathbf H a\|_2^2-\E\left[(f^w(X)-H^Ta-\sum_{j=1}^p\E[f_j^w(X_j)])^2\right]\right|}{\E\left[(f^w(X)-H^Ta-\sum_{j=1}^p\E[f_j^w(X_j)])^2\right]}=D_{f^w, a}\cdot E_{f^w}\cdot F_{f^w, a}
\end{equation}
with
\begin{align*}
D_{f^w, a}&=\frac{\left|\frac{1}{n}\|\mathbf f^w-\mathbf H a\|_2^2-\E\left[(f^w(X)-H^Ta-\sum_{j=1}^p\E[f_j^w(X_j)])^2\right]\right|}{w_0^2+\sum_{j=1}^p\frac{1}{n}\|\mathbf f_j^w\|_2^2}\\
E_{f^w}&=\frac{w_0^2+\sum_{j=1}^p\frac{1}{n}\|\mathbf f_j\|_2^2}{w_0^2+\sum_{j=1}^p \E\left[(f_j^w(X_j)-\E[f_j^w(X_j)])^2\right]}\\
F_{f^w, a}&=\frac{w_0^2+\sum_{j=1}^p \E\left[(f_j^w(X_j)-\E[f_j^w(X_j)])^2\right]}{\E\left[(f^w(X)-H^Ta-\sum_{j=1}^p\E[f_j^w(X_j)])^2\right]}
\end{align*}
From Lemma \ref{lem_TechnicalCC2}, it follows that 
\begin{equation}\label{eq_ControlEfw}
\sup_{f^w\in \mathcal F_{M,\mathcal T}^n}E_{f^w}=1+o_P(1).
\end{equation}
Moreover, from the definition of $\tau_0$, we have that
\begin{equation}\label{eq_ControlFfw}
\sup_{f^w\in \mathcal F_{M,\mathcal T}^n, a\in \mathbb R^q}F_{f^w, a}\leq \frac{1}{\tau_0}.
\end{equation}
For $D_{f^w, a}$, we can write
\begin{equation}\label{eq_SumDfw}
D_{f^w, a}\leq D_{f^w, a}'+ D_{f^w}''
\end{equation}
\begin{align*}
D_{f^w, a}'&=\frac{\left|\frac{1}{n}\|\mathbf f^w-\mathbf H a\|_2^2-\E\left[(f^w(X)-H^Ta)^2\right]\right|}{w_0^2+\sum_{j=1}^p\frac{1}{n}\|\mathbf f_j^w\|_2^2}\\
D_{f^w}''&=\frac{\left|2\E\left[\sum_{j=1}^pf_j^w(X_j)\right]^2+2 w_0\E[\sum_{j=1}^p f_j^w(X_j)]\right|}{w_0^2+\sum_{j=1}^p\frac{1}{n}\|\mathbf f_j^w\|_2^2}
\end{align*}
Define the matrix $\bar{\mathbf B}\in \mathbb R^{n\times (pK+q+1)}$ with rows $\bar{\mathbf B}_{i,\cdot}\coloneqq \bar{\mathbf b}(x_{i,\cdot}, h_{i,\cdot})^T\coloneqq (1, b_1(x_{i,1})^T, \ldots, b_p(x_{i, p})^T, h_{i,\cdot}^T)^T$ and define the vector $\bar{\mathbf w}=(w_0, w_1^T, \ldots, w_p^T, -a^T)^T\in \mathbb R^{Kp+q+1}$. Observe that $f^w(X)-H^Ta=\bar{\mathbf b}(X, H)^T\bar{\mathbf w}$ and hence
\begin{align}
\left|\frac{1}{n}\|\mathbf f^w-\mathbf H a\|_2^2-\E\left[(f^w(X)-H^Ta)^2\right]\right|&=\left|\bar{\mathbf w}^T(\frac{1}{n}\bar{\mathbf B}^T\bar{\mathbf B}-\E[\bar{\mathbf b}(X,H)\bar{\mathbf b}(X,H)^T])\bar{\mathbf w}\right|\nonumber\\
&\leq\|\bar{\mathbf w}\|_1^2\|\frac{1}{n}\bar{\mathbf B}^T\bar{\mathbf B}-\E[\bar{\mathbf b}(X,H)\bar{\mathbf b}(X,H)^T]\|_\infty\label{eq_BoundDfw1}
\end{align}
By Problem 14.3 in \cite{BuehlmannHDStats}, $\|\frac{1}{n}\bar{\mathbf B}^T\bar{\mathbf B}-\E[\bar{\mathbf b}(X,H)\bar{\mathbf b}(X,H)^T]\|_\infty=O_P\left(\sqrt{\frac{\log(Kp)}{n}}\right)$. Using that $\frac{1}{n}\|\mathbf f_j^w\|_2^2\geq\|w_j\|_2^2\lambda_{\min}(\hat\Sigma_j)$, the definition of $f^w\in \mathcal F_{M,\mathcal T}^n$ and the Cauchy-Schwarz inequality, we have
\begin{align}
\|\bar{\mathbf w}\|_1^2 &=\left(|w_0|+\sum_{j=1}^p\|w_j\|_1+\|a\|_1\right)^2 \nonumber\\
&\leq 3 |w_0|^2+3\|a\|_1^2+3\left(\sum_{j=1}^p  \|w_j\|_1\right)^2 \nonumber\\
&\leq 3 |w_0|^2+3q\|a\|_2^2+3K\left(\sum_{j=1}^p  \|w_j\|_2\right)^2 \nonumber\\
&\leq 3 |w_0|^2+3q\|a\|_2^2+\frac{3K}{\min_{j=1, \ldots, p}\lambda_{\min}(\hat\Sigma_j)}\left(\sum_{j=1}^p  \frac{1}{\sqrt n} \|\mathbf f_j^w\|_2\right)^2 \nonumber\\
&\Revision{\leq 3 |w_0|^2+3q\|a\|_2^2+\frac{3K(1+s)(1+M)^2}{\min_{j=1, \ldots, p}\lambda_{\min}(\hat\Sigma_j)}\left(w_0^2 + \sum_{j=1}^p  \frac{1}{ n} \|\mathbf f_j^w\|_2^2\right)} \label{eq_BoundWBarL1}
\end{align}
With \eqref{eq_BoundDfw1}, we have that for all $f^w\in \mathcal F_{M,\mathcal T}^n$ and $a\in \mathbb R^q$, we have
\begin{equation}\label{eq_DefLfw}
\Revision{D_{f^w, a}'\leq L_{f^w, a}U_n \text{ with } L_{f^w, a}=\frac{3 |w_0|^2+3q\|a\|_2^2+\frac{3K(1+s)(1+M)^2}{\min_{j=1, \ldots, p}\lambda_{\min}(\hat\Sigma_j)}\left(w_0^2 + \sum_{j=1}^p  \frac{1}{ n} \|\mathbf f_j^w\|_2^2\right)}{w_0^2+\sum_{j=1}^p \frac{1}{n}\|\mathbf f^w_j\|_2^2}}
\end{equation} and $U_n=O_P\left(\sqrt{{\log(Kp)}/{n}}\right)$ independent of $\mathbf f^w$ and $a$.
Note that from the definition prior to \eqref{eq_FactorizeCC}, we only need to control $\sup_{f^w\in \mathcal F_{M,\mathcal T}^n} L_{f^w, a_w}$ with $a_w=\mathbf H^T \mathbf f^w$ and not $\sup_{f^w\in \mathcal F_{M,\mathcal T}^n, a\in \mathbb R^q} L_{f^w, a}$.  Using arguments as before, $\|\frac{1}{\sqrt n}\mathbf f^w\|_2^2\leq (1+M)^2 (s+1)\left(w_0^2+\sum_{j=1}^p\|\mathbf f_j^w\|_2^2\right)$. Hence, $\|a_w\|_2^2\leq\|\frac{1}{\sqrt n}\mathbf H\|_{op}^2(s+1)(M+1)^2\left(w_0^2+\sum_{j=1}^p\|\mathbf f_j^w\|_2^2\right)$. It follows that \Revision{$L_{f^w, a^w}\leq 3 + 3(M+1)^2(s+1)\left(q\|\frac{1}{\sqrt n}\mathbf H\|_{op}^2 + \frac{K}{\min_{j=1, \ldots, p}\lambda_{\min}(\hat\Sigma_j)}\right)$}. By Theorem 4.6.1 in \cite{VershyninHDProb}, we have $\|\frac{1}{\sqrt n}\mathbf H\|_{op}^2=O_P(1)$. From Assumption \ref{ass_DimAndPsi}, assertion (1), and Assumption \ref{ass_CondBasis}, assertion (2), it follows that
\begin{equation}\label{eq_ControlDfw1}
\sup_{f^w\in \mathcal F_{M,\mathcal T}^n}L_{f^w, a_w}U_n=o_P(1).
\end{equation}

From \eqref{eq_ExpFjw}, we have that $|\E[f_j^w(X_j)]|\leq \|w_j\|_1 \|\E[\mathbf b(X)]-\frac{1}{n}\sum_{i=1}^n\mathbf b(x_{i,\cdot})\|_\infty$. Similarly as before, $\|\E[\mathbf b(X)]-\frac{1}{n}\sum_{i=1}^n\mathbf b(x_{i,\cdot})\|_\infty=O_P\left(\sqrt{\frac{\log (Kp)}{n}}\right)$. Hence, also $\|\E[\mathbf b(X)]-\frac{1}{n}\sum_{i=1}^n\mathbf b(x_{i,\cdot})\|_\infty^2 = O_P\left(\sqrt{\frac{\log (Kp)}{n}}\right)$. Using \eqref{eq_BoundWBarL1} with $a=0$, it follows that
\begin{equation}\label{eq_ControlDfw2}
\sup_{f^w\in \mathcal F_{M,\mathcal T}^n}D_{f^w}''\leq 4\frac{Ks(1+M)^2}{\min_{j=1, \ldots, p}\lambda_{\min}(\hat \Sigma_j)}O_P\left(\sqrt{\frac{\log (Kp)}{n}}\right)=o_P(1)
\end{equation}
by Assumption \ref{ass_CondBasis}, assertion (2).

We can now put things together. By \eqref{eq_FactorizeCC}, \eqref{eq_FactorizeAfw}, \eqref{eq_SumDfw} and \eqref{eq_DefLfw}, we have for all $f^w\in \mathcal F_{M,\mathcal T}^n$ and $a\in R^q$,
$$\frac{\frac{1}{n}\|\mathbf f^w-\mathbf H a\|_2^2}{w_0^2+\sum_{j=1}^p\frac{1}{n}\|\mathbf f_j^w\|_2^2}\geq B_{f^w, a}\cdot C_{f^w}\cdot (1-|E_{f^w} F_{f^w, a}(D_{f^w}''+L_{f^w, a}U_n)|).$$
By \eqref{eq_SampleCCBound} and Lemma \ref{lem_TechnicalCC}, we only need to bound this expression for $a=a_w$. Putting together \eqref{eq_ControlDfw2}, \eqref{eq_ControlDfw1}, \eqref{eq_ControlFfw}, \eqref{eq_ControlEfw}, \eqref{eq_ControlBfw} and \eqref{eq_ControlCfw} yields $\tau_n^\text{PCA}\geq \tau_0+o_P(1)$. Together with \eqref{eq_CompTrimPCA}, this concludes the proof.

\subsubsection{Proof of Lemma \ref{lem_HatH}}\label{sec_ProofWeakFactorModel}
As in \cite{GuoDoublyDebiasedLasso}, equation (73), let the matrix $\Lambda^2\in \mathbb R^{q\times q}$ be the diagonal matrix with the largest $q$ eigenvalues of the matrix $\frac{1}{np}\mathbf X\mathbf X^T$ as entries and define
\begin{equation}\label{eq_DefO}
O = \frac{1}{np}\Psi\Psi^T\mathbf H^T\hat {\mathbf H}\Lambda^{-2}\in  \mathbb R^{q\times q}.
\end{equation}

We follow the strategy of Section B.2. in \cite{GuoDoublyDebiasedLasso}. For some $C>0$ large enough, and some $c>0$ small enough, define the events (where $h_t, \hat h_t\in \mathbb R^q$ and $e_t\in \mathbb R^p$ are the $t$th row of $\mathbf H$, $\hat{\mathbf H}$ and $\mathbf E$, respectively),
\begin{align*}
\mathcal A_1&=\left\{\max_{1\leq t\leq n}\|h_t\|_2\leq C\sqrt{q\log(nq)}\right\}\\
\mathcal A_2&=\left\{\max_{1\leq i\leq n}\|\Psi e_i/p\|_2\leq C \frac{\sqrt q}{\sqrt p}\sqrt{\log(nq)}\max_{l,j}|\Psi_{l,j}| \right\}\\
\mathcal A_3&=\left\{\max_{1\leq i\leq n} e_i^Te_i/p\leq C\log(np)\right\}\\
\mathcal A_4 &=\left\{\max_{1\leq t\neq i\leq n} |e_i^T e_t/p|\leq C\frac{\sqrt{\log p}\sqrt{\log(np)}}{\sqrt p}\right\}\\
\mathcal A_5&=\left\{\lambda_{\min}(\Lambda)\geq c\frac{\lambda_q(\Psi)}{\sqrt p}\right\}\\
\mathcal A_6&=\left\{\|\mathbf H\|_{op}\leq C\sqrt n,\, \|\mathbf E\|_{op}\leq C(\sqrt n + \sqrt p)\right\}\\
\mathcal A_7&=\left\{\|\mathbf H^T\mathbf H/n - I_q\|_{op}\leq C\sqrt{\frac{q+\log p}{n}}\right\}\\
\mathcal A_8 &=\left\{\|O\|_{op}\leq C\right\}.
\end{align*}
We show that $\mathcal A = \cap_{l=1}^8 \mathcal A_l$ satisfies $\Prob\left(\mathcal A\right)\geq pr(n, p)=1-n^{-c}-p^{-c}-\exp(-cn)-\exp(-cp)$ for some $c>0$. For this, most of the work was already done in the proof of Lemma 8 in \cite{GuoDoublyDebiasedLasso}. For $\mathcal A_1$, observe that $\max_t \|h_t\|_2\leq \sqrt q \max_{t, j}|\mathbf H_{t,j}|$. Since the random variables $\{\mathbf H_{t,j}: 1\leq t\leq n, 1\leq j\leq q\}$ are sub-Gaussian with bounded parameters, we obtain using the union bound
\begin{align*}
    \Prob(\mathcal A_1^c)&\leq \Prob\left(\max_{t, j}|\mathbf H_{t,j}|>C\sqrt{\log(np)}\right)\\
    &\leq \sum_{i,j}\Prob\left(|\mathbf H_{t,j}|>C\sqrt{\log(np)}\right)\\
    &\leq 2nq\exp\left(\frac{-\log(nq)C^2}{C_0^2}\right)\\
    &\leq 2(nq)^{1-C^2/C_0^2}
\end{align*}
for some constant $C_0$ depending on the sub-Gaussian norms of the entries of $H$. Hence, it suffices to take $C>C_0$.

For $\mathcal A_2$, note that
$$\max_{1\leq i\leq n}\|\Psi e_i/p\|_2\leq\max_{1\leq i\leq n}\max_{1\leq l\leq q}\frac{\sqrt q}{p}|\Psi_{l,\cdot}^T e_i|\leq \max_{1\leq i\leq n}\max_{1\leq l\leq q}\frac{\sqrt q}{p}\frac{|\Psi_{l,\cdot}^T e_i|}{\|\Psi_{l,\cdot}\|_2}\max_{1\leq l\leq q}\|\Psi_{l,\cdot}\|_2.$$
Since $\{e_i\}_{1\leq i\leq n}$ are i.i.d. sub-Gaussian vectors, the random variables $\left\{\frac{\Psi_{l,\cdot}^Te_i}{\|\Psi_{l,\cdot}\|_2}:1\leq i\leq n, 1\leq l\leq q \right\}$ are sub-Gaussian with bounded parameters. Hence, by the same argument as before, we have $\Prob(\mathcal A_2)\geq 1-n^{-c}$ for some $c>0$. 

One can show $\Prob(\mathcal A_3\cap\mathcal A_4)\geq pr(n, p)$ by using exactly the same reasoning as for the control of $\mathcal G_5\cap\mathcal G_6$ in the proof of Lemma 8 in Section B.5 of \cite{GuoDoublyDebiasedLasso}. The event $\mathcal A_5\cap \mathcal A_8$ is a superset of the event $\mathcal G_{10}$ in the proof given there, such that one can apply the reasoning from there. The event $\mathcal A_6$ corresponds to the event $\mathcal G_8$ and the event $\mathcal A_7$ corresponds to the event $\mathcal G_1$, such that we can again apply the arguments given there. In total, we indeed obtain $\Prob(\mathcal A)\geq pr(n, p)$.

As in the proof of Lemma 9 in \cite{GuoDoublyDebiasedLasso} (eq. (81)-(85) and following) and noting that $\frac{1}{n}\sum_{i=1}^n\|\hat h_i\|_2^2= q$ as explained there,  we have that
\begin{equation}\label{eq_MaxHO}
\max_{1\leq t\leq n}\|\hat h_t-O^Th_t\|_2\leq \|\Lambda^{-2}\|_{op}\left(2\sqrt{q}\max_{1\leq t\leq n}\|h_t\|_2\max_{1\leq i\leq n}\|\Psi e_i/p\|_2+\sqrt{q}\max_{1\leq t\leq n}\sqrt{\frac{1}{n}\sum_{i=1}^n|\frac{1}{p} e_i^Te_t|^2}\right).
\end{equation}
On the event $\mathcal A_1\cap \mathcal A_2$, we have that
\begin{equation}\label{eq_MaxHO1}
\sqrt q \max_{1\leq t\leq n}\|h_t\|_2\max_{1\leq i\leq n}\|\Psi e_i/p\|_2\lesssim \sqrt q \sqrt{q\log(nq)}\frac{\sqrt q\sqrt{\log(nq)}}{\sqrt p}\max_{l,j}|\Psi_{l,j}|\lesssim \frac{q^{3/2}(\log N)^{3/2}}{\sqrt p}
\end{equation}
using that $\max_{l,j}|\Psi_{l,j}|\lesssim \sqrt{\log(pq)}$ by Assumption \ref{ass_DimAndPsi}, assertion (4). Moreover
$$\sqrt q\sqrt{\frac{1}{n}\sum_{i=1}^n |\frac{1}{p}e_i^t e_t|^2}=\sqrt q \sqrt{\frac{1}{n}\sum_{t\neq i}|\frac{1}{p} e_i^T e_t|^2+\frac{1}{n}|\frac{1}{p}e_t^T e_t|^2}.$$ Hence, on the event $\mathcal A_3\cap \mathcal A_4$, we have that
$$\max_{1\leq t\leq n}\sqrt q\sqrt{\frac{1}{n}\sum_{i=1}^n |\frac{1}{p}e_i^t e_t|^2}\lesssim \sqrt q \sqrt{\frac{\log p\log(np)}{p}+\frac{\log(np)^2}{n}}\lesssim \frac{\sqrt q\log N}{\sqrt p}+\frac{\sqrt q\log N}{\sqrt n}.$$
In total, we get from this, \eqref{eq_MaxHO} and \eqref{eq_MaxHO1} that on $\mathcal A_1\cap \mathcal A_2\cap \mathcal A_3\cap \mathcal A_4\cap \mathcal A_5$,
$$\max_{1\leq t\leq n}\|\hat h_t-O^Th_t\|_2\lesssim\frac{p}{\lambda_q(\Psi)^2}\left(\frac{q^{3/2}(\log N)^{3/2}}{\sqrt p}+\frac{\sqrt q\log N}{\sqrt n}\right).$$
Note that
\begin{align*}
\|\hat {\mathbf H} -\mathbf H^T O\|_{op}&=\sup_{\|z\|_2=1}\|(\hat {\mathbf H} -\mathbf H^T O)z\|_2\\
&\leq \sup_{\|z\|_2=1}\sqrt{n}\max_{1\leq t\leq n}|(\hat h_t-O^Th_t)^Tz|\\
&\leq \sqrt n \max_{1\leq t\leq n}\|\hat h_t-O^Th_t\|_{2}.
\end{align*}
Hence, we have that
$$\frac{1}{\sqrt n}\|\hat {\mathbf H} -\mathbf H^T O\|_{op}\lesssim\frac{p}{\lambda_q(\Psi)^2}\left(\frac{q^{3/2}(\log N)^{3/2}}{\sqrt p}+\frac{\sqrt q\log N}{\sqrt n}\right)\ll \frac{1}{s}$$
by using Assumption \ref{ass_DimAndPsi}, assertion (3), and the first assertion of Lemma \ref{lem_HatH} follows.

For the second assertion, we first follow the proof of Lemma 11 in \cite{FanLargeCovarianceEstimation}. Observe that
$$\|O^T O- I_q\|_{op}\leq\|O^T O- \frac{1}{n}O^T\mathbf H^T\mathbf HO\|_{op}+\|\frac{1}{n}O^T \mathbf H^T\mathbf HO-I_q\|_{op}$$
On the set $\mathcal A_7\cap \mathcal A_8$, we have 
$$\|O^T O- \frac{1}{n}O^T\mathbf H^T\mathbf HO\|_{op}\leq\|O\|_{op}^2\|I_q-\frac{1}{n}\mathbf H^T\mathbf H\|_{op}\lesssim \sqrt{\frac{q+\log p}{n}}\ll \frac{1}{s}.$$
On the set $\mathcal A_6\cap \mathcal A_8$ and using $\|\hat{\mathbf H}\|=\sqrt n$, we have
\begin{align*}
\|\frac{1}{n}O^T\mathbf H^T\mathbf H O-I_q\|_{op}&\leq \|\frac{1}{n}O^T\mathbf H^T\mathbf H O-\frac{1}{n}O^T\mathbf H^T\hat {\mathbf H}\|_{op}+\|\frac{1}{n} O^T\mathbf H^T\hat{\mathbf H}-\frac{1}{n}\hat{\mathbf H}^T\hat{\mathbf H}\|_{op}\\
&\leq\frac{1}{n}\|O\|_{op}\|\mathbf H\|_{op}\|\mathbf H O-\hat{\mathbf H}\|_{op}+\frac{1}{n}\|\hat{\mathbf H}\|_{op}\|\mathbf H O-\hat{\mathbf H}\|_{op}\\
&\lesssim\frac{1}{\sqrt n}\|\mathbf H O-\hat{\mathbf H}\|_{op},
\end{align*}
By using the first assertion, it follows that
\begin{equation}\label{eq_BoundOTO}
    \|O^TO -I_q\|_{op}=o_P\left(\frac{1}{s}\right)
\end{equation}
Observe that
\begin{align*}
\|O O^T-I_q\|_{op}&=\|OO^T-O O^{-1}\|_{op}\leq\|O\|_{op}\|O^T-O^{-1}\|_{op},\\
\|O^T O-I_q\|_{op}&=\|O^TO-O^{-1} O\|_{op}\geq \lambda_{\min}(O)\|O^T-O^{-1}\|_{op}.
\end{align*}
Hence, we have
$$\|O O^T-I_q\|_{op}\leq\frac{\|O\|_{op}}{\lambda_{\min}(O)}\|O^T O-I_q\|_{op}$$

On the set $\mathcal A_8$, we have $\|O\|_{op}\leq C$. Moreover, $\lambda_{\min}(O)=\sqrt{\lambda_{\min}(O^T O)}$. By Weyl's inequality for singular values, we have that
$$|\lambda_{\min}(O^T O)-1|=|\lambda_{\min}(O^T O)-\lambda_{\min}(I_q)|\leq \|O^T O- I_q\|_{op}.$$
Hence, 
$$\lambda_{\min}(O)=\sqrt{\lambda_{\min}(O^T O) -1 + 1}\geq\sqrt{1-\|O^T O- I_q\|_{op}}.$$
It follows that
$$\|I_1-OO^T\|_{op}\lesssim \frac{\|I_q-O^TO\|_{op}}{\sqrt{1-\|I_q-O^TO\|_{op}}}.$$
Combining this with \eqref{eq_BoundOTO} completes the proof.

\subsection{Proof of Theorem \ref{thm_BoundPopCC}}\label{sec_ProofBoundPopCC}
We first remove the intercept $w_0$. Since for $f\in \mathcal F_{\textup{add}}$, $\E[f(X)-w_0-H^Ta]=0$, we have that
$$\frac{\E[(f(X)-H^Ta)^2]}{w_0^2+\sum_{j=1}^p\E[f_j(X_j)^2]}=\frac{w_0^2+\E[(f(X)-w_0-H^T a)^2]}{w_0^2+\sum_{j=1}^p\E[f_j(X_j)^2]}\geq \min \left(1, \frac{\E[(f(X)- w_0 -H^Ta)^2]}{\sum_{j=1}^p\E[f_j(X_j)^2]}\right).$$
Since $\lambda_{\min}(A_{\Psi,\Sigma_E})\leq 1$, we can work with $f(X)-w_0$ instead of $f(X)$. We can now follow the proof of Theorem 1 in \cite{GuoExtremeEigenvalues}, where a similar result without the confounder $H$ is proven. We first standardize $Z_j=X_j/\sqrt{\E[X_j^2]}$ and write $f_j(X_j)=g_j(Z_j)$ with $g_j$ being a rescaled version of $f_j$. Since the Hermite polynomials
$$\psi_m(x)=(m!)^{-1/2}(-1)^m e^{x^2/2}\frac{d^m}{dx^m}e^{-x^2/2}$$
form an orthonormal basis and $Z_j\sim \mathcal N(0,1)$, we can write for $j=1,\ldots, p$
$$g_j(Z_j)=\sum_{j=1}^\infty d_{j,m} \psi_m(Z_j),$$
where the infinite sum is to be understood in the $L_2$ sense. Moreover, we have that
\begin{equation}\label{eq_HermiteDecomp}
\E[f_j(X_j)^2]=\E[g_j(Z_j)^2]= \sum_{m=1}^\infty d_{j,m}^2.
\end{equation}
By equation (9) in \cite{LancasterSomeProperties}, we have for all $j,t=1,\ldots, p$, all $m, n\in \mathbb N$ and all $l=1,\ldots, q$ that
\begin{align*}
\E[\psi_m(Z_j)\psi_n(Z_t)]&=\E[Z_jZ_t]^m \delta_{m,n},\\
\E[\psi_m(Z_j)H_l]&=E[Z_j H_l]\delta_{m,1}.
\end{align*}
where $\delta_{m,n}$ is the Kronecker delta and we used that $\psi_1(x)=x$ for the second identity. It follows that
\begin{align*}
\E[g_j(Z_j)g_t(Z_t)]&=\sum_{m=1}^\infty d_{j,m}d_{t,m}\E[Z_jZ_t]^m\\
\E[g_j(Z_j)H_l]&=d_{j,1}\E[Z_j H_l].
\end{align*}
Hence, we can write for $f\in \mathcal F_{\textup{add}}$
\begin{align*}
&\E[(f(X)-w_0-H^Ta)^2]=\E\left[\left(\sum_{j=1}^p g_j(Z_j)-\sum_{l=1}^q H_l a_l\right)^2\right]\\
&=\sum_{j=1}^p\sum_{t=1}^p\E[g_j(Z_j)g_t(Z_t)]-2\sum_{j=1}^p\sum_{l=1}^qa_l\E[g_j(Z_j) H_l]+\sum_{l=1}^q\sum_{k=1}^qa_l a_k\E[H_lH_k]\\
&=\sum_{j=1}^p\sum_{t=1}^p\sum_{m=1}^\infty d_{j,m} d_{t, m}\E[Z_jZ_t]^m-2\sum_{j=1}^p\sum_{l=1}^qa_ld_{j, 1}\E[Z_t H_l]+\|a\|_2^2.
\end{align*}
If we minimize this over $a\in \mathbb R^q$, we get
\begin{align}
&\E[(f(X)-w_0-H^Ta)^2]\geq \sum_{j=1}^p\sum_{t=1}^p\sum_{m=1}^\infty d_{j,m} d_{t,m}\E[Z_j Z_t]^m-\sum_{l=1}^q\left(\sum_{j=1}^p d_{j,1}\E[Z_j H_l]\right)^2\nonumber \\
&=\sum_{m=2}^\infty\sum_{j=1}^p\sum_{t=1}^p d_{j,m}d_{t,m}\E[Z_jZ_t]^m +\sum_{j=1}^p\sum_{t=1}^pd_{j,1} d_{t,1}\left(\E[Z_j Z_t]-\sum_{l=1}^q \E[Z_jH_l]\E[Z_tH_l]\right)\label{eq_BoundPopCC1}
\end{align}
By the definition of $Z_j$, we have
$$\E[Z_j Z_t]=\frac{\Psi_j^T\Psi_t+(\Sigma_E)_{j,t}}{\sqrt{\|\Psi_j\|_2^2+(\Sigma_E)_{j,j}}\sqrt{\|\Psi_t\|_2^2+(\Sigma_E)_{t,t}}}$$
and
\begin{align*}
    \sum_{l=1}^q \E[Z_j H_l]\E[Z_t H_l]&=\sum_{l=1}^p \frac{\Psi_{l,j}\Psi_{l, t}}{\sqrt{\|\Psi_j\|_2^2+(\Sigma_E)_{j,j}}\sqrt{\|\Psi_t\|_2^2+(\Sigma_E)_{t,t}}}\\
    &=\frac{\Psi_j^T\Psi_t}{\sqrt{\|\Psi_j\|_2^2+(\Sigma_E)_{j,j}}\sqrt{\|\Psi_t\|_2^2+(\Sigma_E)_{t,t}}}
\end{align*}
Using the definition of the matrix $A=A_{\Psi, \Sigma_{E}}$ and Lemma \ref{lem_Lemma2GuoZhang} below, we get from \eqref{eq_BoundPopCC1} that
\begin{align*}
\E[(f(X)-w_0-H^Ta)^2]&\geq \sum_{m=2}^\infty\sum_{j=1}^p\sum_{t=1}^p d_{j,m}d_{t,m}\E[Z_jZ_t]^m +\sum_{j=1}^p\sum_{t=1}^pd_{j,1} d_{t,1}A_{j,t}\\
&\geq \sum_{m=2}^\infty\sum_{j=1}^p d_{j,m}^2\lambda_{\min}(\E[ZZ^T])+\sum_{j=1}^p d_{j, 1}^2\lambda_{\min}(A)\\
&\geq\min\left(\lambda_{\min}(\E[ZZ^T], \lambda_{\min}(A)\right)\sum_{j=1}^p \sum_{m=1}^\infty d_{j,m}^2\\
&=\min\left(\lambda_{\min}(\E[ZZ^T], \lambda_{\min}(A)\right)\sum_{j=1}^p\E[f_j(X_j)^2],
\end{align*}
where we used \eqref{eq_HermiteDecomp} in the last step. Finally, we observe that $\E[ZZ^T]=\Lambda_{\Psi, \Sigma_E}(\Psi^T\Psi+\Sigma_E)\Lambda_{\Psi, \Sigma_E}=A_{\Psi, \Sigma_E}+\Lambda_{\Psi, \Sigma_E}\Psi^T\Psi\Lambda_{\Psi, \Sigma_E}$. Since both $A_{\Psi, \Sigma_E}$ and $\Lambda_{\Psi, \Sigma_E}\Psi^T\Psi\Lambda_{\Psi, \Sigma_E}$ are positive semi definite, we have that $\lambda_{\min}(\E[ZZ^T])\geq \lambda_{\min}(A_{\Psi, \Sigma_E})$, which concludes the proof.

The following Lemma is a special case of Lemma 2 in \cite{GuoExtremeEigenvalues}.
\begin{lemma}\label{lem_Lemma2GuoZhang}
Let $Z\sim \mathcal N_p(0, \E[ZZ^T])$ be a Gaussian vector in $\mathbb R^p$ with $\E[Z_j^2]=1$ for all $j=1,\ldots, p$.
For all $m\geq 1$ and all $h\in \mathbb R^p$ with $\|h\|_2^2=1$, we have
$$\sum_{j=1}^p\sum_{t=1}^p \E[Z_jZ_t]^m h_j h_t\geq \lambda_{\min}(\E[ZZ^T]).$$
\end{lemma}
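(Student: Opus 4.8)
The plan is to recognize the left-hand side as a quadratic form in the $m$-th Hadamard (entrywise) power of the correlation matrix and to reduce the claim to a statement about its smallest eigenvalue. Write $\Sigma = \E[ZZ^T]$, so that $\Sigma \succeq 0$ and, because $\E[Z_j^2]=1$, all diagonal entries of $\Sigma$ equal $1$. Denoting by $\Sigma^{\circ m}$ the matrix with entries $(\Sigma^{\circ m})_{j,t} = \E[Z_jZ_t]^m$, the left-hand side is exactly $h^T \Sigma^{\circ m} h$. Since $\|h\|_2 = 1$, it suffices to prove the eigenvalue bound $\lambda_{\min}(\Sigma^{\circ m}) \geq \lambda_{\min}(\Sigma)$, as then $h^T \Sigma^{\circ m} h \geq \lambda_{\min}(\Sigma^{\circ m}) \|h\|_2^2 \geq \lambda_{\min}(\Sigma)$.

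The engine of the argument is the auxiliary inequality that for any symmetric positive semi-definite matrices $A, B \in \mathbb R^{p\times p}$ one has $\lambda_{\min}(A\circ B) \geq \lambda_{\min}(A)\, \min_{i} B_{i,i}$. To see this, I would write $A = \lambda_{\min}(A) I_p + C$ with $C = A - \lambda_{\min}(A) I_p \succeq 0$, so that $A \circ B = \lambda_{\min}(A)\,(I_p \circ B) + C \circ B$. Now $I_p \circ B = \diag(B_{1,1}, \ldots, B_{p,p})$, and by the Schur product theorem $C \circ B \succeq 0$; since $\lambda_{\min}(A) \geq 0$, this yields $A \circ B \succeq \lambda_{\min}(A)\,\diag(B_{1,1},\ldots,B_{p,p}) \succeq \lambda_{\min}(A)\,\min_i B_{i,i}\, I_p$, which is the claim.

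Finally I would apply this with $A = \Sigma$ and $B = \Sigma^{\circ (m-1)}$ (for $m \geq 2$; the case $m=1$ is trivial). The matrix $\Sigma^{\circ(m-1)}$ is positive semi-definite by repeated application of the Schur product theorem, and its diagonal entries are $\Sigma_{j,j}^{m-1} = 1$, so $\min_i (\Sigma^{\circ(m-1)})_{i,i} = 1$. The auxiliary inequality then gives $\lambda_{\min}(\Sigma^{\circ m}) = \lambda_{\min}(\Sigma \circ \Sigma^{\circ(m-1)}) \geq \lambda_{\min}(\Sigma)\cdot 1$, completing the reduction from the first paragraph. The only genuinely nontrivial ingredient is the Schur product theorem (positivity of the Hadamard product of positive semi-definite matrices) together with the decomposition $A = \lambda_{\min}(A) I_p + C$; everything else is bookkeeping. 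Since this is precisely the special case of Lemma 2 in \cite{GuoExtremeEigenvalues}, one could alternatively just invoke that result directly.
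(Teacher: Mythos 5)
Your proof is correct. Note, however, that the paper does not actually prove this lemma: it disposes of it in one line by citing it as a special case of Lemma 2 in \cite{GuoExtremeEigenvalues}. So your argument is a genuinely different, self-contained route. You identify the left-hand side as the quadratic form $h^T\Sigma^{\circ m}h$ of the $m$-th Hadamard power of $\Sigma=\E[ZZ^T]$ and reduce the claim to $\lambda_{\min}(\Sigma^{\circ m})\geq\lambda_{\min}(\Sigma)$, which you obtain from the classical refinement of the Schur product theorem, $\lambda_{\min}(A\circ B)\geq\lambda_{\min}(A)\min_i B_{i,i}$ for positive semi-definite $A,B$; your decomposition $A=\lambda_{\min}(A)I_p+C$ with $C\succeq 0$ is exactly the standard proof of that refinement (it appears, e.g., as Theorem 5.3.4 in Horn and Johnson, \emph{Topics in Matrix Analysis}). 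Applying it with $A=\Sigma$ and $B=\Sigma^{\circ(m-1)}$ is legitimate: $\Sigma^{\circ(m-1)}\succeq 0$ by induction on the Schur product theorem, its diagonal is identically $1$ because $\Sigma$ has unit diagonal, and the case $m=1$ is indeed trivial, so every step checks out. Your route buys two things: it makes the paper self-contained at this point, and it makes transparent that Gaussianity of $Z$ is irrelevant to this particular step --- the inequality is a purely matrix-analytic fact about any positive semi-definite matrix with unit diagonal. What the paper's citation buys is brevity and consistency with \cite{GuoExtremeEigenvalues}, whose machinery (Hermite expansions) is in any case being adapted in the surrounding proof of Theorem \ref{thm_BoundPopCC}.
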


\section{Remaining Proofs}
\subsection{Proof of Lemma \ref{lem_QXB}}\label{sec_ProofQXB}
By Proposition 3 in \cite{GuoDoublyDebiasedLasso}, we have that with probability larger than $1-\exp(-cn)$ for some constant $c>0$, $\lambda_{q+1}(\frac{1}{n}X^TX)\lesssim \max(1, p/n)$. For $r=\min(n, p)$ large enough, we have $\lfloor \rho r\rfloor\geq q+1$, hence for both $Q=Q^\text{trim}$ and $Q=Q^\text{PCA}$, we have $\frac{1}{n}\|QX\|_{op}^2\lesssim \max(1, p/n)$. Hence,
$$\frac{1}{n}\|QXb\|_2^2\lesssim \|b\|_2^2\max(1, \frac{p}{n}).$$
To control $\|b\|_2$, we follow the proof of Lemma 2 in \cite{GuoDoublyDebiasedLasso}. From the definition of $b$ and the Woodbury identity \cite{GolubMatrixComputations}, we have
\begin{align*}
b=\E[XX^T]^{-1}\Psi^T\psi&=(\Psi^T\Psi+\Sigma_E)^{-1}\Psi^T\psi\\
&=\left(\Sigma_E^{-1}-\Sigma_E^{-1}\Psi^T(I_q	+\Psi\Sigma_E^{-1}\Psi^T)^{-1}\Psi\Sigma_E^{-1}\right)\Psi^T\psi\\
&=\Sigma_E^{-1}\Psi^T(I_q+\Psi\Sigma_E^{-1}\Psi^T)^{-1}\psi
\end{align*}
With $D_E=\Psi\Sigma_E^{-1/2}$, we have$\|b\|_2\leq \|\Sigma_E^{-1/2}\|_{op}\|D_E^T(I_q+D_E D_E^T)^{-1}\|_{op}\|\psi\|_2$ and
\begin{align*}
    &\|D_E^T(I_q+D_E D_E^T)^{-1}\|_{op}^2=\lambda_{\max}(D_E^T(I_q+D_E D_E^T)^{-2}D_E)\\
    &=\max_{1\leq l\leq q}\left(\frac{\lambda_l(D_E)}{1+\lambda_l(D_E)^2}\right)^2 \leq\max_{1\leq l\leq q}\frac{1}{\lambda_l(D_E)^2}=\frac{1}{\lambda_q(D_E)^2}
\end{align*}
Since $D_E=\Psi\Sigma_E^{-1/2}$ and by Assumption \ref{ass_ConditionsModel1}, $c\leq \lambda_{\min}(\Sigma_E^{-1})\leq \lambda_{\max}(\Sigma_E^{-1})\leq C$, we get the result.
\subsection{Proof of Lemma \ref{lem_ApproxError}}\label{sec_ProofApproxError}
From Theorem (6) in Chapter XII in \cite{DeBoorSplines} applied to the functions $f_j^0\circ F_j^{-1}$, we have that for all $j=1,\ldots, p$, there exists $\beta_j^\ast$ such that the functions $g_j^\ast=b_0(\cdot)^T \beta_j^\ast$ satisfy $\|f_j^0\circ F_j^{-1}-g_j^\ast\|_{\infty, [0,1]}\leq c h^2$ for some constant $c$ independent of $j=1,\ldots, p$. It follows that also the functions $f_j^\ast(\cdot)=b_j(\cdot)^T\beta_j^\ast=g_j(F_j(\cdot))$ satisfy $\|f_j^\ast-f_j^0\|_{\infty}\leq c h^2$. Hence, $\|f_j^\ast-f_j^0\|_{L_2}^2=\E[(f_j^\ast(X_j)-f_j^0(X_j))^2]\leq c^2 h^4$. Since $h=1/(K-3)$, the claim follows.

\Revision{
\subsection{Proof of Corollary \ref{cor_FinalRate}}\label{sec_ProofFinalRate}
We only need to show that \eqref{eq_BoundRnNew} reduces to \eqref{eq_FinalRate} under the conditions of Corollary \ref{cor_FinalRate}. Under these conditions, we have from Theorem \ref{thm_BoundCC} and Theorem \ref{thm_BoundPopCC} that $\tau_n\gtrsim 1$ with high probability. Since $n\lesssim p$, $\lambda_q(\Psi)\asymp \sqrt p$ and $\|\psi\|_2\lesssim 1$, we can choose $\lambda_2$ such that the first term in the definition \eqref{eq_Lambda} of $\lambda$ dominates. We obtain 
\begin{equation}\label{eq_IntermediateRn}
    r_n = O_P\left( s\sqrt{\frac{K \log p}{n}} +\frac{1}{\sqrt {nK\log p}}+ \frac{s}{K^2}+\frac{s}{\sqrt n}+\frac{s^2}{K^4}\sqrt{\frac{n}{K\log p}}+s^2\sqrt{\frac{1}{n K\log p}}\right).
\end{equation}
Plugging in $K\asymp (n/\log p)^{2/5}$, the fifth term dominates and the claim follows.

To prove the claim of Remark \ref{rmk_RemarkFinalRate}, we instead plug $K\asymp(ns/\log p)^{1/5}$ into \eqref{eq_IntermediateRn}.
}

\subsection{Proof of Lemma \ref{lem_ConditionBasis}}\label{sec_ProofConditionBasis}
Define the random variables $U_j=F_j^{-1}(X_j)$, which are now uniformly distributed in $[0,1]$.
We follow the proof of Lemma 6.1 and Lemma 6.2 in \cite{ZhouLocalAsymptoticsForRegressionSplines}. We apply the steps given there to the random variables $U_j$. The difference is that we need the $o(1)$ in the statements of the lemmas there uniformly in $j=1,\ldots, p$. Following the steps of the proof and using that we have equidistant knots and uniform distributions, we arrive at \eqref{eq_EVBasisPop} and \eqref{eq_EVBasisSam} with $S_n=C \sup_{j=1,\ldots, p}\sup_{y\in [0,1]}|Q_n^j(y)-Q(y)|$, where
$Q_n^j(y)=\frac{1}{n}\sum_{i=1}^n\mathbbm 1\{F_j^{-1}(x_{i,j})\leq y\}$ is the empirical distribution function of $U_j$ and $Q(y)=y$ is the distribution function of $U_j$. It remains to prove that $S_n=o_P(h)$. For this, let for $m=0,\ldots, M$, $y_m=m/M$. For $y\in[0,1]$, there exists $m\in\{0,\ldots, M-1\}$ such that $y\in[y_m, y_{m+1}]$. If $Q_n^j(y)-Q(y)\geq 0$, we have using that both $Q_n^j$ and $Q$ are non-decreasing,
\begin{align*}
|Q_n^j(y)-Q(y)|&\leq Q_n^j(y_{m+1})-Q(y_m)\\
&\leq |Q_n^j(y_{m+1})-Q(y_{m+1})|+|Q(y_{m+1})-Q(y_m)|\\
&=|Q_n^j(y_{m+1})-Q(y_{m+1})|+\frac{1}{M}
\end{align*}
since $Q(x)=x$ for all $x\in [0,1]$. Similarly, if $Q_n^j(y)-Q(y)< 0$, we have
\begin{align*}
|Q_n^j(y)-Q(y)|&\leq Q(y_{m+1})-Q_n^j(y_m)\\
&\leq |Q(y_{m+1})-Q(y_{m})|+|Q(y_{m})-Q_n^j(y_m)|\\
&= |Q(y_{m})-Q_n^j(y_m)|+\frac{1}{M}
\end{align*}
In any case, we have $|Q_n^j(y)-Q(y)|\leq \sup_{j=1,\ldots, p}\sup_{m=1,\ldots, M} |Q_n^j(y_m)-Q(y_m)|+1/M$. Hence, also
$$\sup_{j=1,\ldots, p}\sup_{y\in [0,1]}|Q_n^j(y)-Q(y)|\leq \sup_{j=1,\ldots, p}\sup_{m=1,\ldots, M} |Q_n^j(y_m)-Q(y_m)|+1/M.$$
Since the random variables $Q_n^j(y_m)-Q(y_m)=\frac{1}{n}\sum_{i=1}^n(\mathbbm 1\{F_j^{-1}(x_{i,j})\leq y_m\}- Q(y_m))$ are averages of i.i.d. uniformly bounded random variables with mean zero, we have that
$$\sup_{j=1,\ldots, p}\sup_{m=1,\ldots, M} |Q_n^j(y_m)-Q(y_m)|=O_P\left(\sqrt{\frac{\log(pM)}{n}}\right),$$
see for example Lemma 14.13 in \cite{BuehlmannHDStats}. Choosing $M=\sqrt n$ yields
$$\sup_{j=1,\ldots, p}\sup_{y\in [0,1]}|Q_n^j(y)-Q(y)|=O_P\left(\sqrt{\frac{\log p +\log n}{n}}\right).$$
Since $h=1/(K-3)$ and $K\sqrt{\frac{\log p+\log n}{n}}=o(1)$ by Assumption \ref{ass_ConstructionBasis}, we have that $S_n=o_P(h)$, which concludes the proof.

\Revision{
\section{Minimal Requirements for Consistency}\label{sec_AppendixConsistency}
\begin{corollary}\label{cor_Consistent}
    Under Assumptions \ref{ass_ConditionsModel0}-\ref{ass_ApproxError}, assume that the matrix $A_{\Psi, \Sigma_E}$ defined in \eqref{eq_DefAPsi} satisfies $\lambda_{\min}(A_{\Psi, \Sigma_E})\gtrsim 1$. Moreover, assume that either
    \begin{equation}\label{eq_CondConsistencyA}
        \lambda_q(\Psi)^2\gg\|\psi\|_2^2\sqrt{\frac{n}{K\log p}}\max\left(\frac{p}{n}, \sqrt{\frac{n}{K\log p}}\right)\quad \text{and} \quad s \ll \left(\frac{K\log p}{n}\right)^{1/4}\max(K^2, \sqrt n)
    \end{equation}
    or
    \begin{equation}\label{eq_CondConsistencyB}
    \begin{split}
        \|\psi\|_2^2\max(1, p^2/n^2)\lesssim \lambda_q(\Psi)^2\lesssim \|\psi\|_2^2\frac{n}{K\log p} \text{ and}\\
        s \ll \min\left(\frac{\lambda_q(\Psi)}{\|\psi\|_2}, \sqrt{\frac{\|\psi\|_2}{\lambda_q(\Psi)}} K^2, \sqrt{\frac{\|\psi\|_2}{\lambda_q(\Psi)}} \sqrt{n}\right).
    \end{split}    
    \end{equation}
    holds.
    Then, we can choose $\lambda_2$ in the definition \eqref{eq_Lambda} of $\lambda$ such that 
    $$|\beta_0^0-\hat\beta_0|+\sum_{j=1}^p\|f_j^0-\hat f_j\|_{L_2}=o_P(1).$$
    In particular, $\hat f$ is a consistent estimator of $f^0$.
\end{corollary}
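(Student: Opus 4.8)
The plan is to feed the hypotheses of the corollary into the master error bound \eqref{eq_BoundRnNew} and show that, after an appropriate choice of the free parameter $\lambda_2$, each of its six summands is $o_P(1)$. By Corollary \ref{cor_RateOutSample} together with the remark following Theorem \ref{thm_BoundInSample} (which lets us replace $f_j^\ast$ by $f_j^0$), the quantity $|\beta_0^0-\hat\beta_0|+\sum_{j=1}^p\|f_j^0-\hat f_j\|_{L_2}$ is bounded above by a constant times $r_n$ with probability $1-o(1)$, so it suffices to establish $r_n=o_P(1)$. The first step is to discharge the compatibility constant: since $\lambda_{\min}(A_{\Psi,\Sigma_E})\gtrsim 1$, Theorem \ref{thm_BoundPopCC} gives $\tau_0\geq\lambda_{\min}(A_{\Psi,\Sigma_E})\gtrsim 1$, and Theorem \ref{thm_BoundCC} then yields $\tau_n\gtrsim\tau_0\gtrsim 1$ with high probability (its hypotheses are among Assumptions \ref{ass_ConditionsModel0}--\ref{ass_ApproxError}). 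On this event the first summand of \eqref{eq_BoundRnNew} simplifies to a constant multiple of $s\lambda$.

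The heart of the argument is the choice of $\lambda$, which is fixed only up to the freedom in $\lambda_2$ via \eqref{eq_Lambda}. There is a genuine tension: the term $s\lambda$ pushes $\lambda$ to be small, whereas the confounding term $\tfrac{1}{\lambda}\tfrac{\|\psi\|_2^2\max(1,p/n)}{\lambda_q(\Psi)^2}$ pushes it to be large. Writing $\lambda_1=AC_0\sqrt{K\log p/n}$ for the mandatory first piece and recalling the floor $\lambda_2\gg\|\psi\|_2/\sqrt{1+\lambda_q^2(\Psi)}\asymp\|\psi\|_2/\lambda_q(\Psi)$, the two cases \eqref{eq_CondConsistencyA} and \eqref{eq_CondConsistencyB} correspond exactly to which of these two floors dominates. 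In case \eqref{eq_CondConsistencyA} the lower bound on $\lambda_q(\Psi)^2$ implies (taking the second entry of the maximum) that $\lambda_q(\Psi)^2\gg\|\psi\|_2^2\,n/(K\log p)$, hence $\|\psi\|_2/\lambda_q(\Psi)\ll\lambda_1$, so one may pick $\lambda_2\ll\lambda_1$ and obtain $\lambda\asymp\lambda_1$. In case \eqref{eq_CondConsistencyB} the upper bound $\lambda_q(\Psi)^2\lesssim\|\psi\|_2^2\,n/(K\log p)$ forces $\|\psi\|_2/\lambda_q(\Psi)\gtrsim\lambda_1$, so one sets $\lambda\asymp\lambda_2\asymp\omega_n\,\|\psi\|_2/\lambda_q(\Psi)$ for a sequence $\omega_n\to\infty$ that grows arbitrarily slowly.

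With $\lambda$ fixed in each regime the rest is bookkeeping: substitute the chosen $\lambda$ into the six summands and check each is $o(1)$. In case \eqref{eq_CondConsistencyA}, with $\lambda\asymp\lambda_1$, the confounding term becomes $\sqrt{n/(K\log p)}\,\|\psi\|_2^2\max(1,p/n)/\lambda_q(\Psi)^2$, which vanishes by the stated lower bound on $\lambda_q(\Psi)^2$ after absorbing $\max(1,p/n)\lesssim\max(p/n,\sqrt{n/(K\log p)})$ (valid since $K\log p\ll n$ by Assumption \ref{ass_ConstructionBasis}); the remaining five terms $s\lambda_1$, $s/K^2$, $s/\sqrt n$, $s^2/(\lambda_1 K^4)$ and $s^2/(\lambda_1 n)$ are each $o(1)$ under the sparsity restriction of \eqref{eq_CondConsistencyA} together with the ambient Assumptions \ref{ass_DimAndPsi} and \ref{ass_CondBasis}. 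In case \eqref{eq_CondConsistencyB}, with $\lambda\asymp\omega_n\,\|\psi\|_2/\lambda_q(\Psi)$, the confounding term is $\lesssim\omega_n^{-1}\,\|\psi\|_2\max(1,p/n)/\lambda_q(\Psi)\to 0$ by the lower bound $\lambda_q(\Psi)^2\gtrsim\|\psi\|_2^2\max(1,p^2/n^2)$, while the three-way minimum imposed on $s$ is exactly what drives $s\lambda$, $s^2/(\lambda K^4)$ and $s^2/(\lambda n)$ to zero (with $s/K^2$ and $s/\sqrt n$ dominated by these). I expect the main obstacle to be precisely this balancing act in selecting $\lambda_2$: one must verify that the floor $\lambda_2\gg\|\psi\|_2/\sqrt{1+\lambda_q^2(\Psi)}$ is simultaneously compatible with making both the $s\lambda$ term and the $1/\lambda$ terms small, and then keep careful track of the $\max(1,p/n)$ factors so that they are absorbed into the stated hypotheses on $\lambda_q(\Psi)$ rather than spoiling any single summand.
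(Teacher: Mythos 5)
Your proposal is correct and follows essentially the same route as the paper's own proof: reduce to showing $r_n=o_P(1)$ in \eqref{eq_BoundRnNew}, obtain $\tau_n\gtrsim 1$ from Theorems \ref{thm_BoundCC} and \ref{thm_BoundPopCC}, and then choose $\lambda_2$ case by case --- under \eqref{eq_CondConsistencyA} taking $\lambda_2\ll \sqrt{K\log p/n}$ so that $\lambda\asymp\sqrt{K\log p/n}$, and under \eqref{eq_CondConsistencyB} taking $\lambda\asymp\lambda_2$ just above the floor $\|\psi\|_2/\lambda_q(\Psi)$, where your slowly growing $\omega_n$ is exactly the paper's requirement that $\lambda_2\gg\|\psi\|_2/\lambda_q(\Psi)$ while $s\ll 1/\lambda_2$. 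The term-by-term verification of the confounding term and the sparsity terms then coincides with the paper's computations.
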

\begin{proof}
    Under the conditions of Corollary \ref{cor_Consistent}, it follows from Theorem \ref{thm_BoundCC} and \ref{thm_BoundPopCC} that $\tau_n\gtrsim 1$ with high probability. From \eqref{eq_BoundRnNew}, it follows that we need to show
    \begin{align}
        \lambda_q(\Psi)^2&\gg \frac{\|\psi\|_2^2 \max(1, p/n)}{\lambda}\label{eq_Consistency1}\\
        s&\ll\min\left(\frac{1}{\lambda}, K^2, \sqrt{n}, \sqrt{\lambda}K^2, \sqrt{\lambda n}\right)\label{eq_Consistency2}
    \end{align}

    From the definition \eqref{eq_Lambda} of $\lambda$, we have $\lambda = \lambda_1 + \lambda_2$ with $\lambda_1\asymp \sqrt{\frac{K\log p}{n}}$ and $\lambda_2$ chosen in a way such that $\lambda_2\gg \frac{\|\psi\|_2}{\sqrt{1+\lambda_q(\Psi)^2}}$.
    
    If \eqref{eq_CondConsistencyA} holds, we know that $\sqrt{\frac{K\log p}{n}}\gg \frac{\|\psi\|_2}{\sqrt{1+\lambda_q^2(\Psi)}}$ and hence we can find $\lambda_2$ such that $\lambda\asymp \sqrt{\frac{K\log p}{n}}$. From assertion (1) of Assumption \ref{ass_DimAndPsi}, it follows that $\sqrt{n/(K\log p)}\gg 1$. Hence, \eqref{eq_Consistency1} follows. Assertion (1) of Assumption \ref{ass_DimAndPsi} implies that $s\ll \sqrt{\frac{n}{K\log p}} = \frac{1}{\lambda}$ and $\lambda = \sqrt{K\log p/n}\ll 1$. Hence, also \eqref{eq_Consistency2} follows from \eqref{eq_CondConsistencyA}.

    If \eqref{eq_CondConsistencyB} holds, we choose $\lambda_2$ such that $\lambda_2\gg\frac{\|\psi\|_2}{\lambda_q(\Psi)}$ and $s \ll 1/\lambda_2$. It follows that $\lambda\asymp \lambda_2$. Note that the first equation in \eqref{eq_CondConsistencyB} implies that $\lambda_q(\Psi)/\|\psi\|_2\gtrsim 1$ and hence, \eqref{eq_Consistency2} follows from the second equation in \eqref{eq_CondConsistencyB}. On the other hand, the first equation in \eqref{eq_CondConsistencyB} implies that
    $$\lambda_q(\Psi)^2\gtrsim \lambda_q(\Psi)\|\psi\|_2\max(1, p/n)=\frac{\|\psi\|_2^2\max(1, p/n)}{\|\psi\|_2/\lambda_q(\Psi)}\gg \frac{\|\psi\|_2^2\max(1, p/n)}{\lambda}.$$
    This is precisely \eqref{eq_Consistency1}, which completes the proof.
\end{proof}
}

\section{Additional Simulations}\label{sec_AdditionalSimulations}
\subsection{Toeplitz Covariance Matrix for the Error $E$}\label{sec_ToeplitzCov}
\subsubsection{Varying $n$}
\Revision{In Figures \ref{fig_VarNToe08EqualCI} and \ref{fig_VarNToe08DecreasingCI}, we see the same simulation scenarios as in Section \ref{sec_VarN}, but with Toeplitz covariance structure for $E$, concretely $\Sigma_E=\textup{Toeplitz}(0.8)$, where the matrix $\textup{Toeplitz}(\rho)\in \mathbb R^p$ has entries $(\rho^{|i-j|})_{i,j=1,\ldots, p}$. The picture is completely the same as before in the sense that in the setting \textit{equal confounding influence}, the deconfounded method and the estimated factors method both outperform the naive method, whereas in the setting \textit{decreasing confounding influence} only the deconfounded method shows good performance.}
\begin{figure}
\centering
\includegraphics[width=0.91\textwidth]{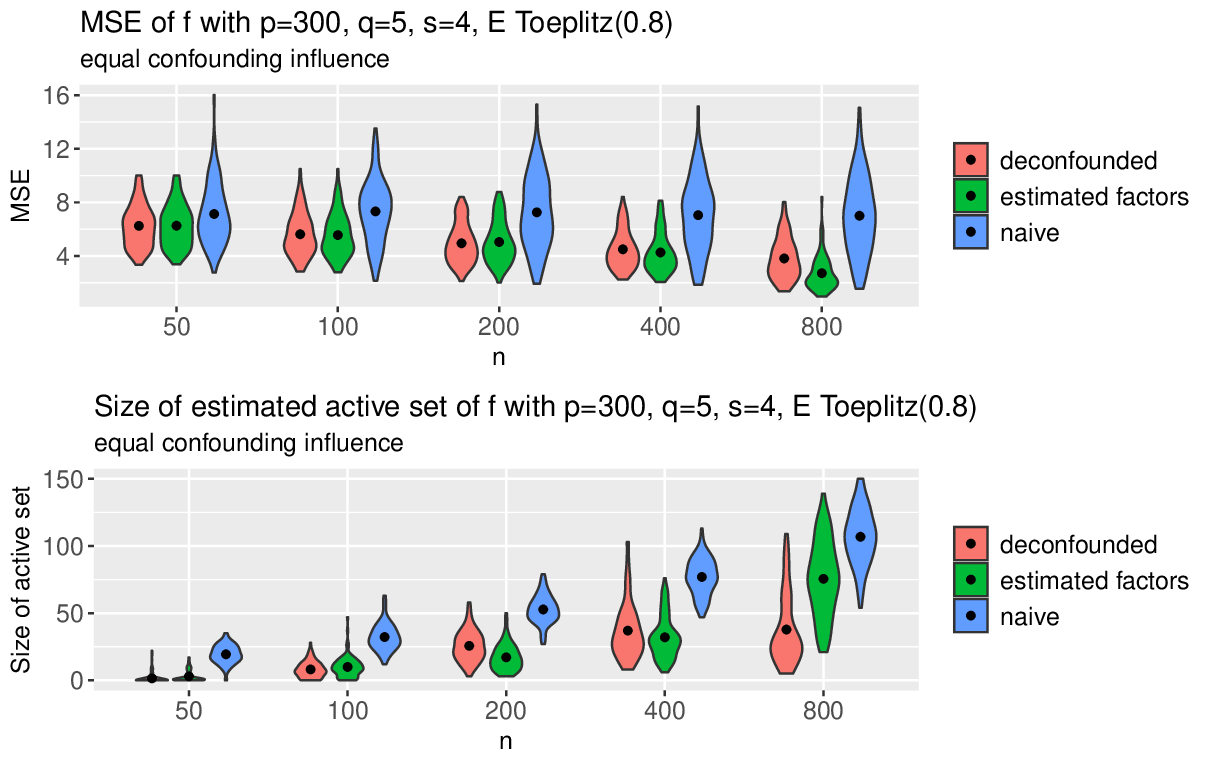}
\caption{MSE (top) and size of estimated active set (bottom) for $\Sigma_E=\textup{Toeplitz}(0.8)$ and varying $n$ in the setting \textit{equal confounding influence}. 
}
\label{fig_VarNToe08EqualCI}
\end{figure}

\begin{figure}
\centering
\includegraphics[width=0.91\textwidth]{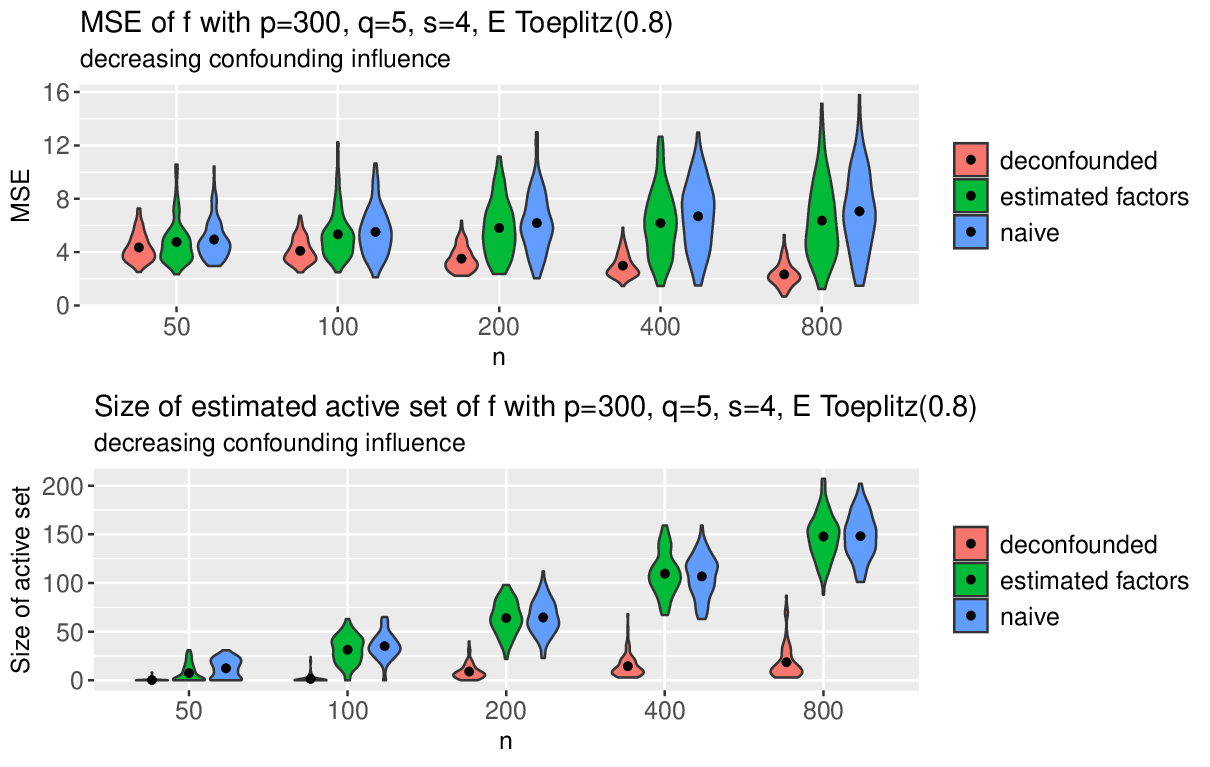}
\caption{MSE (top) and size of estimated active set (bottom) for $\Sigma_E=\textup{Toeplitz}(0.8)$ and varying $n$ in the setting \textit{decreasing confounding influence}. 
}
\label{fig_VarNToe08DecreasingCI}
\end{figure}

\subsubsection{Varying $p$}
\Revision{In Figures \ref{fig_VarPToe08EqualCI} and \ref{fig_VarPToe08DecreasingCI}, we see the same simulation scenarios as in Section \ref{sec_VarP}, but with Toeplitz covariance structure for $E$, concretely $\Sigma_E=\textup{Toeplitz}(0.8)$, where the matrix $\textup{Toeplitz}(\rho)\in \mathbb R^p$ has entries $(\rho^{|i-j|})_{i,j=1,\ldots, p}$. Again, the picture is the same as before.}

\begin{figure}
\centering
\includegraphics[width=0.91\textwidth]{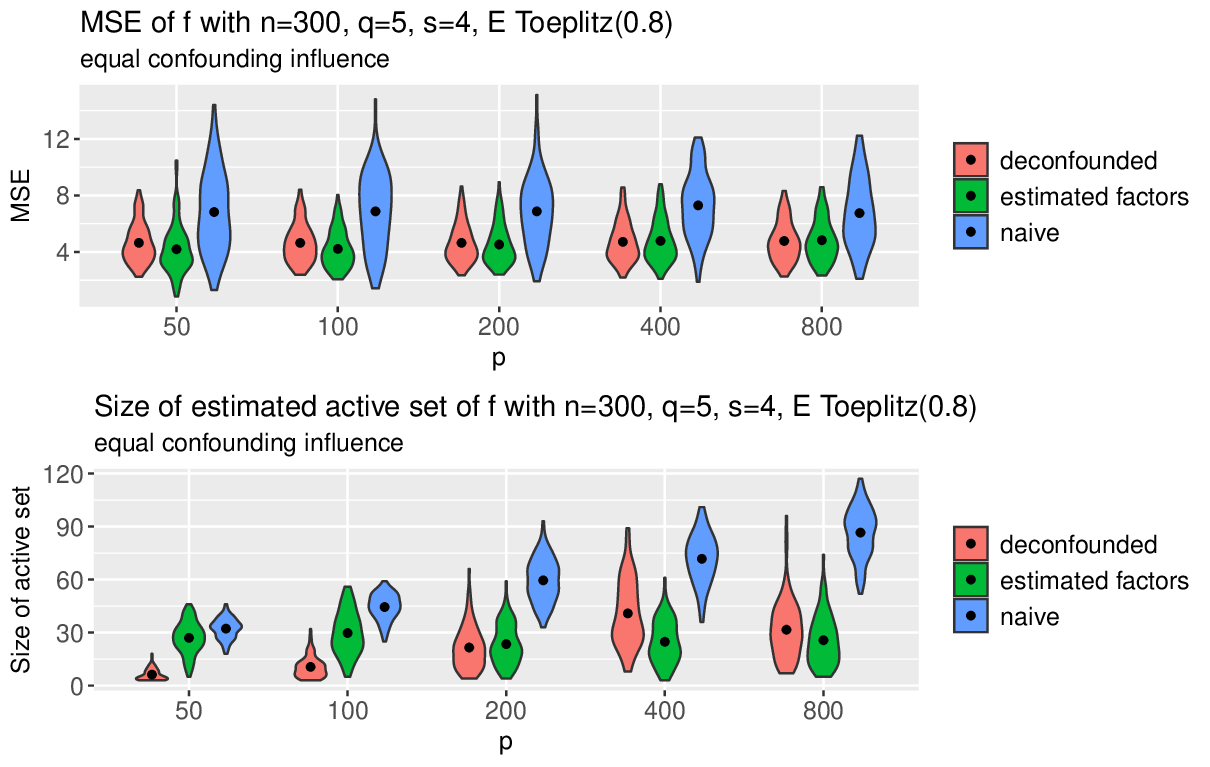}
\caption{MSE (top) and size of estimated active set (bottom) for $\Sigma_E=\textup{Toeplitz}(0.8)$ and varying $p$ in the setting \textit{equal confounding influence}. 
}
\label{fig_VarPToe08EqualCI}
\end{figure}

\begin{figure}
\centering
\includegraphics[width=0.91\textwidth]{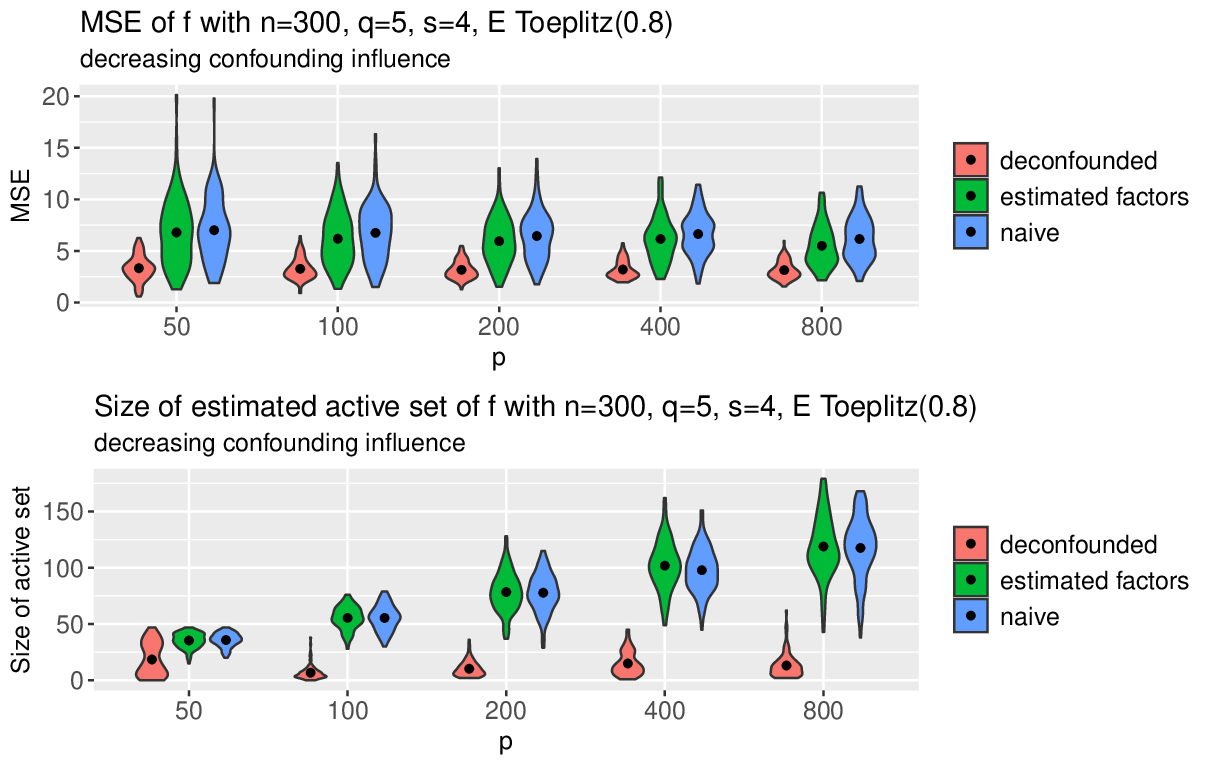}
\caption{MSE (top) and size of estimated active set (bottom) for $\Sigma_E=\textup{Toeplitz}(0.8)$ and varying $p$ in the setting \textit{decreasing confounding influence}. 
}
\label{fig_VarPToe08DecreasingCI}
\end{figure}

\subsection{Varying the Denseness of the Confounding}
\label{sec_VarCProp}
We investigate the effect of the denseness assumption by varying the proportion of covariates \Revision{$X_j$ affected by each confounder $H_l$.} For this, we fix $n=400$, $p=500$, $q=5$ and $\Sigma_E= I_{p}$. We keep the setting described in Section \ref{sec_SimResults} but the entries of the matrix $\Psi$ are now i.i.d. $\textup{Unif}[-1, 1]\cdot \textup{Bernoulli}(\mathsf{prop})$, where $\mathsf{prop}\in [0,1]$ is the proportion of covariates affected by each confounder. That is, a fraction of $1-\mathsf{prop}$ of the entries of $\Psi$ are set to $0$. For each value of $\mathsf{prop}$, we simulate 100 data sets. \Revision{The same plots as before can be found in Figures \ref{fig_VaryCPEqualCI} and \ref{fig_VaryCPDecreasingCI}}. When $\mathsf{prop} = 0$, this corresponds to $X=E$, that is, the confounding does not affect $X$. Hence, the contribution $\psi^T H$ is an error term independent of $X$. We observe that in this case, the deconfounded method performs slightly worse than the naive method, as there is still some signal removed by using a spectral transformation. On the other hand, we see from the plot that the deconfounded method outperforms the naive method even if the confounding only affects a small proportion of the covariates. \Revision{Comparing the deconfounded method to the estimated factors method, the picture is analogous to the previous simulations, i.e. in the setting \textit{equal confounding influence}, the estimated factors method performs slightly better in terms of MSE, but in the setting \textit{decreasing confounding influence}, the deconfounded method performs much better than the estimated factors method.} We conclude that deconfounding is useful also if the confounding is not very dense.

\begin{figure}
\centering
\includegraphics[width=0.91\textwidth]{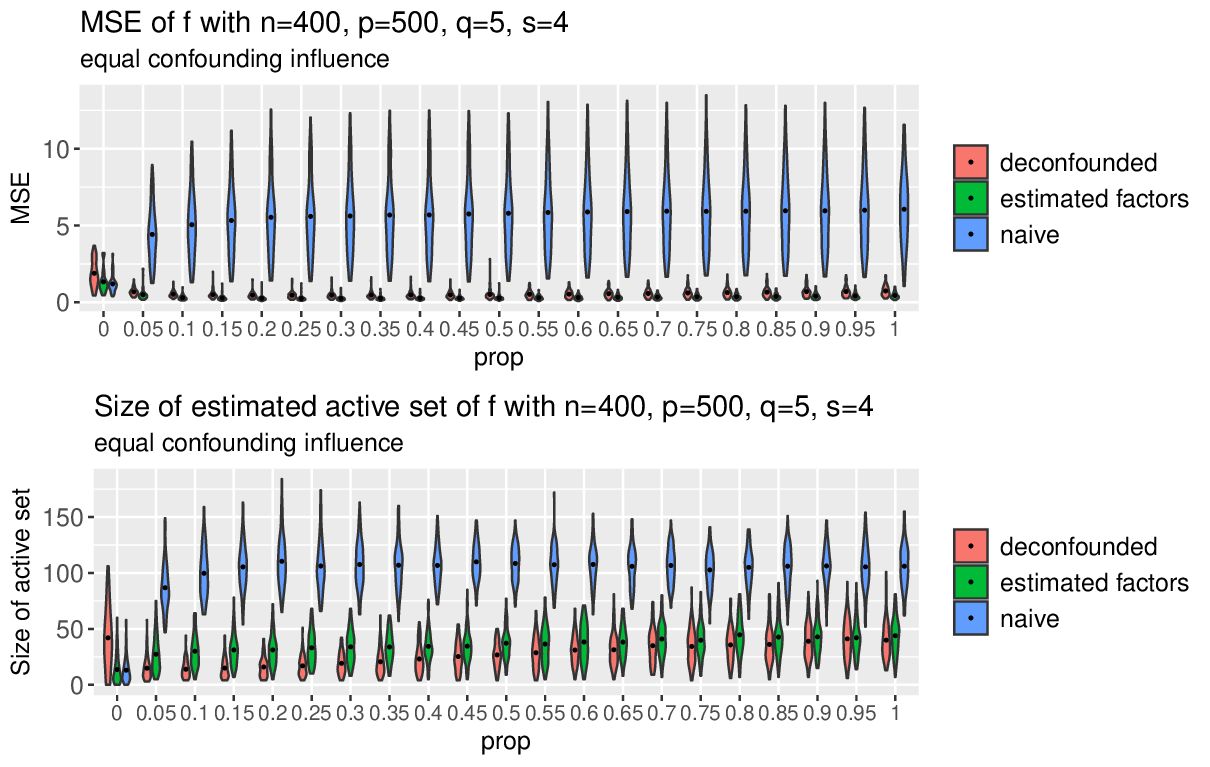}
\caption{MSE (top) and size of the estimated active set (bottom) for varying the denseness of the confounding in the setting \textit{equal confounding influence.}}
\label{fig_VaryCPEqualCI}
\end{figure}

\begin{figure}
\centering
\includegraphics[width=0.91\textwidth]{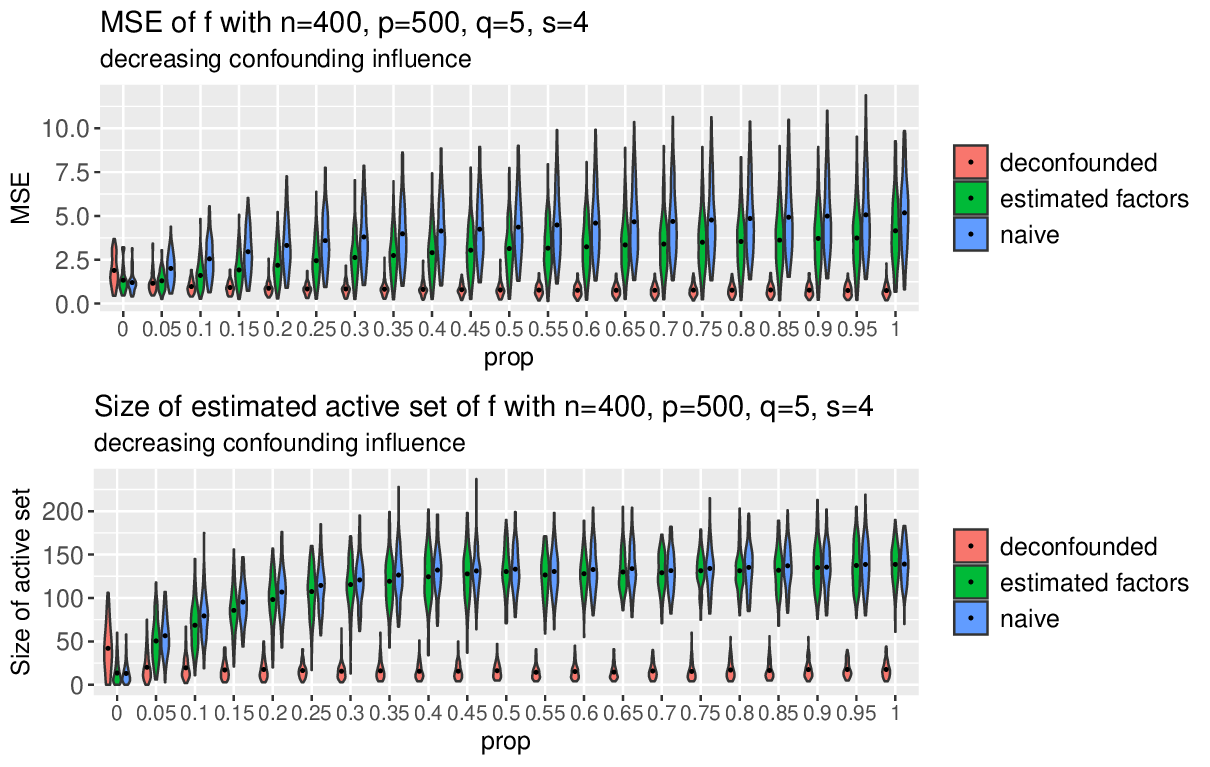}
\caption{MSE (top) and size of the estimated active set (bottom) for varying the denseness of the confounding in the setting \textit{decreasing confounding influence.}}
\label{fig_VaryCPDecreasingCI}
\end{figure}

\subsection{Nonlinear Confounding Effects}\label{sec_Nonlinear}
We now consider the following misspecified version of \eqref{eq_additive}, where the confounding acts potentially nonlinearly on both $X$ and $Y$.
$$Y=f^0(X)+\eta_\beta(H^T\psi)+e\text{ and }X_j=\eta_\alpha(\Psi_j^TH) + E_j,\, j=1,\ldots, p,$$
for some nonlinear functions $\eta_\alpha, \eta_\beta:\mathbb R\to\mathbb R$.
For our simulations, we use the family of functions $\eta_\alpha(t)=(1-\alpha)t+\alpha |t|, \, \alpha\in [0,1]$, that is $\eta_\alpha(t)$ interpolates between $t$ and $|t|$. Otherwise, we use the setup from Section \ref{sec_SimResults} \Revision{in both settings \textit{equal confounding influence} and \textit{decreasing confounding influence}}. As before, we fix $n=400$, $p=500$, $q=5$ and $\Sigma_E=I_{p}$. We vary $\alpha$ and $\beta$ on a grid of values in $[0,1]$ and simulate $100$ data sets for each setting and calculate the mean squared errors $\|\hat f-f^0\|_{L_2}^2$ for the deconfounded method, \Revision{ the naive method and the estimated factors method}.
\Revision{In Figure \ref{fig_NLRatioEqualCI}, we report the ratio of the average MSEs for the setting \textit{equal confounding influence}. The left panel shows the ratio of the average MSE of the deconfounded method and the average MSE of the naive method, where the averages are taken over the $100$ simulated data sets.} Values less than $1$ indicate a smaller average MSE for the deconfounded method, whereas values larger than $1$ indicate that the naive method has a smaller average MSE. We see that for a wide range of combinations of $\alpha$ and $\beta$, the results are in favor of the deconfounded method. We observe that deconfounding slightly worsens the performance of the algorithm only if $\alpha$ is close to $1$ and $\beta$ close to $0$ (i.e. the confounding acts very nonlinearly on $X$ and almost linearly on $Y$ or if $\alpha$ is close to $0$ and $\beta$ is close to $1$ (i.e. the confounding acts almost linearly on $X$ and very nonlinearly on $Y$). Intuitively, in such settings, the contribution of the confounding to $X$ is almost orthogonal to the contribution of the confounding to $Y$; hence, applying the trim transformation is not helpful in such settings. However, we see that for slightly to moderately nonlinear confounding effects in $X$ and $Y$, applying the deconfounded method always improves the performance compared to the naive method.
\Revision{The right panel of Figure \ref{fig_NLRatioEqualCI} shows the ratio of the average MSE of the deconfounded method and the average MSE of the estimated factors method. As in the previous simulations, we observe that the estimated factors method performs moderately better in terms of MSE than the deconfounded method, at least if both $\alpha$ and $\beta$ are close to $0$, i.e. the confounding is close to linear. This changes, when we consider the setting \textit{decreasing confounding influence} in Figure \ref{fig_NLRatioDecreasingCI}. We can see that one can gain a lot in terms of MSE by using the deconfounded method compared to both the naive and the estimated factors method. Only in the edge cases where either the confounding acts very nonlinearly either on $X$ or on $Y$, the naive method and the estimated factors method perform slightly better.}

\begin{figure}
\centering
\includegraphics[width=0.91\textwidth]{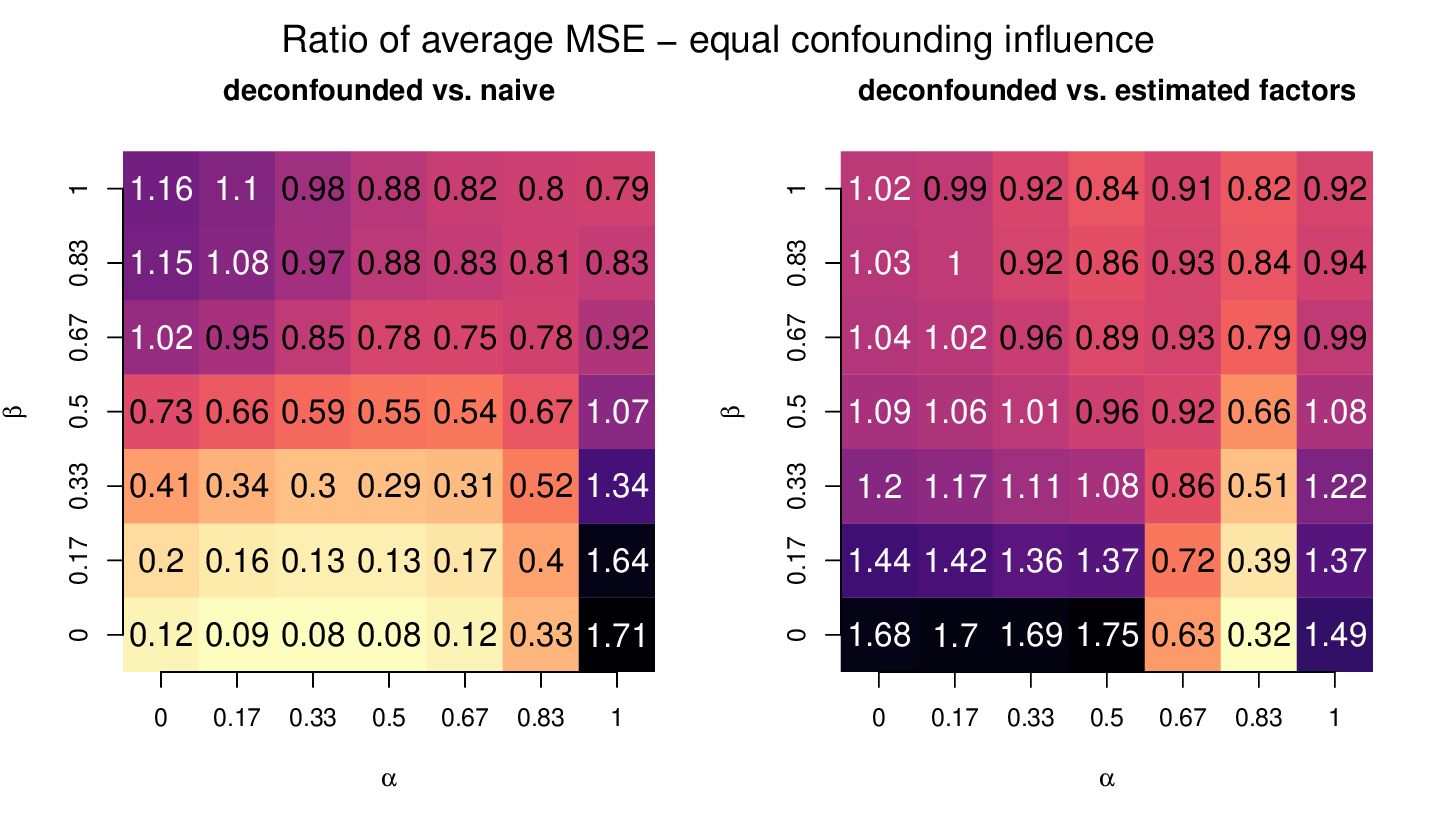}
\caption{Left: Ratio of average MSE for the deconfounded method and average MSE for the naive method. Right: Ratio of average MSE for the deconfounded method and average MSE for the estimated factors method. Values smaller than $1$ are in favor of the deconfounded method, whereas values larger than $1$ are in favor of the other method.}
\label{fig_NLRatioEqualCI}
\end{figure}

\begin{figure}
\centering
\includegraphics[width=0.91\textwidth]{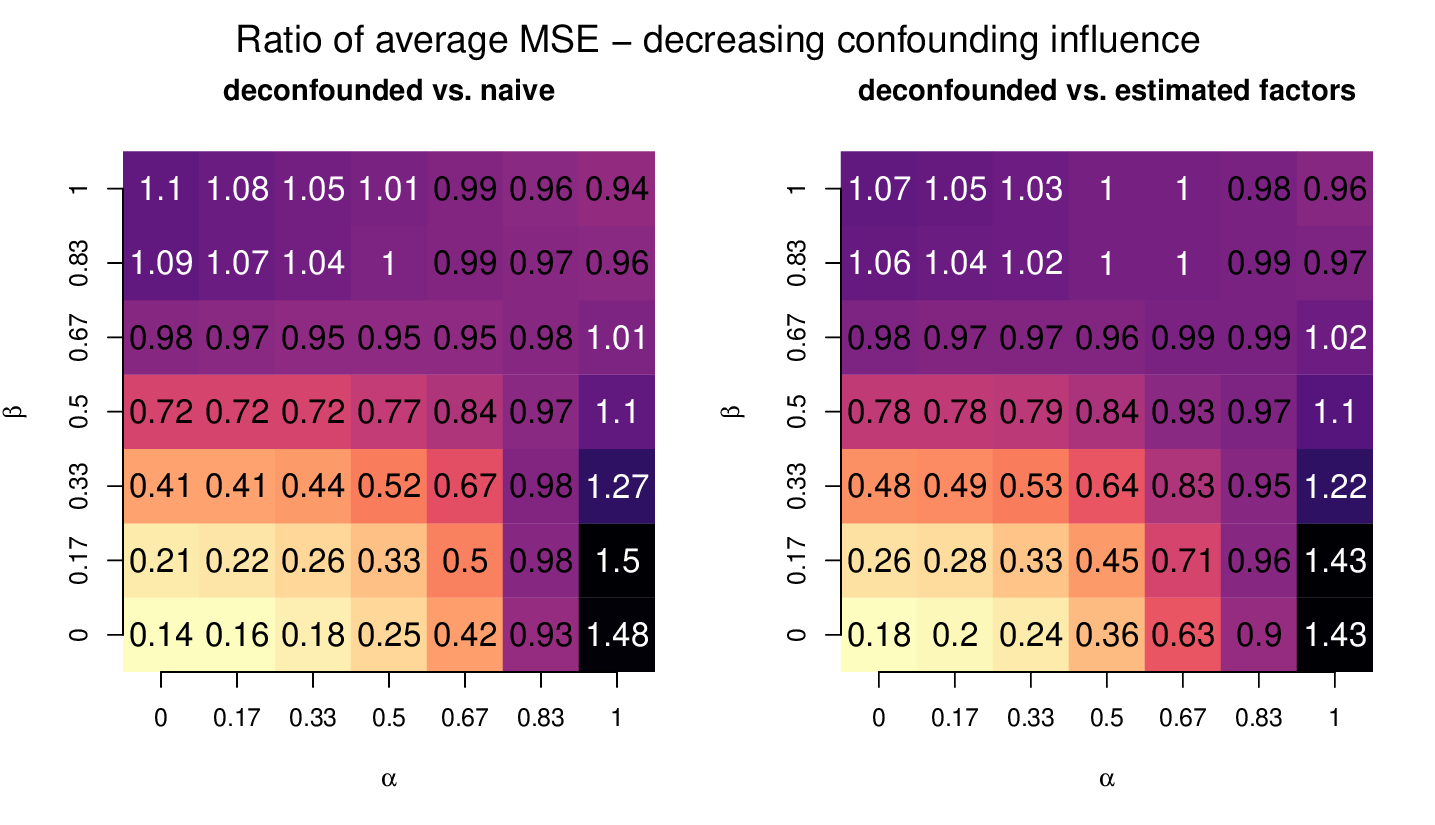}
\caption{Left: Ratio of average MSE for the deconfounded method and average MSE for the naive method. Right: Ratio of average MSE for the deconfounded method and average MSE for the estimated factors method. Values smaller than $1$ are in favor of the deconfounded method, whereas values larger than $1$ are in favor of the other method.}
\label{fig_NLRatioDecreasingCI}
\end{figure}

\bibliography{Literature_HDAM_OL}

\end{document}